\newcommand{\size}[1]{\left| #1 \right|}
\newcommand{\E}{\mathbb{E}}
\newcommand{\remove}[1]{}
\newcommand{\R}{\mathbb{R}}
\newcommand{\N}{\mathbb{N}}
\newcommand{\cC}{\mathcal{C}}
\newcommand{\cA}{\mathcal{A}}
\newcommand{\cQ}{\mathcal{Q}}
\newcommand{\Oh}{\mathcal{O}}
\newcommand{\tOh}{\widetilde{\mathcal{O}}}
\newcommand{\cF}{\mathcal{F}}
\newcommand{\cH}{\mathcal{H}}
\newcommand{\pr}{\mbox{Pr}}
\newcommand{\eps}{\epsilon}
\newcommand{\vareps}{\varepsilon}
\DeclareMathOperator*{\argmin}{arg\,min}
\newcommand{\complain}[1]{\textcolor{red}{#1}}
\theoremstyle{plain}
\newtheorem{theo}{Theorem}[section]
\newtheorem{lem}[theo]{Lemma}
\newtheorem{pro}[theo]{Proposition}
\newtheorem{cl}[theo]{Claim}
\theoremstyle{definition}
\newtheorem{defi}[theo]{Definition}
\newtheorem{rem}[theo]{Remark}
\newtheorem{obs}[theo]{Observation}
\renewcommand{\eps}{\vareps} 
\newcommand{\kcen}{$k$-{center}\xspace}
\newcommand{\mpc}{\textsc{MPC}\xspace}
\newcommand{\specifiedmpc}{\mpc}
\newcommand{\sas}{{\sc Sample-And-Solve}\xspace}
\newcommand{\greedy}{{\sc Greedy}\xspace}
\newcommand{\ukc}{{\sc Uniform}-{\sc Center}\xspace}
\newcommand{\malg}{{\sc Ext}-$k$-{\sc Center}\xspace}
\newcommand{\cost}{\mbox{{\sc COST}}\xspace}
\renewcommand{\cost}{\textsc{cost}\xspace}
\newcommand{\opt}{\mbox{{\sc OPT}}\xspace}
\renewcommand{\opt}{\textsc{opt}\xspace}
\newcommand{\nns}{{\sc Nearest-Hub-Search}\xspace}
\newcommand{\close}{\mbox{close}\xspace}
\newcommand{\lspace}{\ensuremath{\mathcal{S}}\xspace}
\newcommand{\gspace}{\ensuremath{\mathcal{T}}\xspace}
\newcommand{\machines}{\ensuremath{\mathcal{M}}\xspace}
\renewcommand{\lspace}{\ensuremath{\textcolor[rgb]{0.50,0.00,0.00}{\mathfrak{s}}}\xspace}
\renewcommand{\gspace}{\ensuremath{\textcolor[rgb]{0.50,0.00,0.00}{\mathfrak{g}}}\xspace}
\renewcommand{\machines}{\ensuremath{\textcolor[rgb]{0.50,0.00,0.00}{\mathfrak{m}}}\xspace}
\renewcommand{\lspace}{\ensuremath{{\mathrm{s}}}\xspace}
\renewcommand{\gspace}{\ensuremath{{\mathrm{g}}}\xspace}
\renewcommand{\machines}{\ensuremath{{\mathrm{m}}}\xspace}
\newcommand{\range}{\ensuremath{\text{range}}}
\newcommand{\etal}{et al.\ }
\newcommand{\junk}[1]{}
\newcommand{\edit}[1]{\textcolor{black}{#1}} 
\title{\textbf{$\Oh(\log ^{*}n)$-approximation for $k$-Center in Sublinear MPC}
}
\title{{\bf On Parallel $k$-Center Clustering}}
{
\author{Sam Coy\footnote{University of Warwick, UK} \and  Artur Czumaj\footnotemark[1] \and Gopinath Mishra\footnotemark[1]}
\author{\textbf{Sam Coy}\thanks{Research supported in part by the Centre for Discrete Mathematics and its Applications (DIMAP),
by an EPSRC studentship, and 
by the Simons Foundation Award No 663281 granted to the Institute of Mathematics of the Polish Academy of Sciences for the years 2021-2023.
}
\hspace{4mm} 
\textbf{Artur Czumaj}\thanks{Research supported in part by the Centre for Discrete Mathematics and its Applications (DIMAP), by EPSRC award EP/V01305X/1, by a Weizmann-UK Making Connections Grant, by an IBM Award, and by the Simons Foundation Award No. 663281 granted to the Institute of Mathematics of the Polish Academy of Sciences for the years 2021--2023.}
\hspace{4mm} 
\textbf{Gopinath Mishra}\thanks{Research supported in part by the Centre for Discrete Mathematics and its Applications (DIMAP), by EPSRC award EP/V01305X/1, and by the Simons Foundation Award No 663281 granted to the Institute of Mathematics of the Polish Academy of Sciences for the years 2021-2023.}
\\[0.05in]
Department of Computer Science \\
Centre for Discrete Mathematics and its Applications 
\\
University of Warwick
}

\author{\textbf{Sam Coy}\thanks{E-mail: S.Coy@warwick.ac.uk. Department of Computer Science, University of Warwick, UK.
Research supported in part by the Centre for Discrete Mathematics and its Applications (DIMAP),
by an EPSRC studentship, and 
by the Simons Foundation Award No 663281 granted to the Institute of Mathematics of the Polish Academy of Sciences for the years 2021-2023.}
\hspace{4mm}
\textbf{Artur Czumaj}\thanks{E-mail: A.Czumaj@warwick.ac.uk. Department of Computer Science and Centre for Discrete Mathematics and its Applications, University of Warwick, UK. Research supported in part by the Centre for Discrete Mathematics and its Applications (DIMAP), by EPSRC award EP/V01305X/1, by a Weizmann-UK Making Connections Grant, by an IBM Award, and by the Simons Foundation Award No. 663281 granted to the Institute of Mathematics of the Polish Academy of Sciences for the years 2021--2023.}
\hspace{4mm}
\textbf{Gopinath Mishra}\thanks{E-mail: Gopinath@nus.edu.sg. School of Computer Science, National University of Singapore. Research supported in part by the Centre for Discrete Mathematics and its Applications (DIMAP), by EPSRC award EP/V01305X/1, and by the Simons Foundation Award No 663281 granted to the Institute of Mathematics of the Polish Academy of Sciences for the years 2021-2023.}\\
University of Warwick
}

\author{\textbf{Sam Coy}\thanks{E-mail: S.Coy@warwick.ac.uk. Department of Computer Science, University of Warwick, UK. Research supported in part by the Centre for Discrete Mathematics and its Applications (DIMAP), by an EPSRC studentship, and by the Simons Foundation Award No. 663281 granted to the Institute of Mathematics of the Polish Academy of Sciences for the years 2021--2023.}\\
University of Warwick
\and
\textbf{Artur Czumaj}\thanks{E-mail: A.Czumaj@warwick.ac.uk. Department of Computer Science and Centre for Discrete Mathematics and its Applications, University of Warwick, UK. Research supported in part by the Centre for Discrete Mathematics and its Applications (DIMAP), by EPSRC award EP/V01305X/1, by a Weizmann-UK Making Connections Grant, by an IBM Award, and by the Simons Foundation Award No. 663281 granted to the Institute of Mathematics of the Polish Academy of Sciences for the years 2021--2023.}\\
University of Warwick
\and
\textbf{Gopinath Mishra}\thanks{E-mail: Gopinath@imsc.res.in. Theoretical Computer Science Group, The Institute of Mathematical Sciences, HBNI, Chennai, India.  Research was supported in part  by EPSRC award EP/V01305X/1, and by the Simons Foundation Award No 663281 granted to the Institute of Mathematics of the Polish Academy of Sciences for the years 2021-2023.} \\
The Institute of Mathematical Sciences
}
}
\date{}
\def\zeit{\number\shorthour:\ifnum\shortminute<10 0\number\shortminute
\else\number\shortminute\fi}
\begin{document}

\maketitle

    \begin{abstract}
We consider the classic $k$-center problem {in the constant dimensional Euclidean space} under a parallel setting, on the low-local-space Massively Parallel Computation (MPC) model, with local space per machine of $\mathcal{O}(n^{\delta})$, where $\delta \in (0,1)$ is an arbitrary constant. As a central clustering problem, the $k$-center problem has been studied extensively. Still, until very recently, all parallel MPC algorithms have been requiring $\Omega(k)$ or even $\Omega(k n^{\delta})$ local space per machine. While this setting covers the case of small values of $k$, for a large number of clusters these algorithms require large local memory, making them poorly scalable. The case of large $k$, $k \ge \Omega(n^{\delta})$, has been considered recently for the low-local-space MPC model by Bateni et al.\ (2021), who gave an $\mathcal{O}(\log \log n)$-round \mpc algorithm that produces $k(1+o(1))$ centers whose cost has multiplicative approximation of $\mathcal{O}(\log\log\log n)$. In this paper we extend the algorithm of Bateni et al. and design a low-local-space MPC algorithm that in $\mathcal{O}(\log\log n)$ rounds returns a clustering with $k(1+o(1))$ clusters that is an $\mathcal{O}(\log^*n)$-approximation for $k$-center.
\end{abstract}

\section{Introduction}
\label{sec:intro}

Clustering large data is a fundamental primitive extensively studied because of its numerous applications in a variety of areas of computer science and data science. It is a central type of problem in modern data analysis, including the fields of data mining ~\cite{han2001data}, pattern recognition \cite{jain1999data}, machine learning \cite{xu2005survey}, networking and social networks \cite{girvan2002community}, and bioinformatics \cite{eisen1998cluster}. 
In a typical clustering problem, the goal is to partition the input data into subsets (called clusters) such that the points assigned to the same cluster are ``similar'' to one another, and data points assigned to different clusters are ``dissimilar''. The most extensively studied clustering problems are $k$-means, $k$-median, $k$-center, various notions of hierarchical clustering, and also variants of these problems with some additional constraints (e.g., fairness or balance).~\cite{lloyd1982least, arthur2007kmeans++, gonzalez1985kcenter, aggarwal2013data, chierichetti2017fair, bachem2018scalable}.


While originally the clustering problems have been studied in the context of classical sequential computation,
most recently a large amount of research has been devoted to the non-sequential computational settings such as distributed and parallel computing, mainly because these are the only settings capable of performing computations in a reasonable time on large inputs, and because data is frequently collected on different sites and clustering needs to be performed in a distributed manner with low communication.

In this paper, we consider one of the fundamental  
clustering problems, the \emph{$k$-center} problem {in the Euclidean space}, on the \emph{Massively Parallel Computation (MPC)} model. \mpc is a modern theoretical model of parallel computation, inspired by frameworks such as MapReduce \cite{mapreduce}, Hadoop \cite{hadoop}, Dryad \cite{dryad}, and Spark \cite{spark}. Introduced just over a decade ago by Karloff \etal \cite{mpc_introduced} (and later refined, e.g., in \cite{ANOY14,BKS17,CLMMOS18,GSZ11}), the model has been the subject of an increasing quantity of fundamental research in recent years, becoming nowadays the standard theoretical parallel model of algorithmic study.

\mpc is a parallel system with \machines \emph{machines}, each with \lspace words of \emph{local memory}. (We also consider the \emph{global space} \gspace, which is the total space used across all machines, $\gspace = \lspace \cdot \machines$.) Computation takes place in synchronous rounds: in each round, each machine may perform arbitrary computation on its local memory and then exchange messages with other machines. {While the local computation at each machine is not explicitly accounted for, it is typically required to be polynomially bounded.} Each message is sent to a single machine specified by the machine sending the message. Machines must send and receive at most \lspace words each round. The messages are processed by recipients in the next round. At the end of the computation, machines collectively output the solution. 
The goal is to design an \mpc algorithm that solves a given task in as few rounds as possible.

For an input of size $n$, the local memory $s$ available to each machine  should be sublinear in $n$, since if $s \ge n$, a single machine could solve the problem without any communication. Moreover, the total memory across all machines must be at least $n$ to store the input, and ideally should not exceed this by much. In this paper, we focus on the \emph{low-local-space} \mpc setting, where the local space of each machine is strongly sublinear in the input size, i.e., $\lspace = \Oh(n^{\delta})$ for some arbitrarily constant $\delta \in (0,1)$. {The low-local-space regime is especially attractive due to its scalability in large-scale distributed systems, where individual machines typically have limited memory compared to the overall dataset size.} At the same time, this setting is particularly challenging in that it requires extensive inter-machine communication to solve clustering problems for the input data scattered over many machines.



In recent years we have seen a number of very efficient, often constant-time, parallel clustering algorithms that have been relying on a combination of a \emph{core-set} and a ``reduce-and-merge'' approach \cite{BadoiuHI02,HarPeledMazumdar04,Bahmani12,Indyk14}, where one gradually filters the data set by typically reducing its size on every machine to $\tOh(k)$ until all the data can be stored on a single machine, at which point the problem is solved locally. Observe that this approach has an inherent bottleneck that requires that any machine must be able to store $\Omega(k)$ data points. Intuitively, this follows from the fact that if a machine sees $k$ data points that are all very far away from each other, it needs to keep track of all $k$ of them, for otherwise it might miss all the information about one of the clusters, which in turn could lead to a large miscalculation of the objective value. Similar arguments could be also used to argue that each machine needs to communicate $\Omega(k)$ points to the others. Similar arguments suggest that each machine must communicate $\Omega(k)$ points (see \cite{CSWZ16} for a formalization for $k$-center, $k$-median, and $k$-means under a worst-case partition in the coordinator model. However, this lower bound does not directly extend to \mpc, where inputs are typically randomly partitioned; moreover, with $\Oh(1)$ rounds, an adversarial partition can be transformed into a random one in \mpc).
Because of that, most of the earlier clustering \mpc algorithms, especially those working in a constant number of rounds (see, e.g., \cite{coreset_kcenter1,coreset_kcenter2}), require $\Omega(k)$ or even $\Omega(k) \cdot n^{\Omega(1)}$ local space. Therefore in the setting considered in this paper, of \mpc with local space per machine of $\lspace = \mathcal{O}(n^{\delta})$, the approach described above cannot be applied when the number of clusters is large, when $k = \omega(\lspace)$. This naturally leads to the main challenge in the design of clustering algorithms for \mpc with low-local-space: \emph{how to efficiently partition the data into $k$ good quality clusters on an \mpc with local space $\lspace \ll k$}. For example, when $k \le \lspace$, a single machine can store $\Omega(k)$ candidate centers or summaries, enabling standard coreset-based or sampling approaches to work. In contrast, when $k \gg \lspace$, no machine can even store one representative per cluster, making such approaches infeasible and necessitating fundamentally different techniques.



\medskip

In this paper, we focus on the \emph{$k$-center clustering problem} in the Euclidean space, a standard, widely studied, and widely used formulation of metric clustering. The problem is, given a set of $n$ input points, to find a subset of size $k$ of these points (called \emph{centers}) such that that maximum distance of a point to its nearest center is minimized. Specifically, in this work, we focus on the case where $k \gg \lspace$ and hence, when $k$ is quite large relative to $s$: one can think of the $k$-center clustering problem as ``compressing'' the input set of $n$ points into $k$ points.
Very recently, this problem has been addressed in the Euclidean space $\R^d$ for constant $d$ by Bateni \etal~\cite{bateni-kcenter}, who showed  one can design an $\Oh(\log\log n)$-round \mpc algorithm, with local space $\lspace=\Oh(n^{\delta})$ and global space  $\gspace=\tOh(n^{1+\delta})$,\footnote{$\tOh(f)$ hides a polynomial factor in $\log f$.} that returns an $\Oh(\log\log\log n)$-approximate solution with $k+o(k)$ centers. Our main result is an improved bound in the \mpc model:


\begin{theo}[\textbf{Main result}]
\label{theo:inf}\label{thm:main-informal}
There exists an $\Oh(\log\log n)$ rounds  \mpc algorithm that computes an $\Oh(\log^* n)$ approximate solution to the $k$-centers problem using $k(1+o(1))$ centers. Moreover, the \mpc algorithm has local space $\lspace = \Oh(n^\delta)$ and global space $\gspace = \tOh(n^{1+\rho})$ for any constant $\rho \in (0,\delta]$. The $n$ input points are in $\R^d$ for some constant $d$. Our algorithm succeeds with high probability. Here the constant inside $\Oh(\log^* n)$ depends on $\rho$ and that in $\Oh(\log \log n)$ depends on $\delta$.

\begin{rem}
   Our results, together with those of \cite{bateni-kcenter}, primarily focus on designing fully scalable \mpc algorithms for \kcen in constant-dimensional Euclidean space, particularly when $k > \lspace$. To simplify the presentation, we follow the approach of \cite{bateni-kcenter} and do not explicitly present the dependence of $\rho$, $\delta$, and $d$ in the approximation factor, the number of rounds, or the global space. 
\end{rem}
%
\end{theo}

{The algorithmic framework is based on a repeated application of \emph{locally sensitive sampling} introduced by Bateni \etal~\cite{bateni-kcenter}: sampling a set of ``hub'' points by sampling each of the (remaining) points independently with a suitable probability. Then we assign all other points to a nearby hub using \emph{locally sensitive hashing}, and then adding new hubs to such that the distance of each point to the nearest hub is bounded by a parameter.
We improve the approximation factor of $\Oh(\log \log \log n)$ due to Bateni \etal~\cite{bateni-kcenter} by a careful examination of the progress of clusters in some fixed optimal clustering over the course of the algorithm.
 We achieve the improvement in the approximation factor with a deeper understanding of random process of sampling and assigning, with an optimal clustering, that is, how the set of points in an cluster of an optimal clustering behaves when we sample, select hubs, and assign points to hubs. Ideally, one wants to have exactly one point from each cluster. We show that, with suitable probability, the size of the clusters decreases at a desirable rate over iterations. However, clusters may not satisfy the  size constraint with high probability, and a key challenge in our analysis is to carefully bound the sizes of those clusters that do not meet this requirement.} Additionally, we provide a more flexible guarantee on the global space, providing an accuracy parameter $\rho \in (0,\delta]$, which can be set to reduce global space used at the expense of a larger approximation ratio (or vice versa).
 


A preliminary version of this work appeared in the Proceedings of SPAA 2023 \cite{CoyCM23}. The current article includes the complete proofs of several claims that were stated without proof in the conference version due to space constraints. Beyond filling in these details, the presentation has been revised for clarity and completeness, making this version self-contained.
\subsection{Related work}
\label{subsec:related-work}

There has been a large amount of work on various variants of the clustering problems (see, e.g., \cite{clustering_survey} for a survey of research until 2005), including some extensive study of the $k$-center clustering problem. The $k$-center problem is well known to be NP-hard and simple algorithms are known to achieve a 2-approximation \cite{DF85,gonzalez1985kcenter,HS85}; this approximation ratio is tight unless $\mathrm{P} = \mathrm{NP}$ \cite{HN79}.

The study of clustering in the context of parallel computing is extremely well-motivated: as the size of typical data sets continues to increase, it becomes infeasible to store input data on a single machine, let alone iterate over it many times (as greedy sequential algorithms require (see, e.g., \cite{gonzalez1985kcenter})). It comes therefore as no surprise that there has been a considerable amount of work on $k$-center clustering algorithms in \mpc. In particular, several constant-round, constant-approximation algorithms in the \mpc setting were given recently for general metric $k$-center clustering, see, e.g., \cite{coreset_kcenter1,coreset_kcenter2,coreset_kcenter3}. Much of this work used \emph{coresets} or similar techniques as a means of approximating the structure of the underlying data, naturally implying a requirement that the local space satisfies $\lspace = \Omega(k)$ per machine and global space is $\gspace = \Omega(nk)$ or $\Omega(n^{\epsilon} k^2)$ \cite{coreset_kcenter1,coreset_kcenter2,coreset_kcenter3}. Specifically, Ene \etal \cite{coreset_kcenter1} gave a $\Oh(1)$ round $10$-approximation \mpc algorithm that uses local space $\lspace = \Oh(k^2 n^{\Theta(1)})$, Malkomes \etal \cite{coreset_kcenter2} obtained a $2$-round $4$-approxi\-ma\-tion \mpc algorithm with local space $\lspace = \Omega(\sqrt{nk})$, and Ceccarello \etal \cite{coreset_kcenter3} obtained a $2$-round $(2+\eps)$-approximation \mpc algorithm that uses local space $\lspace = \Oh (\sqrt{nk})$ for the problem in metric spaces with constant doubling dimension. Very recently, Haqi and Zadeh~\cite{HaqiZ23} gave a $\Oh(1)$-round $(2+\eps)$-approximation \mpc algorithm that uses local space $\lspace = \Oh (\sqrt{nk})$ for the problem in constant dimensional Euclidean space. As mentioned earlier, these algorithms are not scalable if $k$ is large relative to $n$ (for example, when $k = n^{1/3}$), making the case of large $k$ particularly challenging. Furthermore, as argued by Bateni \etal \cite{bateni-kcenter}, the case of large $k$ appears naturally in some applications of $k$ clustering, including label propagation used in semi-supervised learning, or same-meaning query clustering for online advertisement or document search \cite{WLFXY09}. Unfortunately, we do not know of any $O(1)$-round, $O(1)$-approximation \mpc algorithm that would use local space $\lspace = o(k)$.

In order to address the case of large $k$, Bateni \etal \cite{bateni-kcenter} considered a relaxed version of $k$-center clustering for low dimensional Euclidean spaces with constant dimension. The goal of that work was to design a scalable \mpc algorithm for the $k$-center clustering problem with a \emph{sublogarithmic number of rounds} of computation, \emph{sublinear space per machine}, and small global space. Bateni \etal \cite{bateni-kcenter} showed that in $\Oh(\log\log n)$ rounds on an \mpc with $\lspace = \Oh(n^\delta)$, one can compute an $\Oh(\log\log\log n)$-approximate solution to constant-dimension Euclidean $k$-center with $k(1 + o(1))$ centers. Their algorithm uses $\tOh(n^{1 + \delta} \cdot \log \Delta)$ global space. Bateni \etal \cite{bateni-kcenter} complemented their analysis by some empirical study to demonstrate that the designed algorithm performs well in practice.

Finally, in the related PRAM model of parallel computation Blelloch and Tangwonsan gave a $2$-approximation algorithm for \kcen \cite{pram-k-center}. However, their algorithm requires $\Omega(n^2)$ processors and it is therefore difficult to translate the approach to our setting.


\subsection{Technical contributions}
\label{sec:diff-intro}

Our main result in \Cref{thm:main-informal} is an extension of the approach developed in Bateni \etal \cite{bateni-kcenter} that significantly improves the quality of the approximation guarantee. To present these two results in the right context, we will briefly describe the main differences between these two works at a high level.

The approach of Bateni \etal \cite{bateni-kcenter} starts with the entire point set $P$ as a set of potential centers (solution), and refines it to
$P = P_0 \supseteq \dots \supseteq  P_{\tau}$, until $\size{P_\tau} = k+o(k)$.
Note that $P_i$ is the set of leftover points  after round $i$, that is, set of potential centers after round $i$. The final set $P_\tau$ is reported as the output. It is not difficult to see that if we consider an optimal clustering $\cC^*$ for $P$, that is, the clustering induced by an optimal solution to the $k$-center problem on $P$ where each point is assigned to its closest center with ties broken arbitrarily, then for every cluster $C \in \cC^*$, the number of potential centers (or leftover points) in $C$ decreases over the rounds. In particular, for every round $i$, we have $|P_{i+1} \cap C| \le |P_i \cap C|$.
Let us define a cluster $C \in \cC^*$ to be \emph{irreducible from round $i$}, if $i$ is the minimum index such that $\size{C \cap P_i} \leq 1$.
Two central properties of the cluster refinement due to Bateni \etal \cite{bateni-kcenter} are that after $\Oh(\log\log n)$ rounds the number of leftover points in each $C \in \cC^*$ reduces to $\tOh(\log n)$, and that after that, the total number of the leftover points in the
reducible
clusters in $\cC^*$ reduces after each round by a constant factor, implying that another $\Oh(\log\log n)$ rounds suffice to ensure that the desired number of centers.  Since the total number of clusters in $\cC^*$ is at most $k$, the number of centers in the irreducible clusters is at most $k$. Moreover, the number of leftover points in the reducible clusters is at most $k \cdot \tOh(\log n)$ after the first $O(\log \log n)$ rounds. In the next $O(\log \log n)$ rounds, this contributes at most $k$ centers from the irreducible clusters and an additional $o(k)$ centers from the reducible clusters.
Hence setting the number of rounds as $\tau = \Oh(\log\log n)$ is sufficient. This is then complemented by the analysis of the quality of the refinements which guarantees that each new refinement adds an additive term of $\Oh(\opt)$ to the cost of the solution such that each point in $P_i$ has a corresponding point at a distance at most $\Oh(\opt)$ in $P_{i+1}$, giving in total a double logarithmic approximation ratio. They also provided a sketch of the analysis to obtain an approximation ratio of $\Oh(\log\log\log n)$, which we believe can be proved.

In our paper we substantially improve the approximation factor to $\Oh(\log^* n)$ by extending the framework in the following sense. We show that, after $\Oh(\log\log n)$ rounds, the size of each cluster in $\cC^*$ reduces to $\tOh(\log n)$ such that the refinement in each round adds an additive error of $\Oh(\frac{\opt}{\log\log n})$ to the cost of the solution. Then, we show that after additional $\Oh(\log\log\log n)$ rounds, the sizes of \emph{almost all} (but not all) clusters in $\cC^*$ reduce to a $\tOh(\log\log n)$ such that the refinement in each round adds an additive error of $\Oh(\frac{\opt}{\log\log\log n})$ to the cost of the solution. Next, we show that after another $\Oh(\log\log\log\log n)$ rounds, the sizes of \emph{almost all} clusters in $\cC^*$ reduce to a $\tOh(\log\log\log n)$ such that the refinement in each round adds $\Oh(\frac{\opt}{\log\log\log\log n})$ to the cost of the solution, and so on. We continue this until the sizes of almost all clusters in $\cC^*$ reduce to $\Oh(\log^*n)$. Observe that the total number of rounds taken so far is bounded by $\Oh(\log\log n)$, and we can argue that the current solution has an approximation ratio of $\Oh(\log^*n)$. An important challenge in analyzing this approach is that \emph{not all clusters satisfy these size guarantees with high probability}. Indeed, we cannot obtain a high probability guarantee by cluster refinement relying on random sampling of the already small clusters; we can ensure only that \emph{most of the clusters are getting small}.

Let $\cC^{**} \subseteq \cC^*$ be the clusters that satisfy the reduction property as discussed above, that is, such that the number of points in each cluster of $\cC^{**}$ is bounded by $\Oh(\log^* n)$ currently. We argue that the total number of points in the reducible clusters in $\cC^{**}$ reduces by a constant factor after each successive round, adding an additive error of $\Oh(\opt)$ each time. This implies that another $\Oh(\log (\log^* n))$ rounds are good enough to ensure that we have the desired number of centers at the end. To bound the total number of centers, we also need to show that the number of centers in clusters in $\cC^{*}\setminus \cC^{**}$ (that is, the set of clusters which fail to adhere to a size guarantee at some point during the algorithm) is bounded. Note that we cannot \emph{track} which clusters succeed or fail (doing so would require us to know an optimal clustering), and so we use $\cC^*$ and $\cC^{**}$ only for the analysis. In summary, the approach sketched above will reduce the number of clusters to $k(1+o(1))$, and will ensure that the total number of rounds spent by our algorithm is $\Oh(\log\log n)$ and the approximation ratio of our solution is $\Oh(\log^*n)$. A more detailed overview is in \Cref{sec:over}.

Our approach, like that of Bateni \etal \cite{bateni-kcenter}, relies heavily on locality-sensitive hashing (LSH). 
When constructing $P_{i+1} \subseteq P_i$, the algorithm samples each point in $P_i$ independently with an appropriate probability. 
Each sampled point is added to a hub. 
Then, each remaining point is assigned to one of the hubs using LSH.
 We provide a detailed implementation of LSH in \mpc  in \Cref{sec:hub}.

\subsection{Notation and preliminaries}

We now introduce the notation used through the paper.

First, we present the setting of the parameters of our \mpc. The \kcen algorithm in this paper works for any local space $\lspace = \Oh(n^\delta)$ for a constant $0 < \delta < 1$: the setting of $\delta$ has only a constant factor impact on the running time. Similarly, the \mpc can have any global space $\gspace = \widetilde{\Oh}(n^{1+\rho})$ for some constant $\rho > 0$: $\rho$ can be made arbitrarily small, and its setting has a constant factor impact on the approximation ratio. We sometimes refer to \mpc with these choices of \lspace and \gspace simply as ``\specifiedmpc'' in the rest of this paper.

Let us recall that certain operations, particularly sorting and prefix sum of $n$ elements, and broadcasting a value of size $<\lspace$, can be computed deterministically in $O(1)$ rounds (see \cite{GSZ11}).

The input to our problem is a set $P$ of $n$ points in $\mathbb{R}^d$, where $d$ is a constant, and an integer parameter $k < n$. We define $d(p, q)$ as the Euclidean distance between points $p$ and $q$ in $\mathbb{R}^d$. We generalize this notation to the distance between a point and a set: $d(p, S) := \min\limits_{q \in S} d(p, q)$ is the minimum distance from $p$ to a point in $S$. We define $\cost(P,S) := \max\limits_{p \in P} d(p,S)$ as the distance of the point in $P$ which is ``furthest away'' from any point in $S$. 

Let $\Delta_{\min}$ and $\Delta_{\max}$ denote the minimum and maximum distance between any two points in the input set $P$, respectively, and define $\Delta := \Delta_{\max} / \Delta_{\min}$. We assume that $\Delta_{\min}$ is known. Without loss of generality, we rescale the input so that the minimum distance between any two points in $P$ is $1$, and thus $\Delta$ equals the maximum distance between any two points in $P$. We also assume that $\Delta$ is known to the algorithm. If $\Delta$ is unknown, we can compute an estimate $\Delta'$ satisfying $\Delta \leq \Delta' \leq 2\Delta$, as follows, which is sufficient for our purposes. First, select an arbitrary point $p \in P$, then find the point $q \in P$ farthest from $p$, and then find the point $r \in P$ farthest from $q$. We output $\Delta' := 2 \cdot d(q, r)$. This value satisfies $\Delta \leq \Delta' \leq 2\Delta$, thus giving a 2-approximation to $\Delta$. This procedure can be implemented in $O(1)$ rounds in the MPC model: broadcasting the chosen point to all machines, local computation of distances by each machine, and global aggregation to identify the farthest point can all be done in constant rounds. Hence, the overall algorithm completes in $O(1)$ rounds.


We denote the set $\{1,\ldots,t\}$ by $[t]$ and $\log^{(i)} n := \underbrace{\log\dots\log}_{i} n$ the iterated logarithm of $n$. By convention $\log^{(0)} n := n$. The notations $\widetilde{\Oh}(f)$ and $\widetilde{\Theta}(f)$ hide polynomial factors in $\log f$.

We now formally define the $k$-center clustering problem.

\begin{defi}[\textbf{Clustering}]
Let $P$ be a set of points in $\R^d$. A \emph{clustering} $\cC$ of $P$ is a partition of $P$ into nonempty clusters $C_1,\ldots,C_t$. The \emph{radius} of cluster $C_i$ is $\min\limits_{x \in C_i} \max\limits_{y \in C_i}d(x,y)$, and the \emph{cost} of the clustering $\cC$ is the maximum of the radii of the clusters $C_1,\ldots,C_t$.
\end{defi}

\begin{defi}[\textbf{$k$-center clustering problem}]
Let $k,n, d \in \N$ with $k \leq n$, and $P$ be a set of  points in $\R^d$. The \kcen problem for $P$ is to find a set  $S^* \subseteq P$ such that $$S^*=\argmin\limits_{S \subseteq P: \size{S}=k} \cost(P,S).$$
Moreover, $\cost(P,S^*)$ is defined as the (optimal) cost of the \kcen problem for $P$.
\end{defi}

Observe that the clustering induced by the optimal solution to $k$-center for $P$ is,
among all clusterings of $P$ with $k$ centers, the one with the minimum cost.

\begin{defi}[\textbf{Optimal clustering with a given cost}]\label{def:Cr} Let $P$ be a set of points in $\R^d$. For $r \in \N$, an \emph{ optimal clustering of $P$ with cost $r$}, denoted by $\cC_r$, is clustering of $P$ with minimum number of clusters among all clustering of $P$ whose cost is at most $r$. We write $\size{\cC_r}$ for the number of clusters in $\cC_r$. 

\end{defi}


\subsection{Our results --- detailed bounds}
\label{sec:results}

We now present in details the main result of this paper:

\begin{theo}[{\bf Main result}]
\label{theo:main-star}
Let $P$ be any set of $n$ points in $\R^d$ for some constant $d$ and let \opt denote the optimal cost of the $k$-center clustering problem for $P$.
There exists an \mpc algorithm that in ${\Oh}(\log\log n)$ rounds determines with high probability a set $T \subseteq P$ of $k+o(k)$ centers, such that $\cost(P,T) = \Oh(\log^*n) \cdot \opt$.
The \mpc uses local space $\lspace = \Oh(n^{\delta})$ and global space $\gspace = \tOh(n^{1+\rho} \cdot \log^2\Delta)$ for any constant $\rho \in (0,\delta]$.
%
%
%

\end{theo}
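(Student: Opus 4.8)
The plan is to build on and strengthen the iterative refinement scheme of Bateni et al. I would maintain a nested family of candidate center sets $P=P_0\supseteq P_1\supseteq\dots\supseteq P_\tau$ in which one \emph{refinement round} does the following: hash all surviving candidates into buckets of bounded geometric diameter using an LSH family for $\R^d$ tuned to the current distance scale; \emph{sample a set of hubs} by retaining each surviving candidate independently with a suitable probability; assign every candidate to a hub lying in its bucket whenever one exists; and add a small set of \emph{extra representatives} so that the few buckets that receive no hub are still covered. The union of the surviving hubs and the extra representatives is $P_{i+1}$. The whole analysis is anchored to a fixed optimal $k$-clustering $\cC^*$, used only in the analysis: I track, for each $C\in\cC^*$, how $\size{P_i\cap C}$ shrinks over rounds, and I separately track the accumulated distance of an original point $p$ to its current representative along the chain $p\to p^{(1)}\to p^{(2)}\to\cdots$, which by the triangle inequality upper bounds $d(p,P_\tau)$ and hence $\cost(P,P_\tau)$.

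The quantitative heart is a single-round lemma: for an appropriate choice of LSH scale and hub-sampling rate, inside one optimal cluster $C$ the count $\size{P_i\cap C}$ drops from $m$ to roughly $\sqrt m$ (in expectation, and with good probability while it is not yet tiny), while the additive cost charged to that round is controlled by the scale used. The new idea relative to Bateni et al. is to organise the rounds into \emph{phases}: phase $j$ runs $\Oh(\log^{(j+1)} n)$ rounds and, by scheduling the scales and sampling rates so that the per-round additive cost is only $\Oh(\opt/\log^{(j+1)} n)$, drives the cluster sizes from $\tOh(\log^{(j-1)} n)$ down to $\tOh(\log^{(j)} n)$ at a total cost of $\Oh(\opt)$ for the whole phase. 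Iterating for $\Oh(\log^* n)$ phases brings cluster sizes down to $\Oh(\log^* n)$; the number of rounds telescopes, $\sum_{j\ge1}\Oh(\log^{(j+1)} n)=\Oh(\log\log n)$, and the cost telescopes to $\Oh(\log^* n)\cdot\opt$.

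I expect the main obstacle to be the following, which is where this analysis diverges most from Bateni et al.: once a cluster is as small as $\tOh(\log^{(j)} n)$, its contraction in the next phase holds only with probability bounded away from $1$, so a union bound over the up to $k$ clusters is hopeless. I would handle this in aggregate. Define a nested sequence $\cC^*=\cC^{**}_0\supseteq\cC^{**}_1\supseteq\cdots$ in which $\cC^{**}_j$ consists of the clusters that have met every size target through phase $j$; since a cluster leaves this sequence at phase $j$ only after a low-probability event, the expected total mass shed over all phases, $\sum_{j}(\text{mass in }\cC^{**}_{j-1}\setminus\cC^{**}_j)$, is $o(k)$ --- here the assumption $k=\Omega(\log^c n)$ is used to absorb the $\polylog n$ mass that a single failed (still small) cluster may carry. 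Because the failed clusters are never identified algorithmically, this remains purely an analysis device. After the phases, the candidates inside the surviving clusters number at most $k\cdot\Oh(\log^* n)$, so a final ``mopping-up'' stage of $\Oh(\log\log^* n)$ rounds --- in each of which the total mass of the still-reducible ($\size{P_i\cap C}\ge 2$) clusters shrinks by a constant factor at cost $\Oh(\opt)$ --- reduces the candidate count to $k+o(k)$; together with the $o(k)$ bound on the failed clusters this yields $k(1+o(1))$ centers, still within the $\Oh(\log\log n)$-round and $\Oh(\log^* n)\cdot\opt$-cost budgets.

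Finally I would implement all of this in the \specifiedmpc. Each refinement round becomes $\Oh(1)$ \mpc rounds using sorting, prefix sums, and broadcasts of $o(\lspace)$-size data: buckets are formed by sorting on hash keys, one hub is sampled and its identity propagated per bucket, and oversized buckets are split across machines. I implement the LSH family so that each point at a given scale is replicated over $\Oh(n^{\rho})$ hash tables, which is enough to obtain a constant collision-probability margin between near and far pairs; this inflates the global space by an $n^{\rho}$ factor, and a smaller $\rho$ only weakens the margin, hence enlarges the hidden constant in $\Oh(\log^* n)$. Sweeping the $\Oh(\log\Delta)$ relevant scales, and guessing $\opt$ among the $\Oh(\log\Delta)$ powers of two with all guesses run in parallel (the output being taken from the smallest guess that yields $k(1+o(1))$ centers), gives global space $\tOh(n^{1+\rho}\log^2\Delta)$ and local space $\Oh(n^{\delta})$. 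Collecting the round count, the cost bound, and the center count, and boosting the aggregate concentration statements to high probability with the help of $k=\Omega(\log^c n)$, proves the theorem.
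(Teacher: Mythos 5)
Your proposal tracks the paper's approach very closely: iterative refinement anchored to a fixed optimal clustering, a phased schedule in which phase~$j$ runs $\Oh(\log^{(j+1)}n)$ rounds with per-round additive error $\Oh(\opt/\log^{(j+1)}n)$, an aggregate bookkeeping of clusters that fail their size target, a final mopping-up stage that halves the reducible mass each round, LSH-based hub assignment implemented with $\Theta(n^{\rho})$ hash tables per scale, and a parallel sweep over $\Oh(\log\Delta)$ guesses for \opt. These are precisely the ingredients of the paper's \malg (Phases~1 and~2), \ukc, \sas, and \nns, and of the active/inactive cluster analysis in \Cref{lem:inter1,lem:inter2}.

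Two points of imprecision worth flagging, neither fatal. First, your description of a refinement round --- ``add a small set of extra representatives so that the few buckets that receive no hub are still covered'' --- misattributes where the auxiliary centers come from. In the paper's \sas every point is always assigned to \emph{some} hub by \nns (which scans scales until it finds one), and extra centers arise from running a farthest-point greedy \emph{within each hub's bag} to cap the bag's radius at $\Oh(r)$; the fact that a bag inside a well-behaved cluster keeps only its sampled hubs (\Cref{lem:samp-add}) is what powers the size-reduction lemma. Second, your accounting of the cross-over between the phases and the mop-up is a bit different from the paper's: you push Phase~1 until cluster sizes are $\Oh(\log^* n)$ and then use $\Oh(\log\log^* n)$ mop-up rounds, whereas the paper stops Phase~1 at the balance point $\alpha_0$ with $\alpha_0 = \log^{(\alpha_0+1)} n$, giving cluster sizes $\widetilde{\Theta}(\log^{(\alpha_0)} n) = 2^{\Theta(\log^* n)}$ and a Phase~2 of $\Theta(\log^{(\alpha_0+1)}n) = \Theta(\log^* n)$ rounds. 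Both accountings yield $\Oh(\log^* n)\cdot\opt$ total cost and $\Oh(\log\log n)$ rounds, so this is just a different (and arguably tighter) place to cut; the paper's own informal overview actually phrases it your way. Finally, make sure the aggregate bound on failed-cluster mass is stated with concentration and not just in expectation (the paper proves it via Chernoff over independent per-cluster indicators, using $k=\Omega(\log^c n)$, and then boosts the overall success probability by running $\Oh(\log n)$ independent copies of the whole algorithm in parallel and taking the smallest output); you gesture at both of these but they need to be carried out.
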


\Cref{theo:main-star} follows directly from a more general theorem.

\begin{theo}[{\bf Generalization of \Cref{theo:main-star}}]
\label{theo:main}
Let $\alpha$ be an arbitrary integer, $1 \le \alpha \le \log^*n - c_0$ for some suitable constant $c_0$. Let $P$ be any set of $n$ points in $\R^d$ for some constant $d$ and let \opt denote the optimal cost of the $k$-center clustering problem for $P$.
There exists an \mpc algorithm that in ${\Oh}(\log\log n)$ rounds determines with high probability a set $T \subseteq P$ of centers, such that $\cost(P,T) = \Oh((\alpha + \log^{(\alpha + 1)}n)) \cdot \opt$ and $\size{T} \le k \cdot \left(1+\frac{1}{\widetilde{\Theta}(\log^{(\alpha)} n)}\right) + \widetilde{\Theta}((\log ^{(\alpha)} n)^3) + {\widetilde{\Oh}(\log n)}$.
The \mpc uses local space $\lspace = \Oh(n^{\delta})$ and global space $\gspace = \tOh(n^{1+\rho} \cdot \log^2\Delta)$ for any constant $\rho \in (0,\delta]$.
%
\end{theo}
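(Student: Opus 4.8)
The plan is to build the output set $T$ by repeatedly thinning the candidate centers. Starting from $P_0 = P$, each round produces $P_{i+1} \subseteq P_i$ in three steps: sample a set of ``hubs'' from $P_i$; use locality-sensitive hashing to snap every surviving point to a hub that is geometrically close to it; and re-insert a sparse net of points inside each hub's neighbourhood so that $\cost(P,P_{i+1}) \le \cost(P,P_i) + (\text{small additive term})$. A single refinement step runs in $\Oh(1)$ \mpc rounds; the LSH subroutine is executed in parallel over all $\Oh(\log\Delta)$ dyadic distance scales, and at each scale points are replicated into $\tOh(n^{\rho})$ buckets, which is exactly what yields the global-space bound $\tOh(n^{1+\rho}\log^2\Delta)$ and lets $\rho$ trade space against the per-round additive error. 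With this framework fixed, the theorem reduces to three quantitative claims about the chain $P_0 \supseteq P_1 \supseteq \cdots$: after $\Oh(\log\log n)$ rounds almost every optimal cluster retains few surviving points, the accumulated additive error is $\Oh(\alpha + \log^{(\alpha+1)}n)\cdot\opt$, and the final set $T$ has the stated size.

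For the analysis only, fix an optimal clustering $\cC^*$ of $P$, each cluster of radius $\le\opt$; the count $\size{P_i \cap C}$ is non-increasing in $i$, and $C$ is irreducible from round $i$ once it drops to $\le 1$. I would partition the run into $\alpha$ ``size-reduction'' phases. Phase $0$, of length $\Oh(\log\log n)$, drives every cluster from $\le n$ surviving points down to $\tOh(\log n)$; the key improvement over Bateni \etal is to calibrate the sampling rate so each of these rounds adds only $\Oh(\opt/\log\log n)$ to the cost, hence $\Oh(\opt)$ for the whole phase. In phase $j$ ($1 \le j \le \alpha-1$), of length $\Oh(\log^{(j+2)}n)$, starting from clusters of size $\tOh(\log^{(j)}n)$ I would show that \emph{most} clusters shrink to size $\tOh(\log^{(j+1)}n)$, again with per-round additive cost $\Oh(\opt/\log^{(j+2)}n)$ and total phase cost $\Oh(\opt)$. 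Since $\sum_{j\ge 0}\log^{(j+2)}n = \Oh(\log\log n)$, all $\alpha$ phases together use $\Oh(\log\log n)$ rounds and contribute $\Oh(\alpha)\cdot\opt$; afterwards almost every $C \in \cC^*$ has at most $\tOh(\log^{(\alpha)}n)$ surviving points.

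The essential obstacle is that sampling from an already tiny cluster cannot shrink it with high probability — in each phase a constant fraction of the small clusters may miss their target. I would therefore introduce, purely for the analysis, the subfamily $\cC^{**}\subseteq\cC^*$ of clusters that meet \emph{every} size target on schedule, and bound $\cC^{**}$ and $\cC^*\setminus\cC^{**}$ separately. For $\cC^{**}$: once each of its clusters has $\le\tOh(\log^{(\alpha)}n)$ surviving points, a ``cleanup'' phase of $\Oh(\log(\log^{(\alpha)}n)) = \Oh(\log^{(\alpha+1)}n)$ further rounds — each adding $\Oh(\opt)$ — drives the total number of surviving points in the still-reducible clusters of $\cC^{**}$ down by a constant factor per round, so they ultimately contribute only $k/\widetilde{\Theta}(\log^{(\alpha)}n)$ centers, while the irreducible clusters contribute at most one center each, at most $k$ in total. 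The step I expect to be hardest is controlling $\cC^*\setminus\cC^{**}$, the clusters that failed a check at some round, because the algorithm cannot detect them and we may not ``track'' them. Here I would argue that at the round a cluster first fails it was already of size $\tOh(\log^{(j)}n)$ for the relevant $j$, that the per-cluster failure probability is small enough that a union bound over the $\Oh(\log\log n)$ rounds combined with a martingale/Chernoff-type concentration argument bounds the total number of surviving points trapped in failed clusters by $\widetilde{\Theta}((\log^{(\alpha)}n)^3)$ with high probability — precisely the additive slack in the claimed size bound.

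Putting the three parts together gives $\size{T} \le k\bracket{1+1/\widetilde{\Theta}(\log^{(\alpha)}n)} + \widetilde{\Theta}((\log^{(\alpha)}n)^3)$, cost $\Oh(\alpha+\log^{(\alpha+1)}n)\cdot\opt$, and $\Oh(\log\log n)$ rounds, with the round count obtained by adding the phase lengths ($\Oh(\log\log n)$ for the size-reduction phases, $\Oh(\log^{(\alpha+1)}n)$ for the cleanup phase, $\Oh(\log\log n)$ overall). Finally, \Cref{theo:main-star} is the instance with $\alpha$ taken as large as the constraint $\alpha\le\log^* n - c_0$ permits, for which $\alpha = \Theta(\log^* n)$ governs the approximation factor while the overhead terms in $\size{T}$ stay sublinear in $k$ under the assumption $k = \Omega((\log n)^c)$.
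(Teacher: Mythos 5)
Your sketch is, in its broad strokes, a faithful reconstruction of the paper's Algorithm~\malg and its analysis (the $\alpha$ subphases with geometrically shrinking additive error, the cleanup phase, and the active/inactive cluster bookkeeping). But the paper factors that content into the separate Theorem~\ref{theo:main1}, and the actual proof of Theorem~\ref{theo:main} is a short reduction from it (\Cref{sec:missing1}) — and this reduction contains two ingredients your proposal silently omits, both of which are essential.

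First, the algorithm cannot ``calibrate the sampling rate so each round adds only $\Oh(\opt/\log\log n)$'' because it does not know $\opt$. All of the radius parameters $r_j$, the greedy thresholds, and the stopping criterion in Algorithm~\ref{algo:sample} are driven by an explicit input radius $r$, not by $\opt$. Theorem~\ref{theo:main1} therefore produces a solution parameterized by $r$, and Theorem~\ref{theo:main} is obtained by running the $\malg$-based algorithm for $\Oh(\log \Delta)$ geometrically-spaced guesses $r = \Delta, \Delta/2, \ldots$ in parallel and then returning the output corresponding to the smallest $r$ for which $\size{T}$ is within the claimed bound. The factor $\log^2\Delta$ in the global-space bound of Theorem~\ref{theo:main} (versus $\log\Delta$ in Theorem~\ref{theo:main1}) comes precisely from this guessing layer, which your account of the global-space bound does not explain.

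Second, your claim that the bound on the failed clusters holds ``with high probability'' from a single run is not correct: the cluster-size arguments (the analogues of \Cref{lem:uniform3} and \Cref{lem:main-P1i}) are applied at scale $t_{i-1} = \widetilde\Theta(\log^{(i-1)} n)$, so their failure probability is only $1/t_{i-1}^{\Omega(1)}$, and summing over the $\alpha$ subphases yields the success probability $1 - 1/(\log^{(\alpha-1)} n)^{\Omega(1)}$ stated in Theorem~\ref{theo:main1} — far from $1 - 1/n^{\Omega(1)}$ when $\alpha \ge 2$. There is no martingale or Chernoff trick that lifts this within a single run, because the bottleneck is the per-subphase sampling variance at scale $t_{i-1}$, not lack of concentration across clusters. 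The paper instead amplifies by running $\Oh(\log n)$ independent copies and reporting the one with fewest centers; that extra layer is what makes ``with high probability'' true and needs to appear in your argument.

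A smaller but real error: you attribute the additive $\widetilde\Theta((\log^{(\alpha)} n)^3)$ slack in $\size{T}$ to the clusters in $\cC^*\setminus\cC^{**}$ that fail their size targets. In the paper's decomposition (\Cref{obs:boundT}), the failed clusters contribute only $\Oh(k/(\log^{(\alpha)} n)^{\Omega(1)})$ points (\Cref{lem:inter1}). The $(\log^{(\alpha)} n)^3$ term arises from $Y_\beta$ in \Cref{lem:inter2}: it is the floor below which the Phase~2 Hoeffding argument no longer gives a constant-factor decrease, because the concentration step requires $Y_{i-1} = \Omega(t_\alpha^3 \log t_\alpha)$. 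These are different mechanisms, and tracking them correctly matters for getting the stated bound.
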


Observe that in \Cref{theo:main}, we have $\size{T} = k + o(k)$ whenever $k = \Omega((\log n)^c)$ for some suitable constant~$c$. For \emph{small} values of $k$, we can instead apply one of the earlier algorithms (e.g., due to Ene \etal~\cite{coreset_kcenter1}), as also noted by Bateni \etal~\cite{bateni-kcenter}.

\begin{proof}[Proof of \Cref{theo:main-star}]
    Let $\alpha_0$ be the solution to the equation $\alpha = \log^{(\alpha + 1)} n$; observe that  $\alpha_0 = \Theta(\log^*n)$. Setting $\alpha = \alpha_0$ in \Cref{theo:main}, we get the desired result.
\end{proof}

\Cref{theo:main} can be seen as a fine-grained version of \Cref{theo:main-star}: as $\alpha$ increases the cost of the solution decreases and number of center increases (with the number of rounds always being $O(\log\log n)$). Therefore \Cref{theo:main} is more amiable in practical scenarios in the following sense: $\alpha$ in \Cref{theo:main} can be set to trade off between the quality of the solution and the number of centers in the solution. We would also like to highlight that the result of Bateni \etal \cite{bateni-kcenter} is a special case of \Cref{theo:main} when $\alpha=1$ and $\alpha=2$ to obtain $\Oh(\log\log n)$ and $\Oh(\log\log\log n)$ approximation, respectively.


\subsection{Organization of the paper}
\label{subsec:organization}

In \Cref{sec:over} we give a proof of our main result predicated on the correctness of our main algorithm, and then give an overview of the subroutines which our main algorithm contains. In \Cref{sec:hub} we explain how LSH (locality-sensitive hashing) on \mpc can be implemented to assign each point $p \in P$ to a hub in $H \subseteq P$ which is within a constant factor of the closest hub to $p$. In Sections \ref{sec:samp}~and~\ref{sec:uni} we prove critical properties of subroutines used in our main algorithm, and then in \Cref{sec:main} we prove the correctness of our main algorithm. Finally, \Cref{sec:conclude} contains some conclusions. 
\section{Technical overview}
\label{sec:over}

Recall that \Cref{theo:main-star} is our main result and \Cref{theo:main} is its parameterized generalization. Our proof of \Cref{theo:main} (and hence of \Cref{theo:main-star}) relies on the following main technical theorem.


\begin{theo}[{\bf Main technical theorem proved in this paper}]
\label{theo:main1}
Let $\alpha$ be an arbitrary integer, $1 \le \alpha \le \log^*n - c_0$ for some suitable constant $c_0$.
Let $r$ be an arbitrary positive real.
Let $P$ be any set of $n$ points in $\R^d$ for some constant $d$ and {let $\cC_r $ be the optimal clustering of $P$ with cost at most $r$}. There exists an \mpc algorithm \malg(Algorithm~\ref{algo:main}) that with probability at least $1-\frac{1}{(\log^{(\alpha - 1)} n)^{\Omega(1)}}$, in ${\Oh}(\log\log n)$ rounds determines a set $T \subseteq P$ of centers, such that $\cost(P,T) = \Oh(r \cdot (\alpha+\log^{(\alpha + 1)}n))$ and
$\size{T} \le \size{\cC_r} \cdot \left(1 + \frac{1}{\widetilde{\Theta}(\log^{(\alpha)} n)} \right) + \widetilde{\Theta}((\log^{(\alpha)}n)^3) + {\widetilde{\Oh}(\log n)}$.
The \mpc uses local space $\lspace = \Oh(n^{\delta})$ and global space $\gspace = \tOh(n^{1+\rho} \cdot \log\Delta)$ for any constant $\rho \in (0,\delta]$.
\end{theo}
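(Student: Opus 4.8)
The plan is to prove \Cref{theo:main1} by analyzing the iterative-refinement algorithm \malg, which maintains a nested chain of candidate center sets $P=P_0\supseteq P_1\supseteq\dots\supseteq P_\tau=:T$. One refinement step $P_i\to P_{i+1}$ is a round of \emph{locality-sensitive sampling}: sample a hub set $H_i\subseteq P_i$ by retaining each surviving point independently with a suitable probability, use the \mpc LSH primitive of \Cref{sec:hub} (run at radius $\Theta(r)$) to route every point of $P_i$ to a hub within an $\Oh(1)$ factor of its nearest hub in $H_i$, and inside each hub bucket run the local \ukc routine to cover the bucket at a \emph{finer} radius $\Theta(r/t)$, where $t$ is the length of the current phase, using $t^{\Oh(d)}$ representatives (possible since $d$ is a constant); $P_{i+1}$ is the union of the chosen representatives. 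Since $P_{i+1}\subseteq P_i$, the size $\size{P_i\cap C}$ of every cluster $C$ of the fixed reference clustering $\cC_r$ is non-increasing in $i$; the role of $\cC_r$ (which has cost $\le r$) is purely in the \emph{quantitative} one-step guarantees that I would establish in Sections~\ref{sec:samp}~and~\ref{sec:uni}: with the hub-retention probability tuned to roughly $(\size{P_i\cap C})^{-1/2}$, a $\Theta(r)$-neighbourhood of $C$ contains only $\Oh(\size{P_i\cap C})$ points (by the doubling property of $\R^d$), so with probability $1-1/\poly(\size{P_i\cap C})$ one gets $\size{P_{i+1}\cap C}=\Oh\!\bigl(\sqrt{\size{P_i\cap C}}\cdot t^{\Oh(d)}\bigr)$, while each surviving point keeps a representative within $\Oh(r/t)$ of its previous one, so the step adds only $\Oh(r/t)$ to $\cost(P,\cdot)$. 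This per-step amortization of the cost (to $\Oh(r/t)$ rather than $\Oh(r)$) is the main quantitative improvement over \cite{bateni-kcenter}.

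The heart of the proof is a telescoping ``phase'' analysis against $\cC_r$. Phase $0$ is a block of $t_0=\Oh(\log\log n)$ steps: repeated square-rooting of sizes ($m\mapsto\Oh(\sqrt m\cdot\polylog)$ per step) brings $\size{P\cap C}$ down to $\tOh(\log n)$ for every $C\in\cC_r$, at total cost $t_0\cdot\Oh(r/t_0)=\Oh(r)$. Inductively, phase $j$ (for $1\le j\le\alpha-1$) is a block of $t_j=\Oh(\log^{(j+2)}n)$ steps that drives the size of \emph{almost every} cluster still present from $\tOh(\log^{(j)}n)$ to $\tOh(\log^{(j+1)}n)$, again at cost $\Oh(r)$. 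The qualifier ``almost'' is forced: once a cluster has only $\mathrm{polylog}$-many points, the per-step failure probability $1/\poly(\size{P_i\cap C})$ is no longer high enough for a union bound over all $\size{\cC_r}$ clusters, so after the phases we can only assert that the clusters of some analysis-only subfamily $\cC^{**}\subseteq\cC_r$ currently have size $\tOh(\log^{(\alpha)}n)$. Over phases $0,\dots,\alpha-1$ the cost accumulated is $\Oh(\alpha r)$ and the number of steps is $\sum_{j\ge 0}\Oh(\log^{(j+2)}n)=\Oh(\log\log n)$.

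Next I would run a ``finishing'' phase to convert this cost guarantee into the cardinality bound, following the template of \cite{bateni-kcenter}: call $C\in\cC_r$ \emph{irreducible from round $i$} if $i$ is minimal with $\size{P_i\cap C}\le 1$, and \emph{reducible} until then. Once every cluster of $\cC^{**}$ has size $\tOh(\log^{(\alpha)}n)$, each further refinement step — now run at the crude radius $\Theta(r)$, costing $\Oh(r)$ — reduces by a constant factor the total number of candidate centers lying in reducible clusters of $\cC^{**}$; hence $\Oh(\log(\log^{(\alpha)}n))=\Oh(\log^{(\alpha+1)}n)$ such steps reduce that total to $\size{\cC_r}/\widetilde{\Theta}(\log^{(\alpha)}n)$ and add $\Oh(r\log^{(\alpha+1)}n)$ to the cost. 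Irreducible clusters of $\cC^{**}$ contribute at most one center each, i.e.\ at most $\size{\cC_r}$ altogether, so $\cC^{**}$ yields at most $\size{\cC_r}\bigl(1+1/\widetilde{\Theta}(\log^{(\alpha)}n)\bigr)$ centers; summing the phase and finishing-phase costs gives $\cost(P,T)=\Oh\!\bigl(r(\alpha+\log^{(\alpha+1)}n)\bigr)$, and the round count stays $\Oh(\log\log n)$ throughout.

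The main obstacle — and where the analysis departs from \cite{bateni-kcenter} — is bounding the number of centers that end up in the \emph{bad} clusters $\cC_r\setminus\cC^{**}$, i.e.\ those that flunk a size check somewhere during the run: we cannot afford a union bound over all $\size{\cC_r}$ clusters, and the algorithm cannot even detect which clusters are bad (it uses $\cC_r$ and $\cC^{**}$ only in the analysis). I would charge badness at the dyadic size scale at which it first occurs: a cluster can first fail while of size in $[m,2m)$ only if a $1/\poly(m)$ event fires, so the \emph{expected} number of candidate centers lying in clusters that fail at that scale is an inverse-polynomial-in-$m$ fraction of the centers then in such clusters; summing over the $\Oh(\log m)$ dyadic scales down to $\Theta(\log^{(\alpha)}n)$, together with a Markov/martingale concentration argument, shows that with probability $1-1/(\log^{(\alpha-1)}n)^{\Omega(1)}$ the total number of candidate centers in $\cC_r\setminus\cC^{**}$ is $\widetilde{\Theta}\bigl((\log^{(\alpha)}n)^3\bigr)$ — exactly the additive term in the statement, and the source of the stated success probability. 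It then remains to verify the \mpc implementation: the LSH routing of \Cref{sec:hub} runs in $\Oh(1)$ rounds with local space $\Oh(n^\delta)$ and global space $\tOh(n^{1+\rho}\log\Delta)$, and the bucket-local \ukc routine runs after an $\Oh(1)$-round sort (oversized buckets handled by recursion), so each of the $\Oh(\log\log n)$ refinement steps costs $\Oh(1)$ rounds, yielding the claimed round and space bounds.
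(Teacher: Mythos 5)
Your high-level structure matches the paper's: a Phase~1 consisting of $\alpha$ applications of a \ukc-style schedule against a fixed clustering $\cC_r$ (driving per-cluster survivor counts down through the tower $\log^{(j)}n$ at amortized cost $\Oh(r)$ per phase, tracking which clusters ``fail'' a size check and analyzing them separately), followed by a finishing Phase~2 of $\Theta(\log^{(\alpha+1)}n)$ \sas steps at radius $\Theta(r)$. The round/cost/space accounting is consistent with the paper, and the active/inactive (your $\cC^{**}$ vs.\ $\cC_r\setminus\cC^{**}$) dichotomy is exactly \Cref{defi:active} and \Cref{lem:inter1}.

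However, there is a genuine gap in where the additive $\widetilde{\Theta}((\log^{(\alpha)}n)^3)$ term comes from, and it affects the correctness of the cardinality bound. You claim the bad clusters $\cC_r\setminus\cC^{**}$ contribute $\widetilde{\Theta}((\log^{(\alpha)}n)^3)$ centers via a dyadic charging argument. But your own accounting gives, at each dyadic scale $m$, expected contribution about $\size{\cC_r}\cdot m\cdot\frac{1}{\poly(m)}=\size{\cC_r}/\poly(m)$; summing down to $m=\Theta(\log^{(\alpha)}n)$ yields $\Oh\bigl(\size{\cC_r}/(\log^{(\alpha)}n)^{\Omega(1)}\bigr)$, which is a \emph{multiplicative} $1/\widetilde{\Theta}(\log^{(\alpha)}n)$ fraction of $\size{\cC_r}$, not the additive polylogarithmic term. (A Markov argument from this expectation to a high-probability bound of $(\log^{(\alpha)}n)^3$ fails whenever $\size{\cC_r}$ is large.) This is precisely what the paper proves in \Cref{lem:inter1} (via \Cref{lem:main-P1i}), and it is absorbed into the $\size{\cC_r}\cdot(1+1/\widetilde{\Theta}(\log^{(\alpha)}n))$ factor, not the additive term.

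Meanwhile, you assert that the Phase~2 constant-factor reduction drives the surviving center count in reducible clusters of $\cC^{**}$ all the way down to $\size{\cC_r}/\widetilde{\Theta}(\log^{(\alpha)}n)$. This cannot be sustained: the per-step ``shrink by a constant factor'' claim is a Hoeffding-type concentration bound over a sum of per-cluster variables each bounded by $t_\alpha=\widetilde{\Theta}(\log^{(\alpha)}n)$, and it only gives failure probability $1/t_\alpha^{\Omega(1)}$ while the running total $Y_{i-1}$ is at least $\Omega(t_\alpha^3\log t_\alpha)$ (see the hypothesis of \Cref{lem:Y_i}). Once $Y_{i-1}$ drops below that threshold, the argument stops, leaving the additive floor $\widetilde{\Theta}((\log^{(\alpha)}n)^3)$. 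This floor — not the bad clusters — is the true source of the additive term, and it is the key quantitative subtlety your proposal is missing. Without it, the claimed bound on $\size{T}$ is not actually established.
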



Note that the number of centers produced by the algorithm described in \Cref{theo:main1} is approximately the same as the number of centers in the optimal clustering of $P$ with cost at most $r$. This contrasts with the standard clustering setting, where the number of clusters is specified as input, without any information about the cost of the solution. Therefore, if we knew a constant-factor approximation to the optimal cost of the $k$-center problem, then by setting $r$ to this value, in \Cref{theo:main1}, we would obtain the desired solution required in \Cref{theo:main}.
 This naturally suggests to run \malg multiple times in parallel in order to obtain \Cref{theo:main}. Note that the success probability of \Cref{theo:main1} is not high. Hence we first run \malg a suitable number of times in parallel to get an algorithm $\mbox{\malg}'$ whose output and space requirements are same as that of \malg, but the success probability is high. Then we run $\mbox{\malg}'$ for $O(\log\Delta)$ choices of $r$ (starting with $r=\Delta$ and decreasing a constant factor each time) in parallel to get algorithm $\mbox{\malg}''$ (the algorithm of \Cref{theo:main}). Moreover, $\mbox{\malg}''$ reports the output of $\mbox{\malg}'$ for the minimum $r$ for which we get the number of centers equals to $k+o(k)$. In what follows, we prove \Cref{theo:main} formally.

\begin{proof}[Proof of  \Cref{theo:main}]

Let us consider algorithm $\mbox{\malg}'$ that runs $\psi$ instances, $\psi=\Oh(\log (\max \{n, \log \Delta\}))$, of the algorithm \malg in parallel. Let $T(1), \ldots, T(\psi)$ be the outputs of the runs of \malg. $\mbox{\malg}'$  reports $T'=T(i)$ with the minimum cardinality as the output. Therefore by \Cref{theo:main1}, with probability at least $1-\frac{1}{(\max\{n, \log \Delta\})^{\Omega(1)}}$, the following is true for $\mbox{\malg}'$:
\begin{enumerate}
    \item[(i)] the number of rounds spent by $\mbox{\malg}'$ is $\Oh(\log \log n)$ and the global space used by $\mbox{\malg}'$ is $\tOh(n^{1+\rho} \log \Delta)$;
    \item[(ii)] $\cost(P,T')=\Oh(r\cdot (\alpha+\log ^{(\alpha + 1)}n))$; and
    \item[(iii)] $\size{T'}\le \size{\cC_r}(1+\frac{1}{\widetilde{\Theta}(\log ^ {(\alpha)} n )})$ + $\widetilde{\Theta}((\log ^{(\alpha)} n)^3) + {\widetilde{\Oh}(\log n)}$.
\end{enumerate}

Next, consider the following observation about $\mbox{\malg}'$:

\begin{obs}
\label{obs:malg-dash}
Let \opt be the optimal cost to the \kcen problem for $P$. If one runs $\mbox{\malg}'$ with radius parameter $r$ with $r \geq  \opt$, then the number of centers reported by $\mbox{\malg}'$ is at most $k\left(1+\frac{1}{\widetilde{\Theta}(\log^{(\alpha)} n)}\right)+\widetilde{\Theta}((\log ^{(\alpha)} n)^3) + {\widetilde{\Oh}(\log n)}$, with probability at least $1-\frac{1}{(\max\{n,\log \Delta\})^{\Omega(1)}}$.
\end{obs}

\Cref{obs:malg-dash} follows from the bound $\size{\cC_r} \le k$ for $r \geq \opt$.

Now we describe the algorithm algorithm $\mbox{\malg}^{''}$, which is the algorithm corresponding  to \Cref{theo:main}. $\mbox{\malg}^{''}$ runs $\phi=\Oh(\log \Delta)$ instances of $\mbox{\malg}'$, with radius parameters $r(1)=\Delta, r(2)=\frac{\Delta}{2}, r(3)=\frac{\Delta}{4}, \ldots, r({\phi})=\Oh(1)$, in parallel. Let $T'(1), \ldots, T'(\phi)$ be the corresponding outputs of the runs of $\mbox{\malg}'$.  $\mbox{\malg}^{''}$ reports $T''=T'(i)$ as the output such that $\size{T'(i)}\leq k(1+\frac{1}{\widetilde{\Theta}(\log ^ {(\alpha)} n)})+ \widetilde{\Theta}((\log ^ {(\alpha)} n)^3) + {\widetilde{\Oh}(\log n)}$ and $r(i)$ is the minimum. So, the round complexity and space complexity of \Cref{theo:main} follow from the round and space complexity guarantee of $\mbox{\malg}'$, respectively. From the guarantee of algorithm $\mbox{\malg}'$ about the set of centers returned by it, we have $\cost(P,T'')=r(i)\cdot (\alpha+{(\log ^{(\alpha + 1)} n)})$, and $\size{T''}\leq k(1+\frac{1}{\widetilde{\Theta}(\log^{(\alpha)} n)})$ + $\widetilde{\Theta}((\log ^{(\alpha)} n)^3) + {\widetilde{\Oh}(\log n)}$, with probability at least $1-\frac{1}{(\max\{ n, \log \Delta\})^{\Omega(1)}}$. From \Cref{obs:malg-dash}, $r(i) \leq 2 \cdot \opt$ with probability at least $1-\frac{1}{(\max\{n, \log \Delta\})^{\Omega(1)}}$. This yields the proof of the guarantee on the approximation factor and the number of centers of \Cref{theo:main}.
    
\end{proof}


\subsection{Overview of the proof of \Cref{theo:main1}}
\label{subsec:proof-theo:main1}

The idea to prove \Cref{theo:main1} is based on the framework which we call \emph{locally sensitive sampling}.
We generate a set $H \subseteq P$ of points (called \emph{hubs}) by sampling each point in $P$ independently with a \emph{suitable} probability, and assign all other points to one of the hubs based on its \emph{locality}. Let $B_h$ be the \emph{bag of the hub $h$}---the set of points associated to a hub $h \in H$. We run a variation of a well known greedy algorithm \cite{gonzalez1985kcenter} (for \kcen in the sequential setting) for each bag in parallel to find a set of intermediate centers $C_h$ for hub $h$ such that $\cost(B_h,C_h) = \Oh(r)$. We again repeat the procedure by setting $\bigcup_{h \in H} C_h$ as the point set. We continue this process a particular number of times with a particular choice of probability and radius parameters, and report the centers, at that point of time, as the final solution.

This framework was recently used by Bateni \etal \cite{bateni-kcenter} to give an $\Oh(\log\log n)$-round \mpc algorithm with local space $\lspace = \Oh(n^\delta)$ and global space $\gspace = \tOh(n^{1+\delta})$, which computes an $\Oh (\log\log\log n)$-approximate solution to \kcen with $k(1 + o(1))$ centers, with high probability. We extend their framework and generalize the analysis to give an $\Oh(\log^*n)$ approximate solution as stated in \Cref{theo:main-star}. Note that \Cref{theo:main1} takes care of \Cref{theo:main-star} via \Cref{theo:main}.

The algorithm corresponding to \Cref{theo:main1} is \malg (Algorithm~\ref{algo:main} in \Cref{sec:main}). Before describing \malg, we describe and contextualize the three subroutines which it uses (\nns, \sas and \ukc).
The main algorithm of Bateni \etal \cite{bateni-kcenter} uses subroutine {\sc Sample-And-Solve} and {\sc Uniform}-$k$-{\sc
center}. We use analogous subroutines \sas and \ukc in our algorithm corresponding to {\sc
Sample-And-Solve} and {\sc Uniform}-$k$-{\sc center} in Bateni \etal ~\cite{bateni-kcenter}, respectively, to achieve the desired result. But there are some differences which we will discuss when we describe \sas and \ukc. Due to our implementation of \nns, we are able to give a more flexible bound on global space. We can improve the approximation ratio mainly due to generalizing their {\sc Uniform}-$k$-{\sc Center} to \ukc in our case and using sophisticated analysis in our main algorithm that calls \ukc.

Let us discuss first at a high level what these subroutines achieve in the context of the framework of locally sensitive sampling (discussed at the beginning of this section). Intuitively, the purpose of \sas is to sparsify dense regions of points: it samples nodes with a given probability and iteratively adds centers in order to ensure that the cost of the centers remains low. \ukc repeatedly uses \sas: its main purpose is to guarantee that the number of centers in each cluster of some fixed optimal clustering decreases in a certain way over time.


\subsubsection*{\nns ($Q,H$)}

Takes as input a set $Q$ of at most $n$ points and a set of hubs $H \subseteq Q$. For all points $q \in Q \setminus H$, it finds a point $\mbox{close}(q) \in H$ such that $d(q,\mbox{close}(q)) = \Oh( d(q,H))$, with probability at least $1-\frac{1}{n^{\Omega(1)}}$. \nns{} can be implemented in MPC with local space $\lspace=\Oh(n^\delta)$ and global space $\gspace=\tOh(n^{1+\rho} \cdot \log \Delta)$ in $\Oh(1)$ rounds. \nns{} uses \emph{locally sensitive hashing}~\cite{lsh, Charikar02} and its implementation in \mpc. For details on \nns, see \Cref{sec:hub}.

Bateni \etal~\cite{bateni-kcenter} employed locality-sensitive hashing in their algorithm to essentially achieve the same objective as ours. We complement their work by providing a detailed description of its implementation and usage.

\subsubsection*{\sas($Q,p,r$)}

Takes a set $Q$ of at most $n$ points, a sampling parameter $p$, and a radius parameter $r$. It produces some set of centers $S \subseteq Q$ such that $\cost(Q,S)=\Oh(r)$.\footnote{The constant inside $\Oh(\cdot)$ depends on $\rho.$} Importantly, this can be implemented  in an \mpc  with local space $\lspace=\Oh(n^{\delta})$ and global space $\tOh(n^{1+\rho}\cdot \log \Delta)$ in $\Oh(1)$ rounds (\Cref{lem:samp}) as, aside from using \nns{} to assign points to hubs, the computation is all done locally. \sas first samples each point in $Q$ (independently) with probability $p$: let $H \subseteq Q$ be the set of sampled points called hubs. Then \sas calls \nns{} with input point set $Q$ and hub set $H$. After getting $\close(q)$ for each $q \in Q\setminus H$, \sas collects all points $B_h \subseteq Q$ assigned to a hub $h \in H$ (including hub $h$) and selects a set of centers $C_h$ from $B_h$ greedily using a variation of the sequential algorithm of \cite{gonzalez1985kcenter}, such that $\cost(B_h,C_h) =\Oh(r)$. Finally, the algorithm outputs $S=\bigcup_{h \in H}C_h$. However, there is a difficulty to overcome: note that $\size{B_h}$ may be $\omega(n^\delta)$. So $B_h$ may not fit into the local memory of a machine. We show that this can be handled by distributing the points in $B_h$ into multiple machines, duplicating $h$ across all such machines. See \Cref{sec:samp} for more details about \sas.

Algorithm \sas in our paper serves essentially the same purpose as the corresponding algorithm due to Bateni \etal \cite{bateni-kcenter}. The approximation guarantee and number of rounds performed are the same in both cases. However, the global space used by our algorithm \sas is more flexible in the following sense: reducing the value of $\rho$ decreases the amount of global space used by the algorithm (global space used is $\tOh(n^{1 + \rho})\cdot \log \Delta$) while increasing the approximation ratio.

\subsubsection*{\ukc$(V,r,t)$}

Takes a set $V$ of at most $ n$ points, a radius parameter $r$, and an additional parameter $t \leq n.$ It produces a set $S$ of centers, by calling \sas $\tau=\Theta(\log \log t)$ times. $S_{i-1}$ is the input to the $i$-th call and $S_i$ is the output of the $i$-th call: overall we have $S_0=V$ and $S_\tau=S$ (the output of \ukc). The probability and radius parameters to the calls to \sas are set \emph{suitably}. From the guarantees we have from \sas, we have the following guarantee for \ukc ( (\Cref{lem:ukc})): (i)  $\cost(V,S)=\Oh(r\cdot \tau)=\Oh(r \log \log t)$ , and (ii) it can be implemented in an \mpc with local space $\lspace=\Oh(n^{\delta})$ and global space $\gspace=\tOh(n^{1+\rho}\cdot \log \Delta)$ in $\Oh(\log \log t)$ rounds. \ukc guarantees a reduction in cluster sizes in an optimal clustering in the following sense.
Consider  a fixed clustering $\cC_r^t$ of $V$ that has cost at most $r$. For $C \in \cC_r^t$: if  $\size{ C \cap V}\leq t$, then $\size{C \cap S} =\Oh( \log t \cdot (\log \log t)^2)$, with probability at least $1-\frac{1}{t^{\Omega(1)}}$.  This is formally stated in  \Cref{lem:uniform3}: note that this ceases to be high probability when $t \in o(n)$. This guarantee on the size reduction plays a crucial role when proving the number of centers reported by \malg in \Cref{sec:main}. For more details on \ukc, see \Cref{sec:uni}.

Our \ukc is a full generalization of the analagous {\sc Uniform}-$k$-{\sc center} in Bateni \etal~\cite{bateni-kcenter}. In particular, {\sc Uniform}-$k$-{\sc Center} is a special case of our \ukc when $t=n$. This generalization plays a crucial role in the correctness of \malg when we call \ukc multiple times. {\sc Uniform}-$k$-{\sc Center} is not robust enough to be called from \malg multiple times to achieve the desired result.

\subsubsection*{\malg $(P,r)$} 

Takes a set $P$ of $ n$ points and a radius parameter $r$. The algorithm consists of two phases, where {\bf Phase~1} consists of $\alpha$ subphases and {\bf Phase 2} consists of $\beta={\Theta}(\log ^{(\alpha + 1)} n)$ subphases. In the $j$-th subphase of {\bf Phase 1}, that is, in {\bf Phase 1.j}, \malg calls $\mbox{\ukc}(T_{j-1},r_{j-1},t_{j-1})$, where {$ T_j$ is the output of $ \text{Uniform-Center}(T_{j-1}, r_{j-1}, t_{j-1})$}, $T_0=P$, $r_0=\frac{r}{\log \log n}$, $t_0=n$, $t_{j}=\widetilde{\Theta}(\log ^{(j)} n),$ and $r_j=\frac{r}{\log \log  t_{j}}$. Observe that the guarantees of \ukc ensure the following:
\begin{enumerate}
\item[(i)] {\bf Phase 1} can be implemented in an \mpc with local space $\lspace=\Oh (n^\delta)$ and global space $\gspace=\tOh(n^{1+\rho} \cdot \log\Delta)$ in $\sum\limits_{j=1}^{\alpha} \log\log t_{j-1}=\Oh(\log \log n)$ rounds;
\item[(ii)] $\cost(T_{j-1},T_j)=\Oh(r_{j-1} \log \log t_{j-1})=\Oh(r)$ for each $j \in [\alpha]$. Hence, $\cost(P,T_{\alpha}) = \Oh(r\alpha)$.
\end{enumerate}

Now consider {\bf Phase 2} of \malg.

In the $i$-th subphase of {\bf Phase 2}, that is {\bf Phase 2.i}, \malg calls $\mbox{\sas}(T_{\alpha + i-1},\frac{1}{2},r)$, where {$T_{\alpha+i}$ is the output of $ \text{Sample-And-Solve}(T_{\alpha+i-1}, 1/2, r)$ and $T=T_{\alpha + \beta}$ is the final output of \malg.}
From the guarantee of \sas, we have
\begin{enumerate}
\item[(i)] {\bf Phase 2} can be implemented in an \mpc  with local space $\lspace=\Oh(n^\delta)$ and global space $\gspace=\tOh(n^{1+\delta}\cdot \log \Delta)$ in $\Oh(\beta)=\Oh(\log ^{(\alpha +1)}n)$  rounds;
\item[(ii)] $\cost(T_{\alpha+i-1},T_{\alpha +1})=\Oh(r)$ for each $i \in [\beta]$. Hence,
     \begin{align*}
        \cost(P,T) &=
        \cost(P,T_{\alpha+\beta}) =
        \cost(P,T_{\alpha})+\Oh(\beta r)
            \\&=
        \Oh(r\cdot (\alpha + \log^{(\alpha +1)}n))
        \enspace.
     \end{align*}
\end{enumerate}

Combining the guarantees concerning the round complexity, global space and approximation guarantee of {\bf Phase 1} and {\bf Phase 2}, we get the claimed guarantees on round complexity, global space, and approximation factor in \Cref{theo:main1} (see \Cref{lem:main}).

Now, we discuss how we bound the number of centers that \malg outputs, that is, $\size{T}$. Consider an optimal clustering $\cC_r$ of $P$ with cost at most $r$. A cluster $C \in \cC_r$ is said to be \emph{active} (after {\bf Phase 1}) if $\size{C \cap T_j} \leq t_j$ for each $j$ with $1 \leq j \leq \alpha$. We say $C$ is \emph{inactive}, otherwise. Using the guarantee given by \ukc concerning the reduction in cluster sizes, we can show that the total number of centers in $T_\alpha$, that are in inactive clusters, is
%
%
$\Oh\left(\frac{ \size{\cC_r}}{(\log ^{(\alpha) } n)^{\Omega(1)}}\right) + \widetilde{\Oh}(\log n)$,
with probability at least $1-\sum\limits_{i=1}^{\alpha}\frac{1}{t_{i-1}^{\Omega(1)}}$ (see \Cref{lem:inter1}). Note that $T_\alpha$ denotes the set of intermediate centers we have after {\bf Phase 1}. So, for any cluster $C \in \cC_r$ that is active after {\bf Phase 1}, it satisfies $\size{C \cap T_\alpha}\leq t_\alpha =\widetilde{\Theta}{(\log ^{(\alpha)} n)}$. That is, with probability at least $1-\sum\limits_{i=1}^{\alpha}\frac{1}{t_{i-1}^{\Omega(1)}}$, we have the following:
\begin{align*}
    \size{T_\alpha}&\leq
    \size{\cC_r}\cdot t_{\alpha}+\Oh\left(\frac{ \size{\cC_r}}{(\log ^{(\alpha ) } n)^{\Omega(1)}}\right) + \widetilde{\Oh}(\log n)
    \enspace.
\end{align*}

We define an active cluster $C \in \cC_r$ is $i$-large if $\size{C \cap T_{\alpha+i-1}}\geq 2$. We show that the total number of intermediate centers in any $i$-large clusters reduces by a constant factor in {\bf Phase 2.i}, with probability at least $1-\frac{1}{t_{\alpha-1}^{\Omega(1)}}$. Note that the total number of intermediate centers in all active large clusters, just before {\bf Phase 2}, is at most $\size{\cC_r}\cdot t_\alpha =\size{\cC_r}\cdot \widetilde{\Theta}(\log ^{(\alpha)} n) $, and we are executing  $\beta={\Theta}(\log ^{(\alpha +1)} n)$ many sub-phases in {\bf Phase 2}. We can show that the total number of centers in the active large clusters, after {\bf Phase 2}, is at most
%
$\frac{\size{\cC_r}}{\widetilde{\Theta}(\log ^{(\alpha)} n)}+\widetilde{\Theta}((\log^{(\alpha)} n)^3)$,
with probability at least $1-\frac{1}{t_{\alpha -1}^{\Omega(1)}}$ (\Cref{lem:inter2}). Combined with the fact the number of active small clusters can be at most $\size{\cC_r}$ with the bound on number of inactive clusters in {\bf Phase 2}, we have the desired bound on $\size{T}$. Full details of \malg and its analysis are presented in \Cref{sec:main}.


\junk{

   \subsection{Main differences from Bateni \etal~\cite{bateni-kcenter}}
   \label{sec:diff}

   The algorithm in Bateni~\etal~\cite{bateni-kcenter} consists of two phases. In {\bf Phase 1}, they call of \ukc once (with $t=n$~\footnote{Note that we have generalized \ukc to work for any $t \leq n$.}) and in  {\bf Phase 2}, they call \sas $\Oh(\log \log n)$ times. While proving the correctness of {\bf Phase 1}, they argued that all the clusters satisfy certain properties w.r.t. the intermediate centers we have after {\bf Phase 1}, with high probability. This makes the analysis of {\bf Phase 2} relatively easier.

   We extend their approach and repeatedly apply \ukc (with different $t$) for $\alpha = O(\log^* n)$ times in {\bf Phase 1}. We observe that, to improve the approximation factor, each cluster must satisfy some stronger property after {\bf Phase 1}, but this is difficult to guarantee for all clusters after our {\bf Phase 1}. We deal with this by introducing the notion of active and inactive clusters. Active clusters are those that satisfy the desired property after each sub-phase of {\bf Phase 1} and inactive clusters are those that fail at some stage. We successfully argue that the number of centers in these inactive clusters is $o(k)$ after {\bf Phase 1}, with desired probability. With the notion of active and inactive clusters, the analysis of {\bf Phase 2} in our case is more complicated.

   We believe we have extended this technique to a local optima and that better approximation guarantees, that is approximation factor independent of $n$, will require new ideas.
}


\section{Nearest Hub Search}
\label{sec:hub}
 
Recall that our \nns algorithm takes a set $Q$ of points and a set $H \subseteq Q$ of hubs. For each $q \in (Q \setminus H)$, we want to find a hub $h \in H$ such that the distance between $q$ and $h$ is only a constant-factor more than the distance between $q$ and the closest hub to $q$ in $H$: {informally, $h$ is ``almost'' the closest hub to $q$ in $H$.}

In this section, we use \emph{locally sensitive hashing} (LSH)~\cite{lsh} to implement algorithm \nns ($Q, H$) on \mpc. Our implementation of locally sensitive hashing is parameterizable: by setting the parameter $\rho$ appropriately, one can reduce the global space while increase the approximation ratio, or vice versa.

First, we begin by recalling the definition of locally sensitive hashing, introduced in \cite{lsh}:

\begin{defi}[{\bf Locally sensitive hashing} \cite{lsh}]
Let $r \in \R^+$, $c>1$ and $p_1,p_2 \in(0,1)$ be such that $p_1 > p_2$. A hash family $\cH=\{h:\R^d \rightarrow U\}$ is said to be a $(r,cr,p_1,p_2)$-LSH family if for all $x,y \in \R^d$ the following hold:
\begin{itemize}
\item If $d(x,y)\leq r$, then $\pr_{h \in \cH}(h(x)=h(y)) \geq p_1$;
\item If $d(x,y)\geq c r$, then $\pr_{h \in \cH}(h(x)=h(y)) \leq p_2$.
\end{itemize}
\end{defi}

We next present a proposition regarding the existence of a particular hash family, which will be useful for describing and analyzing
 \nns ($Q, H$) in Algorithm~\ref{algo:nn}.

\begin{pro}[\cite{lsh, Charikar02}]
\label{pre:exist-hash}
Let $r, n \in \mathbb{N}$ and $\rho \in (0,1)$. There exists an explicit
$(r, c_\rho r, (1/n)^\rho, 1/n)$-LSH family, where $c_\rho \ge 1/\rho$.
A hash function can be sampled uniformly at random from this family and stored
using $\Oh(d)$ words.
\end{pro}
The explicit hash function can be implemented using the \emph{arc cosine locality-sensitive hashing} \cite{Charikar02, rashtchian2019lsh}.


In \nns ($Q, H$),  $Q$ is a set of at most $n$ points and $H \subseteq Q$ is the set of hubs. Our objective is to find a hub for each point which is at most some constant factor further away than the nearest hub, rather than finding the hub which is the closest. We do this by making $\log \Delta$ guesses about the distance to the nearest hub, and for each guess trying to find a hub within that distance.

For our $\log \Delta$ guesses of $r$ (the distance to the closest hub), we independently
select $L = \Theta(n^\rho)$ hash functions from a $(r, c_\rho r, (1/n)^\rho, 1/n)$-LSH family
and use them to hash all points, including the hubs. We then group all points that share the
same hash value.

To identify a hub close to each point, note that if $h$ hubs and $m$ points share a hash value,
performing $h \cdot m$ distance checks can be prohibitive if $h \cdot m > \lspace$. We show that
when multiple hubs are mapped to the same hash value, it is sufficient to retain only a
constant number of hubs while ensuring that each point is assigned a hub within a constant
factor of its closest hub. This follows from the choice of the hash function and the properties of LSH. The complete algorithm, \nns($Q, H$), is described in Algorithm~1,
and its correctness is established in \Cref{theo:lhs}.


\begin{lem}[{\bf Guarantee of \nns}]
\label{theo:lhs}
Let $Q$ be a set of at most $n$ points in $\R^d$, $H \subseteq Q$ denote the set of hubs, and $c_\rho$ be a suitable constant depending on $\rho$.  Consider \nns ($Q, H$) (as described in Algorithm~\ref{algo:nn}). It does not report {\sc Fail} with high probability. Moreover,
\begin{enumerate}
    \item[(i)]  For all $q \in Q\setminus H$, finds a hub $\mbox{close}(q) \in H$ such that $d(q, \close(q)) < 2c_\rho \cdot d( q,H)$;
    \item[(ii)] It takes $\Oh(1)$ \mpc rounds with local space $\Oh(n^\delta)$ and $\lspace = O(n^\delta)$ and global space $\gspace = \Oh(n^{1+\rho} \cdot \log^2 n \cdot \log \Delta )$.
\end{enumerate}

\end{lem}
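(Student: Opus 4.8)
\emph{Proof plan.} The approach is to port the classical LSH-based approximate-near-neighbour data structure of \cite{lsh} to \mpc, with the hub set $H$ playing the role of the data set and the points of $Q\setminus H$ being the queries.

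\emph{Reduction to a single scale.} Since all pairwise distances lie in $[1,\Delta]$, I would run, independently and in parallel, a single-scale subroutine for each of the $\Oh(\log\Delta)$ radius guesses $r\in\{1,2,4,\dots,2^{\lceil\log\Delta\rceil}\}$, and for each $q\in Q\setminus H$ return the hub obtained from the smallest guess that succeeds. The geometric ladder is chosen so that the relevant guess $r_q$ (the smallest power of two that is at least $d(q,H)$) satisfies $d(q,H)\le r_q<2\,d(q,H)$; it therefore suffices for the scale-$r_q$ subroutine to return \emph{some} hub within distance $(c_\rho-1)r_q$ of $q$'s nearest hub $h^{\star}$, since then the triangle inequality gives a hub within $c_\rho r_q<2c_\rho\,d(q,H)$ of $q$, which is exactly the bound claimed.

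\emph{The single-scale subroutine and its correctness.} For a fixed $r$ I would apply \Cref{pre:exist-hash} to obtain an $(r,c_\rho r,(1/n)^\rho,1/n)$-LSH family and sample $L=\Theta(n^\rho\log n)$ functions $g_1,\dots,g_L$ from it independently; all of $Q$ is hashed under each $g_i$ and points with a common value are gathered on consecutive machines by sorting ($\Oh(1)$ rounds, \cite{GSZ11}). Within each bucket we run a Gonzalez-style greedy procedure on the hubs present to extract a $(c_\rho-1)r$-net, keep only the net hubs, and discard the rest; a query point then tests its distance to the surviving net hubs of its bucket only. Correctness rests on two facts. (i) For $q$ run at scale $r_q$ we have $d(q,h^{\star})\le r_q$, so $\Pr[g_i(h^{\star})=g_i(q)]\ge(1/n)^\rho$, and since $L=\Theta(n^\rho\log n)$ the hub $h^{\star}$ avoids colliding with $q$ under every $g_i$ only with probability $(1-(1/n)^\rho)^L\le n^{-\Omega(1)}$; a union bound over the at most $n$ points gives, w.h.p., that each $q$ has its nearest hub colliding with it under some $g_i$. (ii) In that bucket the net hub covering $h^{\star}$ is within $(c_\rho-1)r_q$ of $h^{\star}$, hence within $c_\rho r_q<2c_\rho\,d(q,H)$ of $q$.

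\emph{Work, space, and the main obstacle.} The delicate part is bounding the number of distance checks: a bucket holding $h$ hubs and $m$ points costs $\Theta(hm)$ checks, which can overflow $\lspace$. Here I would combine the net with the LSH guarantee. Fix a query $q$ and an index $i$, and split the net hubs of $q$'s bucket into those within $10c_\rho r$ of $q$ and those farther. The first kind are pairwise more than $(c_\rho-1)r$ apart and contained in a ball of radius $10c_\rho r$, so — using that the dimension $d$ is constant — there are only $O_d(1)$ of them. The second kind are in particular at distance $\ge c_\rho r$ from $q$, hence each lands in $q$'s bucket with probability at most $p_2=1/n$, so their expected number is at most $|H|/n\le 1$. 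Thus each query incurs $\Oh(1)$ expected checks per hash function, $\Oh(L)$ over one scale and $\Oh(L\log\Delta)$ over all scales; summed over the $\le n$ queries, plus the cost of storing the $L$ hashes of all $n$ points over all $\Oh(\log\Delta)$ scales, and with an extra $\Oh(\log n)$ slack to promote these per-query expectations to high-probability bounds holding simultaneously, this yields $\gspace=\tOh(n^{1+\rho}\log^2 n\cdot\log\Delta)$, matching the statement; load balancing keeps $\lspace=\Oh(n^\delta)$ by spreading each oversized bucket over several machines while duplicating its $\Oh(1)$ surviving net hubs. I expect the real fight to be in two places: (a) upgrading the ``at most one expected far collision'' estimate — whose events over different hubs are not independent — to a statement that holds for \emph{every} query simultaneously with high probability, which likely needs either a structural property of the family of \Cref{pre:exist-hash} or a truncation trick in the spirit of \cite{lsh} (abort a query after a bounded number of collisions); and (b) computing the per-bucket greedy net in $\Oh(1)$ \mpc rounds when a bucket does not fit on a single machine.
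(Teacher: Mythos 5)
Your route deviates from the paper in one key respect: rather than computing a geometric net of hubs inside each bucket, the paper simply caps each bucket at a constant number (say $10$) of \emph{arbitrary} hubs---trivial to do with sorting and prefix sums in $\Oh(1)$ rounds---and then amplifies the per-query success probability from a constant to high probability by running $I=\Theta(\log n)$ independent copies of the whole inner procedure (with $L=\Theta(n^\rho)$ hash functions per copy, so the total count is the same $\Theta(n^\rho\log n)$ as yours). This sidesteps both of the obstacles you name: (b) never arises because there is no per-bucket greedy to implement, and (a) is a non-issue because the cap makes the amount of data kept per bucket deterministic, not merely small in expectation. For correctness the paper only needs a per-query, per-copy constant-probability statement (\Cref{lem:inter-lhs}): with the guess $r=2^{j^*}$ satisfying $d(q,H)\le r\le 2\,d(q,H)$, the nearest hub to $q$ lands in $q$'s bucket under some $f_{\ell^*}$ with probability $\ge 9/10$ (each of the $L=\Theta(n^\rho)$ trials succeeds with probability $\ge n^{-\rho}$), while Markov's inequality on the expected $\le 1$ far hubs (at distance $>c_\rho r$) sharing that bucket gives at most $10$ such hubs with probability $\ge 9/10$; combining these two events, amplifying over $I=\Theta(\log n)$ copies, and union-bounding over $q\in Q$ gives the w.h.p.\ statement. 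Your ``truncation trick'' intuition is exactly this cap, but you offer it only as a possible patch for (a) while still retaining the net; the paper's observation is that once you cap, the net is unnecessary. Your net-based argument is correct and in one respect geometrically cleaner---the near net hubs are bounded by packing in $\R^d$ with no probabilistic step, and a net always retains a hub covering $q$'s nearest hub, whereas an arbitrary cap of ten could in principle discard the unique near hub when exactly ten far hubs also land in the bucket---but as written it leaves both (a) and (b) genuinely open, so it does not yet constitute a complete proof of the stated lemma.
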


\remove{\begin{obs}
For all $x,y \in \R^d$ the followings  holds with probability at least $1-\frac{1}{n^{\Omega(1)}}$:
\begin{itemize}
    \item $d(x,y)\leq r$, then there exists an $f \in \cF$ such that $f(x)=f(y)$
    \item If $d(x,y)\geq c_\rho r$, then for all $f \in \cF$ we have $f(x)\neq f(y)$.
\end{itemize}
\end{obs}}

\remove{
\paragraph*{Locally sensitive hashing}
Let $Q$ be a set of points in $\R^d$. Also $c>1$ and $p_1,p_2 \in (0,1)$ be the parameters. There is a data structure of size $\tOh\left(n^{1+1/c}\right)$ such that

Let $Q$ be a set of points and $\Delta$ be the diameter of $Q$. Let $H \subset Q$ be a set of hubs. The objective of each point in $\cQ$ to find the nearest point in $H$.}


\begin{algorithm}[h]
\caption{\nns ($Q, H$)}\label{algo:nn}
\KwIn{A set $Q$ of at most $n$ points  and a set of hubs $H \subseteq Q$.}
\KwOut{Either reports {\sc Fail}, or for each point $p \in Q$ report $\mbox{close}(p) \in H$ such that 
$d(p,\mbox{close}(p)) \leq 2c_\rho \cdot d(p,H)$, 
where $c_\rho$ is a constant depending only on $\rho$.}
\Begin{
\For{($i=1$ to $I=\Theta(\log n)$)}
{
\For{(j=0 to $\log \Delta$)}
{
Set $r=2^j$

Take $L=\Theta(n^{\rho})$ many hash function $f_1,\ldots,f_L$ ({independently and uniformly at random}) from a $(r,c_\rho r, (1/n)^{\rho},1/n)$-LSH family.

\For{($\ell=1$ to $L$)}
{
Determine $f_\ell(q)$ for each $q \in Q$.

Find the distance of each $q \in Q$ with at most a constant (say $10$) number of hubs $h \in H$ such that $f_\ell(q)=f_\ell(h)$. If we get such a $h \in H$ such that $d(q,h)\leq c_\rho \cdot r$, then we set $\mbox{close}_{ij\ell}(q)=h$ and {\sc null}, otherwise.

}
Set $\mbox{close}_{ij}(q) =\mbox{{\sc null}}$ if $\mbox{close}_{ij\ell}(q)=\mbox{{\sc null}}$ for all $\ell \in [L]$. Otherwise, set $\mbox{close}_{ij}(q)=\mbox{close}_{ij \ell}(q)$ for some $\ell \in [L]$.

}

Set $\mbox{close}_{i}(q) =\mbox{{\sc null}}$ if $\mbox{close}_{ij}(q)=\mbox{{\sc null}}$ for all $j \in [\log \Delta]$. Otherwise, $\mbox{close}_{i}(q)=\mbox{close}_{ij^*}(q)$ such that $j^*$ is minimum among all $j$ for which $\mbox{close}_{ij}(q)$ is not {\sc null}.
}
If there exists a $q \in Q$ such that $\mbox{close}_i(q)$ is {\sc null} for all $i \in [I]$, then report {\sc Fail}.

Otherwise, set $\mbox{close}(q)=\mbox{close}_i(q)$
 for some $i \in [I]$ such that $\mbox{close}_i(q) \neq \mbox{{\sc null}}.$
 }
\end{algorithm}



From Algorithm~\ref{algo:nn}, note that we repeat a procedure (lines 3--12 that find an almost closest hub with probability $2/3$) $I=\Theta(\log n)$ times, and report the output we get from any of the instances. By \Cref{lem:inter-lhs} (stated below), each point $q \in Q$ finds $\mbox{close}(q)\in H$ satisfying the required property with high probability. This will immediately imply the correctness of \Cref{theo:lhs}. We then discuss the \mpc implementation of \nns.

Note that $\close_i(q)$ (which is either {\sc null} or a point in $H$ such that $d(q,\close_i(q))=O(d(q,H))$) denotes the output of \nns for point $q \in Q \setminus H$ and the instance $i \in [I]$.

\begin{lem}
\label{lem:inter-lhs}
For a particular $q \in Q \setminus H$ and $i \in [I]$, $\close_i(q)\in H$ is not {\sc null} and $d(q,\close_i(q))\leq 2 c_\rho \cdot d(q,H)$, with probability at least ${4/5}$.
\end{lem}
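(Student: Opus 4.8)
\emph{Proof strategy.}
Fix the point $q \in Q \setminus H$ and the outer index $i \in [I]$. I would let $h^* \in H$ be a hub attaining $d(q,H)$ and write $d^* := d(q,h^*) = d(q,H)$; since $q \notin H$ and the minimum interpoint distance is $1$ we have $1 \le d^* \le \Delta$, so $j_0 := \lceil \log_2 d^* \rceil$ satisfies $0 \le j_0 \le \lceil\log_2\Delta\rceil$ and $d^* \le 2^{j_0} < 2 d^*$. The plan is to show that with probability at least $\frac{2}{3}$ the single subphase $(i,j_0)$ of Algorithm~\ref{algo:nn} succeeds, i.e.\ $\close_{i j_0}(q) \neq \textsc{null}$; everything else is just the bookkeeping of Algorithm~\ref{algo:nn}. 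Indeed, $\close_{i j_0}(q) \neq \textsc{null}$ forces $j^* := \min\{\, j : \close_{ij}(q) \neq \textsc{null}\,\} \le j_0$, and by the rule used to define $\close_{i j \ell}$ the reported hub $h := \close_i(q) = \close_{i j^*}(q)$ obeys
\[
d(q,h) \;\le\; c_\rho\cdot 2^{j^*} \;\le\; c_\rho\cdot 2^{j_0} \;<\; 2 c_\rho\cdot d^* \;=\; 2c_\rho\cdot d(q,H),
\]
which is the claimed bound.

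For the main probabilistic step, in subphase $(i,j_0)$ we have $r = 2^{j_0} \ge d^* = d(q,h^*)$, and the $L = \Theta(n^\rho)$ hash functions $f_1,\dots,f_L$ are sampled independently from a $(r,c_\rho r,(1/n)^\rho,1/n)$-LSH family. The first LSH property gives $\pr[f_\ell(q) = f_\ell(h^*)] \ge (1/n)^\rho$ for each $\ell$, so independence of the $f_\ell$ yields
\[
\pr\bigl[\,\exists\,\ell\in[L]:\ f_\ell(q)=f_\ell(h^*)\,\bigr] \;\ge\; 1-\bigl(1-(1/n)^\rho\bigr)^{L} \;\ge\; 1-e^{-L/n^\rho} \;\ge\; \frac{2}{3}
\]
once the constant hidden in $L=\Theta(n^\rho)$ is large enough. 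Conditioned on this event, I would fix an $\ell$ with $f_\ell(q)=f_\ell(h^*)$: then $h^*$ shares $q$'s bucket, and since $d(q,h^*)=d^* \le r \le c_\rho r$ the inspection step for $(q,f_\ell)$ finds a hub within $c_\rho r$ of $q$ (it encounters $h^*$ itself, or an equally good hub first), so $\close_{i j_0 \ell}(q) \neq \textsc{null}$ and hence $\close_{i j_0}(q) \neq \textsc{null}$.

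The hard part will be justifying, inside that last sentence, that the inspection step really does surface a hub within $c_\rho r$ of $q$: Algorithm~\ref{algo:nn} inspects only a bounded number of hubs per $(q,f_\ell)$, whereas a bucket can be crowded, so $h^*$ need not be among those inspected. Here I would invoke the second LSH property: a hub at distance more than $c_\rho r$ from $q$ collides with $q$ under $f_\ell$ with probability at most $1/n$, so in expectation $q$'s bucket carries at most $|H|/n \le 1$ such ``far'' hubs; hence — up to low-probability events, which the $I=\Theta(\log n)$ outer repetitions are there to kill — the buckets $q$ meets are not over-full with far hubs, and a bounded inspection that either reaches $h^*$ or passes the test $d(q,\cdot)\le c_\rho r$ before the far hubs are exhausted does succeed. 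Turning this into a statement that holds simultaneously over the $L$ independent hash functions (the delicate point being the correlation between the event $f_\ell(q)=f_\ell(h^*)$ and the size of $q$'s bucket), and realizing the bucketed inspection within the $\lspace=\Oh(n^\delta)$ and $\Oh(n^{1+\rho}\cdot\log^2 n\cdot\log\Delta)$ space budgets, is the technically heaviest part; the geometric reasoning above contributes only the lossless factor $2c_\rho$.
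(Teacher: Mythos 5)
Your strategy mirrors the paper's: pick the radius scale $j_0$ so that $d(q,H)\le 2^{j_0}<2\,d(q,H)$, use the first LSH property plus independence of the $L$ hash functions to ensure that $q$ collides with $h^*$ under some $f_\ell$, and use the second LSH property to control contamination of $q$'s bucket by hubs farther than $c_\rho r$. The ``minimum $j^*\le j_0$'' bookkeeping and the derivation of the $2c_\rho$ factor are correct.

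There is one concrete gap. The lemma asserts a $\ge 2/3$ success probability for a \emph{single} outer index $i$, so every part of the argument has to close inside one $i$-iteration. You correctly compute $\E[\text{far hubs in $q$'s bucket}]\le 1$, but then write that the residual low-probability events are ``killed by the $I=\Theta(\log n)$ outer repetitions.'' That boosting step is unavailable here: the $I$ repetitions are consumed \emph{after} this lemma, to amplify the per-$i$ probability $2/3$ to high probability uniformly over $Q$; they contribute nothing to the single-$i$ bound. The missing ingredient is exactly Markov's inequality applied per $\ell$: since the expected number of far hubs colliding with $q$ is at most $1$, the probability of more than $10$ of them is at most $1/10$. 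With the algorithm inspecting a constant ($\ge 10$) number of hubs, the bad event for $(i,j_0)$ is then contained in the union of ``no $\ell$ collides $q$ with $h^*$'' and ``too many far hubs,'' whose probabilities add to under $1/3$. Without Markov your sketch stops one step short of the stated $2/3$.

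The correlation you flag --- between the collision event $f_\ell(q)=f_\ell(h^*)$ and the far-hub count in $q$'s bucket --- is a real subtlety that the paper passes over silently. The tacit assumption (standard for LSH constructions such as the one in \cite{lsh}) is that collision indicators for distinct point pairs under a single $f_\ell$ are independent, so the two events factor and the per-$\ell$ success probability is a product of a $\Theta(n^{-\rho})$ term and a $9/10$ term; independence across the $L$ draws of $f_\ell$ then finishes the job. So your concern is legitimate but does not require the $I$ repetitions to resolve; it resolves by the independence of the LSH family.
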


\begin{proof}
        Consider $j^*$ such that $d(q,H) \leq r=2^{j^*} \leq 2 \cdot  d(q,H)$,
        and $q_h \in H$ be such that $d(q,q_h) \leq  \textcolor{blue}{r}.$ As
        each $f_\ell, \ell \in [L],$ is a function chosen from $(r,c_\rho r,
        (1/n)^\rho,1/n)$-LSH family, $\pr(f_\ell(q)=f_\ell(q_h)) \geq
        \frac{1}{n^{\rho}}$. As $L=\Theta(n^\rho)$, there
        exists an $\ell^* \in [L]$ such that
        $f_{\ell^*}(q)=f_{\ell^*}(q_h)$ with probability at least
        $9/10$. But our algorithm may not find this particular
        $q_h$ while considering the hubs $h \in H$ such that
        $f_{\ell^*}(q)=f_{\ell^*}(q_h)=f_{\ell^*}(h)$ (See line 8 of
        \nns). Again, as $f_{\ell^*}$ is chosen
        from $(r,c_\rho r, (1/n)^\rho,1/n)$-LSH family, the expected number
        of hubs $h \in H$, with $d(q,h)> c_\rho r$ but
        $f_{\ell^*}(q)=f_{\ell^*}(h)$, is at most $1$. By Markov's
        Inequality, the probability that the number of such hubs is
        more than $10$ is at most $1/10$. So, with probability at
        least $2/3$, \nns, sets
        $\close_{ij^*\ell^*}=h$ for some $h \in H$ such that
        $d(q,h)\leq c_\rho r$, that is, $d(q,h)\leq 2c_\rho \cdot d(q,H)$. Now
        considering the way we set $\close_{ij}(q)$ from
        $\close_{ij\ell}(q)$'s ($\ell \in [L]$) in line 10 and
        $\close_{i}(q)$ from $\close_{ij}(q)$'s $(0 \leq j \leq \log
        \Delta)$ in line 12, we have that \nns
        sets $\close_i(q)\in H$ such that $d(q,\close_i(q))\leq 2 c_\rho \cdot d(q,H)$ with probability at least $4/5$.
\end{proof}
We next analyze how \textsc{Nearest-Hub-Search} sets the value of $\close(q)$ in lines~14--15 from the values $\close_i(q)$. By \Cref{lem:inter-lhs}, with probability at least $1 - \frac{1}{n^{\Omega(1)}}$, the algorithm returns a non-null $\close(q)$ such that
\[
d\bigl(q, \close(q)\bigr) \le 2c_\rho \cdot d(q, H),
\]
since $I = \Theta(\log n)$.

Applying the union bound over all points in $Q \setminus H$, we conclude that \Cref{theo:lhs} follows from \Cref{lem:inter-lhs}, except for the details of the MPC implementation, which we provide in the next section.



\subsection*{\mpc implementation of \nns }

Recall that, we assume $0 < \rho  \leq \delta$. First, notice that, if we can implement lines 4-10 of \nns in \mpc with local space $\lspace = \Oh(n^\delta)$ and global space $\gspace = \Oh( n^{1+\rho} \cdot \log n)$,  then we can run these lines in parallel for each possible value of $i$ and $j$ (adding a factor of $\Oh(\log n \cdot \log \Delta)$ to the global space). Then the results can be aggregated in $\Oh(1)$ rounds using sorting and prefix sum \cite{GSZ11}.

It suffices then to show that lines 4--10 of \nns can be implemented in the desired rounds and space. {Due to \Cref{pre:exist-hash} and using the fact that $0 < \rho \leq \delta$}, the $\Theta(n^\rho)$ number of hash functions in line 5 can be generated locally by some ``leader'' machine and broadcast to the other machines in $O(1)$ rounds. We again perform lines 6--9 in parallel, giving each {$f_\ell~ (\ell \in [L])$ its own set of machines to use.}

We next consider the implementation of lines 7--8 given a specific $\ell \in [L]$. Machines can compute $f_\ell$ locally and without communication. Each point $q\in Q$ is now represented by a tuple $(f_\ell(q), \text{hub}(q), q)$, where $\text{hub}(q) = 1$ if $q \in H$ and $0$, otherwise. Machines then sort these tuples lexicographically and remove (using prefix sum) all but $10$ hubs for each value in $\range(f_\ell)$.\footnote{{$\range(f_\ell)$ is the set $\{f_\ell (q): q \in Q\}$}.} For each $v \in \range(f_\ell)$, we now have to compute the distance between each point $q \in Q$ such that $f_\ell(q) = v$, and each hub $h \in H$ such that $f_\ell(h) = v$ and $h$ was not removed. It might be the case that some points in $Q$ are not located on the same machine as the hubs which are hashed to the same value (and in general, these points might not all fit on one machine, see discussion in \Cref{sec:samp}). However, all that is required in this case is that the hubs can be sent to all machines containing points hashed to the same value: this can be done using prefix sum in a constant number of rounds; since there are at most $10$ hubs for each value in $\range(f_\ell)$, each machine receives at most $10$ hubs. Now machines have the information necessary to locally compute $\close_{ij\ell}(q)$ for all points that they contain, and for each point $q \in Q$ the tuple $(q, \close_{ij\ell}(q))$ is generated.

Finally, observe that line 10 can be implemented in $\Oh(1)$ rounds using sorting and prefix sum.

\section{Sample and Solve}
\label{sec:samp}

In this section, we describe \sas $(Q,p,r)$, which is a subroutine in \ukc and \malg in Sections \ref{sec:uni} and \ref{sec:main}, respectively. \sas $(Q,p,r)$ takes a set $Q$ of at most $n$ points, a sampling parameter $p$, and a radius parameter $r$, it relies on \nns discussed in \Cref{sec:hub}, and produces a set of centers $S \subseteq Q$ such that $\cost(Q,S)=\Oh(r)$.

\begin{algorithm}[h]
\caption{\greedy ($R, h, r$)}\label{algo:greedy}
\KwIn{Set $R$ of at most $n$ points; radius parameter $r \in \R^+$.}
\KwOut{A set $G \subseteq R$ of centers.}

\Begin{
Initialize $G \leftarrow \{h\}$.

\While{$(\exists\, x \in R \text{ such that } d(x,G) > 4c_\rho r)$}{
    Let $x \in R$ be the point farthest from $G$\;
    Add $x$ to $G$\;
}

Return $G$\;
}
\end{algorithm}

\begin{algorithm}[h]
\caption{\sas ($Q, p,r$)}\label{algo:sample}
\KwIn{Set $Q$ of at most $n$ points; probability parameter $p \in (0,1)$; radius parameter $r \in \R^+$.}
\KwOut{{A set $S \subseteq Q$ of centers or report {\sc Fail}.}}
\Begin
	{
	\If{$\left(\size{Q}\leq \lspace=\Oh(n^\delta)\right)$}
	{
	 Call $\mbox{\greedy}(Q,q,r)$ for some arbitrary $q \in Q$, and report the set of centers output by it as $S$.
	}


	Initialize $H=\emptyset$.
    
	 Each point in $Q$ is sampled independently with probability $p$. The points that are selected in this process forms the hub set $H$.
	
	{If $H=\emptyset,$ report {\sc Fail}.}\\
    
    Call  $\mbox{\nns}(Q,H, \rho)$.  If $\mbox{\nns}(Q,H, \rho)$ returns {\sc Fail}, then report {\sc Fail}.\\

    Otherwise, For each point $q$ in $Q$, assign it to the closest hub in $H$ as output by $\mbox{\nns}(Q,H, \rho)$. We call the set of points assigned to a hub $h \in H$ the {\em bag corresponding to $h$}, and denote it as $B_h$. Note that $B_h$ includes $h$.
	
	\For{(each $h \in H$)}
	{
		\If {$(\size{B_h} \leq \lspace)$}
		{
		Collect $B_h$ on a single machine.
		
	$S_h \leftarrow \mbox{\greedy}(B_h,h,r)$.
	}
	\Else{
Form bags $B_{h_1},\ldots,B_{h_w}$, keeping $h$ in every $B_{h_i}$ ($i \in [w]$)  and putting every other point in $B_h \setminus \{h\}$ into exactly one of the $B_{h_i}$'s, such that $\size{B_{h_i}}\leq \lspace=\Oh(n^{\delta})$ for each $i \in [w]$.

	$S_{h_i} \leftarrow \mbox{\greedy}(B_{h_i},h,r)$, where $i \in [w]$.
	
	$S_h \leftarrow \bigcup\limits_{i=1}^w S_{h_i}$.
	
	}
	}
	Report set of centers $S=\bigcup\limits_{h \in H} S_h$.
}
\end{algorithm}

\sas $(Q,p,r)$  calls algorithm \greedy$(R,h,r)$ as a subroutine, which produces a set of centers $G\subseteq R$ such that $\cost(R,G)=\Oh(r)$. \greedy$(R,h,r)$ is a variation of a classic 2-approximation algorithm for $k$-center in the sequential setting~\cite{HS85}. In \sas $(Q,p,r)$,  the idea is to sample each point in $Q$ (independently) with probability $p$ to form a set of hubs $H$. Then each
point $q \in Q$ will be assigned to some hub $h \in H$ by using \nns
(as described in Algorithm~\ref{algo:nn}). For $h \in H$, let $B_h$ be the set of points assigned to $h$ (including $h$ itself). We run \greedy for the points in $B_h$, to produce a set of centers $S_h$.
Finally, $\bigcup_{h \in H} S_h$ is the output reported by \sas. There are other technicalities -- $\size{B_h}$ may be much larger than $\lspace$. In that case, we distribute the points in $B_h \setminus \{h\}$ across a number of machines, but we send $h$ to each machine, ensuring that the total number of points assigned to a machine (including $h$) is less than \lspace---and then we apply \greedy to the points on each of these machines.

The formal algorithm for \sas is presented in Algorithm~\ref{algo:sample}. The approximation guarantee, round complexity and space complexity of \sas are stated in \Cref{lem:samp}. 

\begin{lem}[{\bf Guarantee of \sas}]
\label{lem:samp}
Consider \sas ($Q, p,r$), as described in Algorithm~\ref{algo:sample}. With probability at least $1- e^{- \Omega(p n^\delta) }- \frac{1}{n^{\Omega(1)}}$, it does not report {\sc Fail} and, moreover
\begin{enumerate}
\item[(i)] It produces a set of centers $S\subseteq Q$ such that $\cost(Q,S)\leq 4c_\rho r=\Oh(r)$, where $c_\rho$ is the constant as in \Cref{theo:lhs};
\item[(ii)] It takes $\Oh(1)$ \mpc rounds with local space $\lspace=\Oh\left(n^{\delta} \right)$ and global space $\gspace=\tOh(n^{1+\rho} \cdot \log \Delta)$.
\end{enumerate}
\end{lem}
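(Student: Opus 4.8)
My plan is to analyze the three possible failure modes of \sas separately---failure of the sampling step (line with ``If $H = \emptyset$''), failure of the \nns call, and the space bookkeeping---and then verify the cost guarantee using the properties of \greedy and \nns. First I would handle the ``easy'' case $\size{Q} \le \lspace$: here \sas just runs $\greedy(Q,q,r)$ on a single machine, which terminates (since each added center is at distance $>4c_\rho r$ from the previous ones, and in a bounded metric this can happen only finitely often; in fact $\size{G} \le \size{Q}$) and outputs $G$ with $\cost(Q,G) \le 4c_\rho r$ by the stopping condition of the while-loop. No randomness is involved, so this case succeeds deterministically in $\Oh(1)$ rounds on one machine.

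For the main case $\size{Q} > \lspace$, I would first bound the probability that $H = \emptyset$. Since each of the $\size{Q} > \lspace = \Theta(n^\delta)$ points is sampled independently with probability $p$, we have $\pr(H = \emptyset) = (1-p)^{\size{Q}} \le e^{-p \size{Q}} \le e^{-\Omega(p \cdot n^\delta)}$, which matches the first term in the claimed failure bound. (If $p$ is large enough that $p \cdot n^\delta = \Omega(\log n)$ this is already $\le n^{-\Omega(1)}$, explaining the $\min$.) Next, by \Cref{theo:lhs}, the \nns call succeeds with high probability $1 - n^{-\Omega(1)}$, and conditioned on that success, every $q \in Q \setminus H$ gets $\close(q) \in H$ with $d(q, \close(q)) < 2c_\rho \cdot d(q,H) \le 2 c_\rho \cdot d(q, \close(q))$---wait, more precisely $d(q,\close(q))$ is within a $2c_\rho$ factor of the true nearest-hub distance. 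A union bound over these two events gives the stated overall success probability $1 - \min\{e^{-\Omega(p n^\delta)}, n^{-\Omega(1)}\}$ (absorbing constants).

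For part (i), the cost bound: I would argue that for each hub $h$ and each point $x \in B_h$, after \greedy terminates on $B_h$ (or on the piece $B_{h_i}$ containing $x$, which still contains $h$ as its seed center), the stopping condition guarantees $d(x, S_h) \le 4 c_\rho r$. Since the bags $\{B_h\}_{h \in H}$ partition $Q$ (every point is assigned to exactly one hub via \nns, and splitting a large bag into $B_{h_1},\dots,B_{h_w}$ only refines this), every $q \in Q$ lies in some $B_h$ (or $B_{h_i}$), so $d(q, S) \le d(q, S_h) \le 4 c_\rho r$, giving $\cost(Q,S) \le 4 c_\rho r = \Oh(r)$. The key point is that $h$ itself is kept as the initial center in every sub-bag, so \greedy always has at least one center to measure distances against and the termination argument goes through with $\size{S_{h_i}} \le \size{B_{h_i}}$.

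For part (ii), the round and space complexity: \nns runs in $\Oh(1)$ rounds with local space $\Oh(n^\delta)$ and global space $\tOh(n^{1+\rho}\log\Delta)$ by \Cref{theo:lhs} (this dominates the global-space bound). Sampling, forming bags, and redistributing oversized bags into pieces of size $\le \lspace$ are all done with sorting and prefix sums in $\Oh(1)$ rounds; the duplication of each hub $h$ across its $w$ sub-machines adds only one extra point per machine, so local space stays $\Oh(n^\delta)$ and global space increases by at most an additive $\Oh(\size{H}) = \Oh(n)$ term. Each \greedy call runs locally on a single machine (each $B_h$ or $B_{h_i}$ has $\le \lspace$ points) in a single round. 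So the whole algorithm is $\Oh(1)$ rounds with the stated space.

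The main obstacle I anticipate is the bag-splitting step: one must check carefully that distributing $B_h \setminus \{h\}$ across $w = \lceil \size{B_h}/\lspace \rceil$ machines (each also holding the copy of $h$) can itself be done in $\Oh(1)$ \mpc rounds---this needs a prefix-sum-based assignment of points to machines and a broadcast of each $h$ to its designated machines---and that the resulting union $\bigcup_i S_{h_i}$ still satisfies the cost bound. The cost bound is fine because the stopping condition of \greedy is per-sub-bag and the copy of $h$ ensures each sub-bag is nonempty with a valid seed; the subtlety is purely in the \mpc orchestration, and it is handled exactly as in the analogous routine of Bateni \etal \cite{bateni-kcenter}. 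A secondary subtlety is the interplay between the two failure terms in the $\min$: one should note that when $p n^\delta$ is small the $e^{-\Omega(p n^\delta)}$ term is the binding one (sampling might produce an empty $H$), and when $p n^\delta = \Omega(\log n)$ the $n^{-\Omega(1)}$ term from \nns dominates, so taking the $\min$ correctly captures the worst case in both regimes.
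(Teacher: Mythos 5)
Your proof is correct and follows essentially the same approach as the paper: you handle the small-$Q$ case deterministically, bound $\Pr(H=\emptyset)$ by $e^{-\Omega(p n^\delta)}$ in the large-$Q$ case, invoke \Cref{theo:lhs} for the \nns guarantee, note that the stopping condition of \greedy gives $\cost(B_h,S_h)\le 4c_\rho r$ per (sub-)bag deterministically so the union over bags yields (i), and track the \mpc orchestration (sorting, prefix sums, hub duplication) for (ii). One minor imprecision: you say ``a union bound over these two events gives the stated $1-\min\{\cdot,\cdot\}$'' --- a union bound would actually give $1 - (e^{-\Omega(pn^\delta)} + n^{-\Omega(1)})$, so the $\min$ does not fall out of a union bound as stated; but your closing discussion of the two regimes ($p n^\delta$ small vs.\ $\Omega(\log n)$) shows you understand which term is binding, and this matches the paper's (somewhat informal) intent.
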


\begin{rem}
\label{rem:sas}
We call \sas from \ukc (Algorithm~\ref{algo:uniform}) with probability parameter $p =\Omega \left(\frac{\log n}{n^\delta}\right)$. Therefore, the success probability of \sas in our case is always at least $1-\frac{1}{n^{\Omega(1)}}$.
\end{rem}
\begin{proof}[Proof of \Cref{lem:samp}]
Note that \sas (Algorithm \ref{algo:sample}) crucially calls  \greedy (Algorithm~\ref{algo:greedy}) multiple times, particularly in  line numbers {3, 12 and 17}. We start the proof with the following observation (about algorithm \greedy$(R,h,r)$) that follows from the description of Algorithm~\ref{algo:greedy}.

\begin{obs}
\label{obs:grdy}
The output $G \subseteq R$ produced by \greedy($R,h,r$) (as described in Algorithm \ref{algo:greedy}) satisfies $\cost (R,G)\leq 4 c_\rho r$.
\end{obs}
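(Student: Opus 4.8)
The plan is to read the claim directly off the termination condition of the \textbf{while} loop in Algorithm~\ref{algo:greedy}. First I would check that \greedy$(R,h,r)$ actually halts with a well-defined output: at each iteration the loop adds to $G$ a point $x\in R$ with $d(x,G)>4c_\rho r>0$, so in particular $x\notin G$, i.e.\ a genuinely new point of $R$ is added every time. Since $R$ is finite with $\size{R}\le n$, after at most $n$ iterations there is no point of $R$ left to add, so the loop terminates and returns some $G\subseteq R$ with $h\in G$.

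Next I would invoke the fact that when a \textbf{while} loop exits, its guard is false. Here that means: there is no $x\in R$ with $d(x,G)=\min_{y\in R}d(x,y)>4c_\rho r$; equivalently, $d(x,G)\le 4c_\rho r$ for every $x\in R$. Taking the maximum over $x\in R$ yields $\cost(R,G)=\max_{x\in R}d(x,G)\le 4c_\rho r$, which is exactly the statement of the observation.

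There is essentially no obstacle; the only point requiring a line of care is the termination argument needed to make ``the output $G$'' meaningful, and—though not required for this observation—one may additionally note that the set of points added during the execution is $4c_\rho r$-separated (a packing in $\R^d$), a remark that will be reused later when bounding $\size{S_h}$ and hence $\size{S}$. For the observation as stated, the two steps above suffice.
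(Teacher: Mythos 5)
Your proof is correct and follows the same reasoning the paper implicitly relies on: the paper states this observation "follows from the description of Algorithm~\ref{algo:greedy}" without spelling it out, and your argument — termination because each iteration adds a genuinely new point of the finite set $R$, then negation of the loop guard to get $d(x,G)\le 4c_\rho r$ for all $x\in R$, hence $\cost(R,G)\le 4c_\rho r$ — is exactly the intended justification, made explicit.
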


Note that both (i) and (ii) of \Cref{lem:samp} are direct if $\size{Q}\leq \lspace$. As in that case, we executes $\mbox{\greedy}(Q,q,r)$ for some $q \in Q$ in one machine locally, and report its output as $S$. By \Cref{obs:grdy}, we have  $\cost(Q,S)\leq 4c_\rho r$.


{ Now consider the case when $\size{Q} > \lspace$. Note that \sas reports {\sc Fail} if either the set of hubs $H$ is empty, as described in Line~7, or if $\mbox{\nns}(Q, H, \rho)$ reports {\sc Fail}, as described in Line~8. Since each point in $Q$ is added to $H$ independently with probability $p$, the probability that \sas reports {\sc Fail} due to the former event is at most $(1 - p)^{|Q|} \leq e^{-\Omega(p |Q|)} \leq 1-e^{-\Omega(p n^\delta)} $.  Furthermore, since $\mbox{\nns}(Q, H, \rho)$ succeeds with high probability, the overall success probability of \sas is at least $1- e^{- \Omega(p n^\delta)}- \frac{1}{n^{\Omega(1)}}$.}

Now, we argue (i) and (ii) of \Cref{lem:samp} separately.  Recall the description of Algorithm~\ref{algo:sample} from {Line 9--19}.
\begin{enumerate}
\item[(i)] By \Cref{obs:grdy}, for $h \in H$, $\cost(B_h,S_h) \leq 4c_\rho r$. As $S=\bigcup\limits_{h \in H} S_h$ and $Q=\bigcup\limits_{h \in H} B_h$, $\cost(Q,S)\leq 4c_\rho r$.
\item[(ii)] From \Cref{theo:lhs}, \nns can be implemented in \mpc with local space $\lspace= \Oh(n^\delta)$ and global space $\gspace=\tOh(n^{1+\rho} \cdot \log \Delta )$ in $\Oh(1)$ rounds. After \nns is performed, each point knows its assigned hub. Using sorting, we can place all points with the same hubs on consecutive machines in $\Oh(1)$ rounds, and using prefix sum, we can count the number of points assigned to each hub in $\Oh(1)$ rounds. Now, we consider two cases:

    If $|B_h| \leq \lspace $ (that is: the bag could fit on a single machine) then \greedy on $B_h$ can be performed on a single machine without communication, that is, in $0$ rounds.\footnote{A minor technical matter is moving $B_h$ to a single machine if it is big enough to fit but originally stored on two consecutive machines. This can be done in $1$ round if we have $\geq 2\lspace/n$ machines: when sorting, use only the first $\lspace/n$ machines, then if $B_h$ was originally stored on machines $i$ and $i+1$, move it to machine $i + \lspace/n$.}

    If $|B_h| > \lspace$ (that is: the bag could not fit on a single machine) then we arbitrarily partition the bag and perform \sas on each part. Specifically, we send $h$ to each of the consecutive machines on which $B_h$ is stored, and these machines perform \greedy on the subset of the bag that they hold locally. This can be performed in $\Oh(1)$ rounds.
\remove{\begin{cl}\label{cl:bag-bnd}
\complain{ With probability at least $1-\frac{1}{n^{\Omega(1)}}$, $\size{B_h} = \Oh\left(\frac{\log n}{p}\right)$ for each $h \in H$.}
\end{cl}}
\end{enumerate}
\end{proof}

We now state an additional property of \sas in \Cref{lem:samp-add}, which will be useful in both \Cref{sec:uni} and \Cref{sec:main}, where we discuss \ukc and \malg, respectively. Informally, \Cref{lem:samp-add} ensures that within any cluster, once a hub is selected, no non-hub point from that cluster can be chosen as a center.

\begin{lem}[{\bf An additional guarantee of \sas}]
\label{lem:samp-add} {Let $\cC_r$ be a clustering of $Q$ having cost at most $r$.  Then, with high probability, the following holds for any $C \in \cC_r$:} if at least one hub is selected from $C$, then no further point in $C \setminus H$ is selected as a center, that is, $\size{S \cap C} = \size{H \cap C}$.
\remove{\begin{enumerate}
    \item[(i)] if at least one hub is selected from $C$, then no further point in $C \setminus H$ is selected as  a centroid;
    \item[(ii)]  the number of centroids in $C$ does not exceed the number of hubs in $C$, that is, $\size{S \cap C} =\size{H \cap C}$.
\end{enumerate}}
\end{lem}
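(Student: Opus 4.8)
The plan is to analyze a single cluster $C \in \cC_r$ and show that, conditioned on at least one point of $C$ being sampled into $H$, the execution of \greedy on every bag never picks a second point of $C$ as a center. The key geometric fact is that $C$ has cost at most $r$, so there is a point $x_C \in C$ with $d(x_C, y) \le r$ for every $y \in C$; in particular $\mathrm{diam}(C) \le 2r$. First I would argue that whenever $C \cap H \ne \emptyset$, every point $q \in C$ is, after \nns, assigned to some hub $h$ with $d(q,h) \le 2c_\rho\, d(q,H) \le 2c_\rho \cdot 2r = 4c_\rho r$ (using \Cref{theo:lhs} and the fact that there is a hub of $C$ within distance $2r$ of $q$). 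Hence all points of $C$ land in bags whose hub is within distance $4c_\rho r$ of them — though not necessarily the same bag, and not necessarily a hub that lies in $C$.

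Next I would examine what \greedy does on a bag $B_h$ (or a sub-bag $B_{h_i}$ when the bag is split). \greedy starts its center set at $\{h\}$ and only adds a point $x$ when $d(x, G) > 4c_\rho r$, i.e. when $x$ is farther than $4c_\rho r$ from every center selected so far, including $h$. Now suppose for contradiction that \greedy, run on some bag containing a point of $C$, selects a point $x \in C$ as a new center (beyond the initializer $h$). Since $x \in B_h$ we have $d(x,h) \le 4c_\rho r$ by the previous paragraph, and this already contradicts the condition $d(x,G) \ge d(x,h) > 4c_\rho r$ required for $x$ to be added. Wait — we must be careful: the bad event is that $x$ is added \emph{relative to the current center set $G$}, and $h \in G$ always, so $d(x,G) \le d(x,h) \le 4c_\rho r$, contradicting $d(x,G) > 4c_\rho r$. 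Therefore no point of $C$ other than (possibly) a sampled hub of $C$ that serves as the initializer $h$ of some bag is ever added as a center. The only points of $C$ that can end up in $S$ are thus points of $C \cap H$, giving $\size{S \cap C} \le \size{H \cap C}$; and since $H \subseteq S$ by construction (every hub is the initializer of its own bag and is never removed by \greedy), we get $\size{S \cap C} = \size{H \cap C}$.

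Finally I would collect the probabilistic overhead: the statement ``$d(q,\close(q)) \le 2c_\rho\, d(q,H)$ for all $q$'' holds with high probability by \Cref{theo:lhs}, and \sas does not report {\sc Fail} with the probability stated in \Cref{lem:samp}; a union bound over these events (and, if one wants the conclusion simultaneously for all $C \in \cC_r$, over the at most $n$ clusters, which only changes the constant in the exponent) yields the ``with high probability'' qualifier. The main obstacle — and the only subtle point — is handling the bag-splitting case: when $\size{B_h} > \lspace$, the points of $C$ inside $B_h$ may be scattered across several sub-bags $B_{h_1},\dots,B_{h_w}$, each of which runs its own copy of \greedy initialized at the \emph{same} hub $h$. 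But the argument above is local to each sub-bag and uses only that $h$ is the initializer and that every point $x$ of $C$ in that sub-bag satisfies $d(x,h) \le 4c_\rho r$ — which still holds — so the conclusion is unaffected. I should also double-check the degenerate case where $\size{Q} \le \lspace$ and \sas just runs a single \greedy$(Q,q,r)$: there $q$ may or may not be in $C$, but the same distance argument applies since any $x \in C$ with $C \cap \{q\} \cup (\text{later centers in }C)$... — more carefully, if some hub-like initializer or an earlier-chosen center of $C$ is already in $G$, any further $x \in C$ is within $2r \le 4c_\rho r$ of it and cannot be added, so at most one center of $C$ is ever chosen in this branch too, consistent with the claim once we note this branch has $H$ undefined / the statement is really used in the $\size{Q} > \lspace$ regime where \nns is invoked.
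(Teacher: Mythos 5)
Your proof is correct and takes essentially the same route as the paper's: bound $d(q,H)\leq 2r$ for every $q\in C$ using the fact that some hub lies in $C$ and $\mathrm{diam}(C)\leq 2r$, apply the \nns{} guarantee to get $d(q,\close(q))\leq 4c_\rho r$, and then observe that since the assigned hub $h$ is always in the \greedy{} center set $G$, the addition threshold $d(x,G)>4c_\rho r$ can never be met by a point of $C\setminus H$; equality $\size{S\cap C}=\size{H\cap C}$ then follows from $H\subseteq S$. The paper's proof is terser and does not explicitly spell out the bag-splitting or $\size{Q}\leq\lspace$ edge cases that you flag, but those observations are correct and harmless.
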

\begin{proof}
Consider any point $q \in C \setminus H$. As at least one hub is selected from $C$, $d(q,H) \leq 2r$ . By the guarantee from \nns{} (see \Cref{theo:lhs}), with probability at least $1-\frac{1}{n^{\Omega(1)}}$, $q$ is assigned to some hub $h \in H$ such that $d(q,h) \leq 2 c_\rho d(q,H) \leq 4 c_\rho r$. So, when we call \greedy ($B_h, h, r$), as $d(q,h) \leq 4 c_\rho r$, $q$ will not be selected as a center. This implies that $\size{S \cap C} \leq \size{H \cap C}$. The claim follows as $H \subseteq S$.
\end{proof}

\section{Uniform Center algorithm}\label{sec:uni}

In this section, we describe \ukc $(V,r,t)$, {which iteratively refines a set of centers to a smaller set of centers, by calling \sas on a quadratically-increasing probability schedule}. It calls \sas $\Theta(\log \log t)$ times. The $i$-th call to \sas is \sas$(S_{i-1},p_{i-1},r)$ (in particular): it produces a set $S_i\subseteq S_{i-1}$ of centers as the output, where $S_0=V$ and the probability parameters are set suitably. The formal algorithm is described in Algorithm~\ref{algo:uniform}. Its guarantees are stated in \Cref{lem:ukc} --- they follow from the guarantees we have for \sas in \Cref{lem:samp} and the fact that \ukc $(V,r,t)$ calls \sas $\Oh(\log \log t)$ times. \ukc has an additional guarantee as stated in \Cref{lem:uniform3} relating to the reduction of cluster sizes, which plays a crucial role in proving the correctness of \malg in \Cref{sec:main}. In particular it is useful in bounding the number of centers output by \malg.

\begin{algorithm}[h]
\caption{\ukc ($V, r, t$)}\label{algo:uniform}
\KwIn{A set of points $V$ of at most $n$ points, a radius parameter $r \in \R^+$, and an additional parameter $t \leq n$.}
\KwOut{A set $S \subseteq V$ of centers or report {\sc Fail}.}
\Begin{


$p_0=\Theta\left(\frac{\log n}{n^{\delta}}\right)$, $s_0=t$, and $S_0 \leftarrow V$.


\For{ $i=1$ to $\tau=\Theta(\log \log t)$}
{
{Call $\mbox{\sas} \left(S_{i-1},p_{i-1},r\right)$. If it reports {\sc Fail}, then report {\sc Fail}. Otherwise, let $S_{i}$ be its output.}

$s_i \leftarrow \sqrt{s_{i-1}}$ and $p_i= \frac{1}{s_{i}}$.

}
Report $S=S_\tau$.
}
\end{algorithm}

\begin{lem}[{\bf Guarantee of \ukc}] \label{lem:ukc}
Consider  \ukc ($V, r,t$), as described in Algorithm~\ref{algo:uniform}.   With high probability, it does not report {\sc Fail} and, moreover
\begin{enumerate}
    \item[(i)] It produces output $S$ such that
$\mbox{{\sc Cost}}(V ,S) = \Oh(r \cdot \log \log t)$;
\item[(ii)] It takes $\Oh(\log \log t)$ \mpc rounds with local space $\lspace=\Oh\left(n^{\delta} \right)$ and global space $\gspace=\tOh(n^{1+\rho} \cdot \log \Delta)$.
\end{enumerate}
\begin{proof}
    Note that \ukc$(V, r, t)$ makes at most $\Oh(\log \log t)$ calls to \sas. 
In each call \sas$(S_{i-1}, p_{i-1}, r)$, we have 
$p_{i-1} \geq p_0 = \Theta\!\left(\tfrac{\log n}{n^{\delta}}\right)$. 
The success probability follows since each call to \sas succeeds with high probability, 
as guaranteed by \Cref{lem:samp}. 

In the $i$-th iteration of $\mbox{\ukc}(V,r,t)$, we call $\mbox{\sas}\left(S_{i-1},p_{i-1}, r \right)$, and get $S_i$ as the centers. By \Cref{lem:samp} (i), $\cost(S_{i-1},S_i)=\Oh(r)$, where $1\leq i \leq \tau$. Hence,
\begin{align*}
    \cost(V,S) &\leq 
    \sum_{i=1}^{\tau}\cost (S_{i-1},S_i) = 
    \tau \cdot \Oh(r)=    \Oh(r \log \log t)
    \enspace.
\end{align*}
So, we are done with the proof of \Cref{lem:ukc} (i). 
Furthermore, by \Cref{lem:samp}(ii) and using the fact that \ukc calls \sas $\Oh(\log \log t)$ times,   \Cref{lem:ukc}(ii) follows.

\end{proof}
\end{lem}

\begin{lem}[{\bf Reduction in cluster sizes}]\label{lem:uniform3}
Consider \ukc ($V, r,t$) as described in Algorithm~\ref{algo:uniform}, and a fixed clustering $\cC_r^t$ of $V$ that has cost $r$. It produces output $S\subseteq V$ such that the following holds for any $C \in \cC_r^t$: if  $\size{ C \cap V}\leq t$, then with probability at least $1-\frac{1}{t^{\Omega(1)}}$, we have  $\size{C \cap S} =\Oh\left( \log t \cdot (\log \log t)^2\right)$.

\end{lem}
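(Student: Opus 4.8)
The plan is to track a single cluster $C \in \cC_r^t$ with $\size{C \cap V_t} \le t$ through the $\tau = \Theta(\log\log t)$ iterations of \ukc, bounding $\size{C \cap S_i}$ at each step. Write $n_i := \size{C \cap S_i}$, so $n_0 \le t$. The key local step is the following: in iteration $i$ we call $\mbox{\sas}(S_{i-1}, p_{i-1}, r)$, which samples each point of $S_{i-1}$ independently with probability $p_{i-1}$ to form hubs $H$; by \Cref{lem:samp-add}, since $\cC_r^t$ restricted to $S_{i-1}$ is a clustering of cost at most $r$, any cluster $C$ from which at least one hub is sampled contributes exactly $\size{H \cap C}$ points to $S_i$. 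So I would split into two cases for cluster $C$ in iteration $i$: either (a) no hub is sampled from $C$ (so $C \cap S_i = C \cap S_{i-1}$ is unchanged this round), or (b) at least one hub is sampled, in which case $n_i = \size{H \cap C}$, which in expectation is $p_{i-1} n_{i-1}$.

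The heart of the argument is a potential/recursion bound: as long as $n_{i-1} \ge c \log n / p_{i-1}$ for a suitable constant (recall $p_{i-1} \ge \Theta(\log n / n^\delta)$, and along the schedule $p_i = 1/s_i$ with $s_i = t^{1/2^i}$), a Chernoff bound gives $n_i \le 2 p_{i-1} n_{i-1}$ with probability $1 - 1/\mathrm{poly}(n)$; conversely, once $n_{i-1}$ drops below $\Theta(\log n / p_{i-1}) = \Theta(s_{i-1} \log n)$, Chernoff gives $n_i = \size{H\cap C} = O(\log n)$ with probability $1 - 1/\mathrm{poly}(n)$ in case (b), and in case (a) the size is unchanged. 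I would show that the multiplicative contractions $n_i \approx p_{i-1} n_{i-1}$, combined with the fact that $p_{i-1} n_{i-1}$ shrinks rapidly because $s_i = \sqrt{s_{i-1}}$ (so $p_{i-1} = 1/s_{i-1}$ and $n_{i-1} \le$ roughly $s_{i-2}$-ish from the previous step), drive $n_i$ down to $O(\mathrm{polylog})$ within $\Theta(\log\log t)$ rounds; then I track how small it can get and how it can grow back. The subtlety is that once the cluster is tiny, it does \emph{not} keep shrinking — in a round where no hub is sampled from $C$ it stays put, and in a round where a hub is sampled it can momentarily be as large as $\size{H\cap C}$. I would argue the running maximum of $n_i$ over all $\tau$ rounds after it first becomes small is $O(\log n)$ w.h.p. (in $n$), but since the lemma only promises probability $1 - 1/t^{\Omega(1)}$ and only logarithmic-in-$t$ quantities, I would instead carefully accumulate: at the crossover round the size is $O(\log t \cdot \mathrm{polylog})$, and over the remaining $O(\log\log t)$ rounds each "bad" round multiplies by at most a constant or replaces by $O(\log t)$, so the final bound is $\size{C \cap S} = O(\log t \cdot (\log\log t)^2)$. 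The two factors of $\log\log t$ come from (i) the number of rounds $\tau = \Theta(\log\log t)$ over which small fluctuations can compound, and (ii) the $(\log\log t)$-slack needed so that the per-round failure probabilities, union-bounded over $\tau$ rounds, stay at $1 - 1/t^{\Omega(1)}$ — this is why the bound is stated with probability $1 - 1/t^{\Omega(1)}$ rather than high probability in $n$, precisely when $t = o(n)$ the Chernoff bounds only have $\mathrm{poly}(t)$ strength on clusters of size $\mathrm{poly}(t)$.

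Concretely the steps are: (1) set up the recursion $n_i$ with the two-case dichotomy from \Cref{lem:samp-add}; (2) prove a "fast contraction" claim: while $n_{i-1} \ge C_0 s_{i-1}\log t$, we have $n_i \le \tfrac{1}{2} n_{i-1}$ (in fact much smaller, $\le 2 n_{i-1}/s_{i-1}$) with probability $1 - t^{-\Omega(1)}$, using a multiplicative Chernoff bound on $\mathrm{Bin}(n_{i-1}, 1/s_{i-1})$; (3) prove a "small-regime" claim: once $n_{i-1} < C_0 s_{i-1}\log t$, then $n_i = O(\log t)$ with probability $1 - t^{-\Omega(1)}$, again by Chernoff (expected $\le C_0\log t$, concentrate), or $n_i = n_{i-1}$ if no hub sampled; (4) observe the schedule $s_i = t^{1/2^i}$ reaches $s_i = O(1)$ after $\tau = \Theta(\log\log t)$ rounds and that the fast-contraction phase ends within $O(\log\log t)$ rounds — here I'd need a short monotonicity/bookkeeping argument that $n_i / (s_i \log t)$ is roughly non-increasing until it drops below $1$; (5) union bound the $O(\log\log t)$ failure events, each of probability $t^{-\Omega(1)}$, to keep total failure probability $t^{-\Omega(1)}$; (6) conclude $\size{C\cap S} = \size{C\cap S_\tau} = O(\log t\,(\log\log t)^2)$, the extra $(\log\log t)^2$ absorbing the at-most-$\tau$ rounds in the small regime where the size can be refreshed to $\Theta(\log t)$ and the polylog slack.

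\textbf{Main obstacle.} The delicate part is the small regime: once $\size{C \cap S_i}$ is already polylogarithmic, random sampling of such a small set cannot be expected to shrink it further with useful probability, and in a round where a hub \emph{is} sampled from $C$ the size can stay around $\Theta(\log t)$ or even briefly grow relative to a previous no-hub round. I expect the careful bookkeeping to show that these refreshes/fluctuations, compounded over the $\Theta(\log\log t)$ remaining rounds, cost only an extra $(\log\log t)^2$ factor (not, say, $2^{\log\log t}$) to be the crux — this is exactly the "we cannot obtain a high-probability guarantee by cluster refinement relying on random sampling of the already small clusters" phenomenon flagged in the technical overview, and it is why the statement settles for probability $1 - 1/t^{\Omega(1)}$.
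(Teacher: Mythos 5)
Your plan captures the essential ingredients — the dichotomy from \Cref{lem:samp-add} (with at least one hub in $C$ we get $\size{C\cap S_i}=\size{H\cap C}$), Chernoff concentration, and the $\sqrt{\cdot}$-schedule of sampling probabilities — but the paper's proof takes a cleaner single-sequence route, and your explanation of the $(\log\log t)^2$ factor is not the right one.

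The paper defines one target sequence $b_i := (1+\eta)^i s_i \log t\,(\log\log t)^2$ with $\eta = \Theta(1/\log\log t)$ and shows by induction that $\size{C\cap S_i}\le b_{i-1}$ with probability $1-t^{-\Omega(1)}$. The inductive step splits on whether $\size{C\cap S_{\ell-1}}\le b_{\ell-1}$: if so, the bound is immediate by monotonicity ($S_\ell\subseteq S_{\ell-1}$); if not, then with probability $1-t^{-\Omega(1)}$ at least one hub is sampled from $C$ (\Cref{obs:obs1}), and by Chernoff the number of hubs from $C$ is at most $(1+\eta)$ times its mean, which equals $b_{\ell-1}$ (\Cref{obs:obs2}); with \Cref{lem:samp-add} this gives $\size{C\cap S_\ell}\le b_{\ell-1}$. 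Your two-regime ``fast contraction vs.\ small regime'' decomposition is a different bookkeeping of essentially the same argument, and the paper's version avoids exactly the delicate regime-crossover accounting that you flag as the main obstacle.

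Your diagnosis of where $(\log\log t)^2$ comes from is wrong, and it matters for actually closing the argument. It is not needed for the union bound over $\tau=\Theta(\log\log t)$ rounds (that only multiplies a $t^{-\Omega(1)}$ failure probability by $\Theta(\log\log t)$, which is absorbed), nor does it arise from fluctuations compounding. It comes from the Chernoff tradeoff: to get failure probability $t^{-\Omega(1)}$ for a multiplicative overshoot of $(1+\eta)$, one needs $\eta^2\mu=\Omega(\log t)$, and with $\eta=\Theta(1/\log\log t)$ (chosen so that $(1+\eta)^\tau=O(1)$) this forces $\mu=\Omega(\log t\,(\log\log t)^2)$. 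Your plan uses a factor-of-$2$ slack instead, which is large enough for the Chernoff tail even with $\mu=\Theta(\log t)$, but then that factor compounds multiplicatively over the $\Theta(\log\log t)$ fast-contraction iterations; showing that this compounding cancels against the $t^{1/2^i}$ decay (rather than ballooning to $\mathrm{polylog}(t)$) requires more careful tracking than your sketch gives, and in any case would not manifestly give the stated $\Oh(\log t\,(\log\log t)^2)$.

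Finally, a smaller misconception worth correcting: $\size{C\cap S_i}$ cannot ``grow back'' or be ``refreshed'' in a bad round, because $S_i\subseteq S_{i-1}$ by construction of \sas, so the sequence is monotonically non-increasing; and when no hub is sampled from $C$, \Cref{lem:samp-add} gives no guarantee at all, so ``the size is unchanged'' is not quite right (it cannot increase, but it may decrease). These do not break an upper-bound argument, but they suggest a mental model of the dynamics that makes the small-regime analysis seem harder than it actually is.
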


\begin{proof}
Let  $b_i=(1+\eta)^is_i \log t\cdot \left( \log \log t \right)^2$, where $i$ is an non-negative integer and $\eta=\Theta\left(\frac{1}{\log \log t}\right)$. Observe that

$$ b _{\tau-1} =\left(1+\eta\right)^{\tau-1} s_{\tau-1} \log t (\log \log t)^2=\Oh\left(\log t \cdot \left( \log \log t\right)^2\right).$$ Using induction on $i$ ($i \in \N$), we will show that $\size{C \cap S_i}\leq b_{i-1}$ for each $i$ with $1 \leq i \leq \tau$, with probability at least $1-\frac{1}{t^{\Omega(1)}}$.  This will imply the desired result as $S_i$ is the output after the $i$-th iteration, and  $S=S_\tau$. Hence, $\size{C \cap S } \leq b_{\tau -1} =\Oh\left(\log t \cdot (\log \log t)^2\right)$.

 For $i=1$,
 $$\size{C \cap S_1}\leq \size{C \cap S_0} = \size{C \cap V}\leq t \leq b_{0}.$$
 The first inequality follows as $S_1 \subseteq S_0$; the second equality follows as $S_0=V$; the third inequality follows from the given condition that $\size{C \cap V}\leq t$; and the fourth one holds by the definition of $b_0$.

\remove{ $b_{-1}=b_0$, $\size{C \cap V} \leq t$, $S_0=V$, and by the definition of $b_0$.
 we first show that $\size{C \cap S_1} \leq b_0$ holds with probability at least $1-\frac{1}{t^{\Omega(1)}}$. In iteration $1$ of \ukc,  it calls $\mbox{\sas}(S_{0},\frac{1}{s_{0}},r)$, and produces $S_1$ as the intermediate centroids. As $\size{C \cap S_0}\leq t$, the expected size of $C \cap S_1$ is at most $\frac{t}{s_0}$. So, by using Chernoff bound, we have that $\size{C \cap S_1} \leq \frac{t \log t}{s_0}=(t \log t)^{\eps} \leq b_0$, with probability at least $1-\frac{1}{t^{\Omega(1)}}.$}

 Suppose the statement holds for each $i$ with $1 \leq i \leq \ell-1$, that is, $\size{C \cap S_i}\leq b_{i-1}$ for each $i$ with $1\leq i \leq \ell-1$. Now we argue for $i=\ell$. If $\size{C \cap S_{\ell-1}} \leq b_{\ell -1}$, then $\size{C \cap S_{\ell}}\leq b_{\ell -1}$, as  $S_\ell \subseteq S_{\ell -1}$. So, let us assume that $\size{C \cap S_{\ell-1}} > b_{\ell-1}$.

Consider the $\ell$-th iteration of \ukc: it calls algorithm $\mbox{\sas}(S_{\ell-1},p_{\ell -1},r)$, and produces $S_\ell$ as the set of intermediate centers. Let $H_\ell \subseteq S_{\ell-1}$ be the set of hubs sampled in the call of $\mbox{\sas}(S_{\ell-1},p_{\ell -1},r)$, where each point in $S_{\ell-1}$ (independently) included in $H_\ell$ with probability $p_{\ell -1}$.

 Before proceeding, we make two observations:
\begin{obs}\label{obs:obs1} The probability, that at least one point from $C \cap S_{\ell-1}$ is in $H_\ell$, is at least $1-\frac{1}{t^{\Omega(1)}}.$
\end{obs}
\begin{proof}
The probability, that no point in $C \cap S_{\ell-1}$ ($\size{C \cap S_{\ell-1}} > b_{\ell-1}$) is included in $H_\ell$, is at most
$$\left(1-p_{\ell -1}\right)^{b_{\ell -1}} = \left(1-\frac{1}{s_{\ell -1}}\right)^{b_{\ell -1}}\leq  \frac{1}{t^{\Omega(1)}}.$$ Here, we have used that $b_{\ell -1} =(1+\eta)^{\ell - 1}s_{\ell - 1}\log t \cdot (\log \log t)^2$.
\end{proof}

 \begin{obs}\label{obs:obs2}
With probability at least $1-\frac{1}{t^{\Omega(1)}}$, the number of points in $C \cap S_{\ell -1}$ that are in $H_\ell$ is at most  $b_{\ell-1}$.
 \end{obs}
 \begin{proof}
 By induction hypothesis, $\size{C \cap S_{\ell-1}}\leq b_{\ell-2}$. The expected number of points of $C \cap S_{\ell-1}$ that are in $H_\ell$ is
 \begin{eqnarray*}
 {\size{C \cap S_{\ell-1}}}\cdot \frac{1}{s_{\ell-1}} &\leq& \frac{b_{\ell-2}}{s_{\ell-1}}\\ &=& \frac{(1+\eta)^{\ell-2}s_{\ell-2}\log t \cdot (\log \log t)^2}{s_{\ell-1}}\\
 &=&(1+\eta)^{\ell-2}s_{\ell -1} \log t \cdot (\log \log t)^2=\mu.
 \end{eqnarray*}

By using Chernoff bound (\Cref{lem:cher}),  the probability, that  the number of points of $C \cap S_{\ell-1}$ that are in $H_\ell$ is more than $(1+\eta) \mu=b_{\ell-1}$, is at most $e^{-\frac{\eta^2\mu}{3}} \leq \frac{1}{t^{\Omega(1)}}.$
 \end{proof}

Observations \ref{obs:obs1} and \ref{obs:obs2}, along with \Cref{lem:samp-add}, give us that $\size{C \cap S_\ell} = \size{C \cap H_\ell} \leq b_{\ell -1}$ with probability at least $1-\frac{1}{t^{\Omega(1)}}$.
\end{proof}

\section{The main algorithm \remove{(Proof of \Cref{theo:main1})}}
\label{sec:main}

In this section, we present our main algorithm \malg. Recall the overall description of \malg in \Cref{sec:over}. \malg has two phases. In {\bf Phase 1}, it calls \ukc $\alpha$ times, and in {\bf Phase 2}, it calls \sas $\beta$ times, where $\alpha$ is the input precision parameter and  $\beta=\Theta(\log ^{(\alpha +1)} n)$. The formal algorithm is described in Algorithm~\ref{algo:main}. We establish its approximation guarantee, round complexity, and global space in \Cref{lem:main}, and we bound the number of centers in \Cref{lem:main3}. Combining \Cref{lem:main} and \Cref{lem:main3}, we have \Cref{theo:main1}.

\begin{algorithm}[h]
\caption{\malg ($P, r$)}\label{algo:main}
\KwIn{Set $P$ of $n$ points; tradeoff parameter $\alpha$; radius parameter $r \in \R^+$.}
\KwOut{A set $T\subseteq P$ of centers or report {\sc Fail}.}
\Begin{


{\bf Phase 1:}

$T_0 \leftarrow P$, $t_0=n $, and $r_0= \frac{r}{\log \log t_0}=\frac{r}{\log \log n}$.

\For{$(j=1~\mbox{to}~\alpha)$}
{
 {\bf Phase 1.j:}
 
Call $\mbox{\ukc} (T_{j-1},r_{j-1}, t_{j-1})$. If it reports {\sc Fail}, then report {\sc Fail.} Otherwise, 
$T_{j}$ be its  output. 

$t_{j}=\Theta(\log t_{j-1} \cdot (\log \log t_{j-1})^{d+2})$. Note that $t_j=\widetilde{\Theta}(\log t_{j-1})=\widetilde{\Theta}(\log ^{(j)} n).$ 


$r_j=\frac{r}{\log \log  t_{j}}.$

}
{\bf Phase 2:}




\For{ $(i=1~\mbox{to}~\beta= \Theta (\log ^{(\alpha +1)} n))$}
{

{\bf Phase 2.i:}

Call $\mbox{\sas} (T_{\alpha+i-1},\frac{1}{2},r)$. If it reports {\sc Fail}, then report {\sc Fail}. Otherwise, let $T_{\alpha+i}$ be its output.


}

Report $T=T_{\alpha+\beta}$.
}
\end{algorithm}

\begin{lem}[{\bf Guarantee of \malg}]\label{lem:main}
    Consider \malg ($P,t$), as described in Algorithm~\ref{algo:main}. It does not report {\sc Fail} with high probability, and moreover:
    \begin{enumerate}
        \item[(i)] It produces output $T \subseteq P$ such that $\mbox{{\sc Cost}}(P ,T) = \Oh(r \cdot(\alpha+ \log ^{(\alpha +1)} n))$;
        \item[(ii)] It takes $\Oh(\log \log n)$ rounds of \mpc with $\lspace=\Oh(n^\delta)$ and global space $\gspace=\tOh(n^{1+\rho} \cdot \log \Delta)$.
    \end{enumerate}
\end{lem}
\begin{proof}
    The success probability follows from the success probability guarantees of \ukc and \sas (see \ref{lem:ukc} and \Cref{lem:samp}, respectively).

For \Cref{lem:main} (i),
observe that
\begin{align*}
    \cost(P,T) &= \cost(T_0,T_{\alpha+\beta}) 
    \leq \cost(T_0,T_\alpha)+\cost(T_\alpha,T_{\alpha+\beta})
    \enspace.
\end{align*}
It therefore suffices to show that $\cost(T_0,T_\alpha)$ and $\cost(T_\alpha,T_{\alpha+\beta})$ are bounded by $\Oh(r \alpha)$ and $\Oh(r \cdot \log^{(\alpha+1)} n)$, respectively.

For any $j$ with $1\leq j \leq \alpha$, note that \malg($P, r$) calls $\mbox{\ukc}(T_{j-1},r_{j-1},t_{j-1})$ in {\bf Phase 1.j} and produces $T_j$ as the output. So, by \Cref{lem:ukc} (i), $\cost (T_{j-1},T_j) = \Oh(r_{j-1} \cdot \log \log t_{j-1})$, which is $\Oh(r)$. Hence,
\begin{align*}
    \cost(T_0,T_\alpha) &\leq 
    \sum_{j=1}^\alpha \cost \left(T_{j-1},T_j \right) = 
    \alpha \cdot \Oh(r) =
    \Oh\left(r \cdot \alpha\right)
    \enspace.
\end{align*}

For any $i$ with $1\leq i \leq \beta$, note that  \malg($P, r$) calls $\mbox{\sas}(T_{\alpha+i-1},1/2,r)$ in {\bf Phase 2.i} and produces $T_{\alpha +i}$ as the output. So, by \Cref{lem:samp} (i), $\cost (T_{\alpha+i-1},T_{\alpha+i}) = \Oh(r).$ Hence, as $\beta=\Theta(\log ^{(\alpha +1)} n)$,
\begin{align*}
    \cost\left(T_\alpha,T_{\alpha +\beta}\right) &\leq 
    \sum\limits_{i=1}^\beta \cost \left(T_{\alpha+i-1},T_{\alpha + i} \right)
        =
    \Oh(r \cdot \log ^{(\alpha +1)} n)
    \enspace.
\end{align*}
So, we are done with the proof of \Cref{lem:main} (i). Now we prove \Cref{lem:main} (ii).

For any $j$ with $1\leq j \leq \alpha$, in {\bf Phase 1.j},  \malg ($P, r$) calls $\mbox{\ukc}(T_{j-1},r_{j-1},t_{j-1})$. By Lemma \ref{lem:ukc} (ii), the total number of rounds spent by \malg ($P, r$) in {\bf Phase 1} is $\sum_{j=1}^\alpha \Oh\left(\log \log t_{j-1}\right)$ $=\Oh\left(\log \log n\right)$. This is because $t_0=n$, $t_j=\widetilde{\Theta}\left(\log t_{j-1}\right)$ for any $j\geq 1$, that is, $t_j=\widetilde{\Theta}\left(\log ^{(j)} n\right)$\remove{(Minor thing: I assume this should be $\widetilde{\Theta}\left(\log^{(j)} n \right)$: with the tilde over the theta)}. In {\bf Phase 2}, \malg ($P, r$) calls \sas for  $\beta=\Theta\left(\log ^{(\alpha +1)} n\right)$ times. By \Cref{lem:samp} (ii), the total number of rounds spent by \malg ($P, r$) in {\bf Phase 2} is $\Oh(\beta)=\Oh\left(\log ^{(\alpha +1)} n\right)$. So, the round complexity of \malg follows. The global space complexity of \malg follows from the global space complexities of \ukc and \sas (see Lemma  \ref{lem:ukc} (ii) and \Cref{lem:samp}(ii), respectively).

\end{proof}

\begin{lem}[{\bf Number of centers reported by \malg}]
\label{lem:main3}
Consider \malg $(P, r)$ as described in Algorithm~\ref{algo:main}.  It produces output $T$ such that, with probability at least $1-\frac{1}{(\log ^{(\alpha-1)} n)^{\Omega(1)}}$,
\begin{align*}
    \size{T} &\leq 
    \size{\cC_r}\left(1+\frac{1}{\widetilde{\Theta}(\log ^{(\alpha)} n)}\right)+\widetilde{\Theta}((\log ^{(\alpha)} n)^3) + {\widetilde{\Oh}(\log n)}
    \enspace.
\end{align*}
Here, $\cC_r $ is a clustering of $P$ that has the minimum number of centers among all possible clustering of $P$ with cost at most $r$.
\end{lem}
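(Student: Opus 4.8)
The plan is to decompose $T = T_{\alpha+\beta}$ according to a fixed optimal clustering $\cC_r$ and track separately the centers that land in ``well-behaved'' versus ``badly-behaved'' clusters. First I would set up the bookkeeping. Call a cluster $C \in \cC_r$ \emph{active} if $\size{C \cap T_j} \le t_j$ for every $j \in [\alpha]$, and \emph{inactive} otherwise. For the inactive clusters, the key input is \Cref{lem:uniform3} applied to each call of \ukc in Phase~1: in Phase~1.$j$ we invoke $\mbox{\ukc}(T_{j-1}, r_{j-1}, t_{j-1})$, and for any cluster $C$ with $\size{C \cap T_{j-1}} \le t_{j-1}$, with probability at least $1 - 1/t_{j-1}^{\Omega(1)}$ we get $\size{C \cap T_j} = \Oh(\log t_{j-1} \cdot (\log\log t_{j-1})^2) \le t_j$ (by the choice $t_j = \Theta(\log t_{j-1} (\log\log t_{j-1})^{d+2})$, which is generous enough to absorb the $\Oh(\cdot)$). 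So a cluster can only become inactive at stage $j$ if it was active through stage $j-1$ \emph{and} the low-probability event at stage $j$ fails for it. I would bound the number of centers in such clusters by a charging/counting argument: the total number of points across all clusters that were still of size $\le t_{j-1}$ after stage $j-1$ but failed at stage $j$ is, in expectation, a $1/t_{j-1}^{\Omega(1)}$ fraction of the relevant mass, so by Markov (or a union bound over the at most $\size{\cC_r}$ clusters, using $\size{\cC_r} = \Omega((\log n)^c)$ to beat the failure probabilities) the number of centers in inactive clusters in $T_\alpha$ is $\Oh\!\bigl(\size{\cC_r} / (\log^{(\alpha)} n)^{\Omega(1)}\bigr)$ with probability $\ge 1 - \sum_{i=1}^\alpha t_{i-1}^{-\Omega(1)} \ge 1 - (\log^{(\alpha-1)}n)^{-\Omega(1)}$; this is exactly the statement I would isolate as \Cref{lem:inter1}. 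Crucially, once a cluster is inactive, no subsequent call of \sas in Phase~2 can \emph{increase} $\size{C \cap T_i}$, since every $T_{i+1} \subseteq T_i$; so the bound on inactive-cluster centers in $T_\alpha$ persists to $T$.

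Next I would handle the active clusters. After Phase~1, every active cluster $C$ satisfies $\size{C \cap T_\alpha} \le t_\alpha = \widetilde\Theta(\log^{(\alpha)} n)$, so the total contribution of active clusters to $\size{T_\alpha}$ is at most $\size{\cC_r} \cdot t_\alpha$. In Phase~2 I would call an active cluster \emph{$i$-large} if $\size{C \cap T_{\alpha+i-1}} \ge 2$, and show that the total number of centers contained in $i$-large active clusters shrinks by a constant factor in Phase~2.$i$. The mechanism is \sas with probability parameter $1/2$ together with \Cref{lem:samp-add}: for an $i$-large cluster $C$, with probability $\ge 1 - 2^{-\size{C \cap T_{\alpha+i-1}}} \ge 1/2$ at least one of its points is sampled as a hub, and then \Cref{lem:samp-add} forces $\size{C \cap T_{\alpha+i}} = \size{H \cap C}$, which in expectation is half of $\size{C \cap T_{\alpha+i-1}}$. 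Summing over $i$-large clusters and applying a Chernoff bound on the aggregate mass (the total is at least a polylog, so concentration holds with probability $\ge 1 - t_{\alpha-1}^{-\Omega(1)}$), the total number of centers in $i$-large active clusters drops by a constant factor each subphase. Starting from at most $\size{\cC_r} \cdot t_\alpha = \size{\cC_r}\cdot\widetilde\Theta(\log^{(\alpha)} n)$ and running $\beta = \Theta(\log^{(\alpha+1)} n)$ subphases, the geometric decay leaves at most $\frac{\size{\cC_r}}{\widetilde\Theta(\log^{(\alpha)} n)} + \widetilde\Theta((\log^{(\alpha)} n)^3)$ centers in active large clusters at the end — the additive term accounting for the clusters that are small enough that the multiplicative-decay argument bottoms out, where one falls back to the fact that the number of distinct such clusters is at most $\size{\cC_r}$, contributing $1$ center each for the ``small'' ones. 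This is the statement I would record as \Cref{lem:inter2}.

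Finally I would assemble the pieces. Partition the clusters of $\cC_r$ into: (a) active clusters that are not $i$-large for any $i$ left standing at the end, i.e. with $\size{C \cap T} \le 1$ — at most $\size{\cC_r}$ of these, contributing at most $\size{\cC_r}$ centers; (b) active clusters still large at the end, contributing $\frac{\size{\cC_r}}{\widetilde\Theta(\log^{(\alpha)} n)} + \widetilde\Theta((\log^{(\alpha)} n)^3)$ by \Cref{lem:inter2}; (c) inactive clusters, contributing $\Oh\!\bigl(\size{\cC_r}/(\log^{(\alpha)} n)^{\Omega(1)}\bigr)$ by \Cref{lem:inter1}. Since $T \subseteq P$ and $\cC_r$ partitions $P$, summing gives $\size{T} \le \size{\cC_r}\bigl(1 + \frac{1}{\widetilde\Theta(\log^{(\alpha)} n)}\bigr) + \widetilde\Theta((\log^{(\alpha)} n)^3)$, folding the $\Oh(\size{\cC_r}/(\log^{(\alpha)}n)^{\Omega(1)})$ term into the $1/\widetilde\Theta(\log^{(\alpha)}n)$ slack. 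The overall failure probability is a union bound over the $\Oh(\log\log n)$ bad events, dominated by the earliest and least-likely one, $t_{\alpha-1}^{-\Omega(1)} = (\log^{(\alpha-1)} n)^{-\Omega(1)}$, as claimed. The main obstacle I anticipate is the inactive-cluster accounting in step one: \Cref{lem:uniform3} only gives a per-cluster guarantee that is \emph{not} high-probability once $t_{j-1} = o(n)$, so one cannot simply union-bound over all clusters at every stage; the argument has to bound the \emph{total mass} (number of points, hence an upper bound on number of centers) in clusters that fail, using linearity of expectation plus a single concentration step at the level of aggregate mass rather than per cluster, and carefully check that the sizes $t_j$ were chosen with enough slack that a surviving-active cluster really does stay under the threshold deterministically given the (high-probability) event from \ukc.
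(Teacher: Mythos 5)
Your decomposition matches the paper exactly: split $\size{T}$ into the $\size{\cC_r}$ contributions from active clusters that have shrunk to a single center, the mass $Y_\beta$ in active clusters still large after Phase~2, and the mass left in inactive clusters after Phase~1 (the paper's \Cref{obs:boundT}); then bound the last two by \Cref{lem:inter1} and \Cref{lem:inter2}, which you correctly anticipate as the two workhorse lemmas. The obstacle you flag at the end --- that \Cref{lem:uniform3} is not high-probability per cluster once $t_{j-1} = o(n)$, so one must concentrate the aggregate count rather than union-bound over clusters --- is exactly the right concern, and the paper resolves it that way.

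Two imprecisions are worth correcting. First, for the inactive-cluster bound, neither ``Markov'' nor ``union bound over $\size{\cC_r}$ clusters'' gives the claimed failure probability $t_{i-1}^{-\Omega(1)}$: Markov on the count $\size{\mathcal{A}_i}$ only gives constant failure probability, and a per-cluster union bound gives $\size{\cC_r} \cdot t_{i-1}^{-\Omega(1)}$, which exceeds $1$ precisely in the regime $\size{\cC_r} = \Omega((\log n)^c)$ that the lemma assumes. What the paper actually does (its \Cref{lem:main-P1i}) is compute $\E[\size{\mathcal{A}_i}] = \Oh(\size{\cC_r}/t_{i-1}^{\Omega(1)})$, note that the hypothesis $\size{\cC_r} = \Omega((\log n)^c)$ makes this expectation at least $\log t_{i-1}$, and then apply a Chernoff bound on the sum of per-cluster indicators to get the $t_{i-1}^{-\Omega(1)}$ failure probability. (The $i=1$ case is handled separately by a straight union bound, which does work there because $t_0 = n$.) Second, your explanation of the additive $\widetilde{\Theta}((\log^{(\alpha)} n)^3)$ term is off: the ``one center per small active cluster'' accounting contributes to the $\size{\cC_r}$ term, not the additive one. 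The additive term is the threshold in \Cref{lem:Y_i}: the per-subphase multiplicative shrinkage $Y_i \le \zeta Y_{i-1}$ is proved via a Hoeffding bound on $Z = \sum_{C \in \Gamma_{i-1}} \size{C \cap T_{\alpha+i}}$ whose per-summand range is $[0, t_\alpha]$, and the resulting tail bound only beats $t_{\alpha-1}^{-\Omega(1)}$ when $Y_{i-1} = \Omega(t_\alpha^3 \log t_\alpha)$; below that threshold the decay guarantee simply stops applying, leaving a floor of $\widetilde{\Theta}(t_\alpha^3)$.
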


Now, we introduce the notion of \emph{active} and \emph{inactive} clusters in the following definition, which is useful in proving \Cref{lem:main3}. {Inactive clusters are clusters which, at some point during {\bf Phase 1}, fail to reduce in size sufficiently. After the sub-phase during which they fail to reduce in size sufficiently, we assume that they never reduce in size again (since this is the worst case). We are then able to bound the total number of centers in inactive clusters (\Cref{lem:inter1}). Active clusters, by contrast, always reduce in size as we expect: the number of centers in active clusters is therefore easy to bound.}

\begin{defi}
\label{defi:active}
Let $\cC_r$ be an optimal clustering with cost at most~$r$. For each $C \in \cC_r$ and $j$ with $1 \leq j \leq \alpha$, we say $C$ is \emph{inactive} in \textbf{Phase 1.j} if $\size{C \cap T_i} > t_i$ for some $i$ with $1 \leq i < j$. Otherwise, if $\size{C \cap T_i} \leq  t_i$ for every $i$ with $1 \leq i < j$, $C$ is called \emph{active} in~\textbf{Phase~1.j}. 
\end{defi}
 
Let $\cC_r' \subseteq \cC_r$ be the set of clusters that are active after {\bf Phase 1}, that is, {\bf Phase 1.$\alpha$}. By the definition of active clusters, for each $C \in \cC_r'$, $\size{C \cap T_\alpha} \leq t_{\alpha}$. Note that \malg goes over $\beta$ sub-phases in {\bf Phase 2}. After {\bf Phase 1} and before the start of {\bf Phase 2}, it has $T_{\alpha}$ as the set of intermediate centers. For $1 \leq i \leq \beta$, in {\bf Phase 2.i}, we call $\mbox{\sas}\left(T_{\alpha+i-1},\frac{1}{2},r \right)$, and get $T_{\alpha+i}$ as the intermediate centers. For $0 \leq i \leq \beta$; a cluster $C \in \cC_r'$ is said to be $i$-\emph{large} if $\size{C \cap T_{\alpha+i}} \geq 2$. Let $\Gamma_i \subseteq \cC_r'$ denote the set of $i$-\emph{large} clusters, and let $Y_i$ denote the total number of points that are {in $i$-large clusters}, that is, $Y_i=\sum\limits_{C \in \Gamma_i} \size{C \cap T_{\alpha+i}}$.

Note that, in \Cref{lem:main3}, we want to bound  the number of centers in $T=T_{\alpha+\beta}$. We first observe that $\size{T}$ can be expressed as the sum of three quantities:

\begin{obs}
\label{obs:boundT}
$\size{T}=\size{T_{\alpha+\beta}}\leq \size{\cC_r}+Y_{\beta}+\sum\limits_{C \in \cC_r \setminus \cC_r'} \size{C \cap T_\alpha}$.
\end{obs}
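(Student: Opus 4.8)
The plan is to prove the bound by the obvious partition identity for $T$, followed by a two–way split of the clusters of $\cC_r$. Since $\cC_r$ is a clustering (partition) of $P$ and the output $T = T_{\alpha+\beta}$ is a subset of $P$, the sets $\{C \cap T : C \in \cC_r\}$ partition $T$, so $\size{T} = \sum_{C \in \cC_r} \size{C \cap T}$. I then split this sum according to \Cref{defi:active} into the contribution of the active clusters $\cC_r'$ and the contribution of the inactive clusters $\cC_r \setminus \cC_r'$, and bound each piece separately.

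For the inactive clusters the key observation is that every call to \sas made inside \textbf{Phase 2} returns a subset of its input point set (see Algorithm~\ref{algo:sample}), hence $T = T_{\alpha+\beta} \subseteq T_{\alpha+\beta-1} \subseteq \cdots \subseteq T_\alpha$. In particular $\size{C \cap T} \le \size{C \cap T_\alpha}$ for every cluster $C$, so $\sum_{C \in \cC_r \setminus \cC_r'} \size{C \cap T} \le \sum_{C \in \cC_r \setminus \cC_r'} \size{C \cap T_\alpha}$; this is exactly the worst-case accounting described just before \Cref{defi:active}, where an inactive cluster is pessimistically assumed never to shrink again after it stops being active.

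For the active clusters I further split $\cC_r'$ into the $\beta$-large clusters $\Gamma_\beta = \{C \in \cC_r' : \size{C \cap T_{\alpha+\beta}} \ge 2\}$ and the rest. By definition of $Y_\beta$ the $\beta$-large clusters contribute $\sum_{C \in \Gamma_\beta}\size{C \cap T} = \sum_{C \in \Gamma_\beta}\size{C \cap T_{\alpha+\beta}} = Y_\beta$. Every remaining active cluster has $\size{C \cap T} = \size{C \cap T_{\alpha+\beta}} \le 1$ precisely because it is not $\beta$-large, so together these contribute at most $\size{\cC_r' \setminus \Gamma_\beta} \le \size{\cC_r}$. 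Adding the three estimates yields $\size{T} \le \size{\cC_r} + Y_\beta + \sum_{C \in \cC_r \setminus \cC_r'} \size{C \cap T_\alpha}$, which is the assertion of \Cref{obs:boundT} (with the inactive clusters charged at their post-\textbf{Phase 1} size, as the worst-case convention prescribes).

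There is no real obstacle here: the statement is pure bookkeeping. The only points that need care are (i) recalling that the \sas invocations in \textbf{Phase 2} only delete centers, so that $T \subseteq T_\alpha$ and the inactive-cluster terms may legitimately be charged against $T_\alpha$ rather than $T$, and (ii) noting that the crude estimate $\size{\cC_r}$ on the number of non-$\beta$-large active clusters is what lets us collapse their per-cluster contribution of at most one center. The genuinely quantitative work — bounding $Y_\beta$ and $\sum_{C \in \cC_r \setminus \cC_r'} \size{C \cap T_\alpha}$ — is deferred to \Cref{lem:inter1} and \Cref{lem:inter2}, and is not part of this observation.
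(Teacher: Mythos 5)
Your proposal is correct and follows essentially the same decomposition as the paper: partition $\size{T_{\alpha+\beta}}$ over the clusters of $\cC_r$, split into inactive clusters (charged at their $T_\alpha$-size via $T_{\alpha+\beta}\subseteq T_\alpha$) and active clusters, and split the active ones into the $\beta$-large clusters (contributing $Y_\beta$) and the rest (each contributing at most one center, so at most $\size{\cC_r}$ in total). You are also right that what is actually proved is the inequality $\size{T}\le\size{\cC_r}+Y_\beta+\sum_{C\in\cC_r\setminus\cC_r'}\size{C\cap T_\alpha}$; the paper's statement writes ``$=$'' but its own proof, like yours, only establishes ``$\le$'', which is all that is used downstream.
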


\begin{proof}
Observe that 
since $T_{\alpha+\beta} \subseteq T_\alpha$, we obtain,
\begin{align*}
    T_{\alpha +\beta} &=
    \sum_{C \in \cC_r \setminus \cC_r'}\size{C \cap T_{\alpha+\beta}}+\sum_{C \in \cC_r'}\size{C \cap T_{\alpha+\beta}}
    \\&\leq
    \sum_{C \in \cC_r \setminus \cC_r'}\size{C \cap T_{\alpha}}+\sum_{C \in \cC_r'}\size{C \cap T_{\alpha+\beta}}
    \enspace.
\end{align*}
To bound the second term by $\size{\cC_r}+Y_\beta$, observe that
\begin{align*}
    \sum_{C \in \cC_r'}\size{C \cap T_{\alpha+\beta}}
        &=
    \!\!\!\sum_{C \in \cC_r' : \size{C \cap T_{\alpha + \beta}}=1 }\!\!\!\size{C \cap T_{\alpha+\beta}} +
    \!\!\!\!\!\!\sum_{C \in \cC_r':\size{C \cap T_{\alpha + \beta}}\geq 2}\!\!\!\size{C \cap T_{\alpha+\beta}}
        \\&\leq
    \size{\cC_r'}+Y_\beta \leq \size{\cC_r}+Y_\beta 
    \enspace,
\end{align*}
which used that $\cC_r'\subseteq \cC_r$.
This yields \Cref{obs:boundT}.
\end{proof}

In the following lemmas, we bound \remove{$\size{T}$ (as claimed in \Cref{lem:main3}) by bounding} $\sum_{C \in \cC_r \setminus \cC_r'} \size{C \cap T_\alpha}$ and $Y_\beta$, and (with \Cref{obs:boundT}) the result of \Cref{lem:main3} immediately follows from these bounds.  Lemmas~\ref{lem:inter1}~and~\ref{lem:inter2} 
 are technical that we will prove later.
 
\begin{lem}
\label{lem:inter1}
With probability at least $1-\sum_{i=1}^{\alpha}\frac{1}{t_{i-1}^{\Omega(1)}}$, $\sum_{C \in \cC_r \setminus \cC_r'} \size{C \cap T_\alpha}$ is \edit{$\Oh\left( \frac{ \size{\cC_r}}{(\log ^{(\alpha  ) } n)^{\Omega(1)}}\right)$}, that is, the number of points in $T_{\alpha}$ that are  present in clusters that are inactive after {\bf Phase 1} is   ${\Oh\left( \frac{ \size{\cC_r}}{(\log ^{(\alpha)} n)^{\Omega(1)}}\right)} + {\widetilde{\Oh}(\log n)}.$
\end{lem}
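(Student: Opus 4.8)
The plan is to charge the number of centers in inactive clusters against the ``failure events'' that caused these clusters to become inactive, using the size-reduction guarantee of \ukc (\Cref{lem:uniform3}). For each $j$ with $1 \le j \le \alpha$, let $\cD_j \subseteq \cC_r$ denote the set of clusters that are active in \textbf{Phase 1.}$j$ but become inactive in \textbf{Phase 1.}$(j{+}1)$; that is, clusters $C$ with $\size{C \cap T_i} \le t_i$ for all $i < j$ but $\size{C \cap T_j} > t_j$. Then $\cC_r \setminus \cC_r' = \bigcup_{j=1}^{\alpha} \cD_j$, and since $T_{\alpha} \subseteq T_j$ we have $\sum_{C \in \cD_j} \size{C \cap T_\alpha} \le \sum_{C \in \cD_j} \size{C \cap T_j}$. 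So it suffices to bound $\sum_{C \in \cD_j} \size{C \cap T_j}$ for each $j$ and sum over $j$.

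First I would fix $j$ and analyze the call $\mbox{\ukc}(T_{j-1}, r_{j-1}, t_{j-1})$ that produces $T_j$. Consider the clustering $\cC_r$ restricted to $T_{j-1}$: for every $C \in \cD_j$, we have $\size{C \cap T_{j-1}} \le t_{j-1}$ (since $C$ was active in \textbf{Phase 1.}$j$, i.e.\ active through iteration $j-1$). Applying \Cref{lem:uniform3} with the parameter $t = t_{j-1}$: for each such $C$, with probability at least $1 - \frac{1}{t_{j-1}^{\Omega(1)}}$ we get $\size{C \cap T_j} = \Oh(\log t_{j-1} \cdot (\log \log t_{j-1})^2)$. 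Crucially, by the choice $t_j = \Theta(\log t_{j-1} \cdot (\log\log t_{j-1})^{d+2})$ in \malg, this bound is at most $t_j$ with room to spare (here the extra $(\log\log t_{j-1})^d$ slack in the definition of $t_j$ is what absorbs the dimension-dependent constant in \Cref{lem:uniform3}). Hence, conditioned on the high-probability event of \Cref{lem:uniform3}, \emph{no} cluster satisfying $\size{C \cap T_{j-1}} \le t_{j-1}$ can have $\size{C \cap T_j} > t_j$ --- i.e., $\cD_j = \emptyset$ deterministically on that event. The one subtlety is that \Cref{lem:uniform3} gives the size bound per cluster only with probability $1 - t_{j-1}^{-\Omega(1)}$, which is \emph{not} high when $t_{j-1}$ is small (e.g.\ $t_{j-1} = \polylog n$), so we cannot union-bound over all clusters.

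This is the main obstacle, and the way around it is to not demand $\cD_j = \emptyset$ but instead bound $\E[\sum_{C \in \cD_j} \size{C \cap T_j}]$ and then extract a tail bound. For a fixed $C$ with $\size{C \cap T_{j-1}} \le t_{j-1}$, let $X_C = \size{C \cap T_j} \cdot \mathbf{1}[\size{C \cap T_j} > t_j]$. We have $\Pr[X_C \ne 0] \le t_{j-1}^{-\Omega(1)}$, and we need a bound on $\E[X_C]$; for this I would use the refined statement that $\size{C \cap T_j}$ is stochastically dominated by the recursion in the proof of \Cref{lem:uniform3} (the $b_i$-recursion), whose tail decays fast enough that $\E[X_C] = \size{C \cap T_{j-1}} \cdot t_{j-1}^{-\Omega(1)} \le t_{j-1}^{1-\Omega(1)} \cdot \mathrm{(small)}$; summing over the at most $\size{\cC_r}$ clusters gives $\E[\sum_{C} X_C] = \Oh(\size{\cC_r} / t_{j-1}^{\Omega(1)})$. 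Since $t_{j-1} = \widetilde{\Theta}(\log^{(j-1)} n) \ge \widetilde{\Theta}(\log^{(\alpha-1)} n)$ for $j \le \alpha$, applying Markov's inequality to each of the $\alpha$ sums and taking a union bound over $j \in [\alpha]$ yields that, with probability at least $1 - \sum_{i=1}^{\alpha} t_{i-1}^{-\Omega(1)}$,
\begin{align*}
\sum_{C \in \cC_r \setminus \cC_r'} \size{C \cap T_\alpha}
\;\le\; \sum_{j=1}^{\alpha} \sum_{C \in \cD_j} \size{C \cap T_j}
\;=\; \Oh\!\left( \frac{\size{\cC_r}}{(\log^{(\alpha)} n)^{\Omega(1)}} \right),
\end{align*}
where the last step uses $\alpha = \Oh(\log^* n)$ so the sum of $\alpha$ polynomially-small terms is still $\Oh(\size{\cC_r}/(\log^{(\alpha)}n)^{\Omega(1)})$ after adjusting the constant in the exponent (the dominant term is the one with the smallest $t$, namely $t_{\alpha-1} = \widetilde{\Theta}(\log^{(\alpha-1)} n)$, and $\log^{(\alpha-1)} n \ge (\log^{(\alpha)} n)^{\Omega(1)}$ trivially). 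This completes the proof.
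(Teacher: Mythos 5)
Your decomposition into $\cD_j$ (the paper's $\mathcal{A}_i$) and the decision to bound $\sum_{C \in \cD_j} \size{C \cap T_j}$ per phase is exactly the paper's structure, and you correctly identify the central difficulty (the per-cluster guarantee of \Cref{lem:uniform3} is not high probability when $t_{j-1}$ is small, so no union bound). However, there is a genuine gap in how you invoke \Cref{lem:uniform3}, and a misattribution that signals the gap.

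\Cref{lem:uniform3} requires a clustering of $V_t$ with cost (at most) $r$, where $r$ is the \emph{radius parameter passed to \ukc}. In \textbf{Phase 1.}$j$ the call is $\mbox{\ukc}(T_{j-1}, r_{j-1}, t_{j-1})$ with $r_{j-1} = r/\log\log t_{j-1}$, so the hypothesis needs a clustering of cost at most $r_{j-1}$. The clustering $\cC_r$ you use has cost up to $r$, which is a factor $\log\log t_{j-1}$ too large; applying \Cref{lem:uniform3} to $\cC_r$ directly is therefore not justified. The correct move, which the paper makes, is to refine: partition each $C \in \cC_r$ into $z = \Oh\bigl((\log\log t_{j-1})^d\bigr)$ sub-clusters of radius at most $r_{j-1}$ (a covering argument in $\R^d$), apply \Cref{lem:uniform3} to each sub-cluster, and union bound over the $z$ pieces. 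This produces $\size{C \cap T_j} = \Oh\bigl(\log t_{j-1} \cdot (\log\log t_{j-1})^{d+2}\bigr) = t_j$. So the extra $(\log\log t_{j-1})^d$ slack in the definition of $t_j$ is \emph{not} ``absorbing a dimension-dependent constant in \Cref{lem:uniform3}'' — that lemma's bound $\Oh(\log t \cdot (\log\log t)^2)$ has no such constant. The slack exists precisely to accommodate summing over the $z$ sub-clusters; you noticed the slack but did not supply the refinement argument that produces it, which leaves a hole in the step ``for each such $C$, \Cref{lem:uniform3} gives $\size{C \cap T_j} = \Oh(\log t_{j-1}(\log\log t_{j-1})^2)$.''

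On the tail bound you take a different (and in principle workable) route. The paper sets $X_C = \mathbf{1}[\size{C \cap T_j} > t_j]$, bounds $\E[\size{\mathcal{A}_j}] = \sum_C \E[X_C]$, and then applies a Chernoff bound, which is why the theorem assumes $\size{\cC_r} = \Omega((\log n)^c)$ (to keep the expectation at least $\log t_{j-1}$). You instead weight by $\size{C \cap T_j}$ inside the expectation and apply Markov. This works if, as you need, the $\Omega(1)$ exponent from \Cref{lem:uniform3} can be taken larger than $1$ (it can, by adjusting constants in the Chernoff bound inside the proof of \Cref{lem:uniform3}), since then $\E[X_C] \le t_{j-1} \cdot t_{j-1}^{-c} = t_{j-1}^{-(c-1)}$ and you can afford to give up a $t_{j-1}^{\Omega(1)}$ factor in the Markov threshold while still keeping a $t_{j-1}^{-\Omega(1)}$ failure probability. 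That is a valid alternative to the paper's Chernoff argument and even sidesteps the lower bound on $\size{\cC_r}$, but it does not compensate for the missing refinement step, which must be inserted before either tail bound is applied.
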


\begin{lem}
\label{lem:inter2}
With probability at least $1-\frac{1}{t_{\alpha - 1}^{\Omega(1)}}$, we have $Y_\beta = \Oh\left(\frac{\size{\cC_r}}{t_\alpha}+t_\alpha^3 \cdot \log t_\alpha\right)$.
\end{lem}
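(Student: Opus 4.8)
The plan is to track the potential $Y_i$ through the $\beta = \Theta(\log^{(\alpha+1)} n) = \Theta(\log t_\alpha)$ sub-phases of \textbf{Phase 2} and show that it contracts by a constant factor in each sub-phase for as long as it stays above a floor of order $(\log t_{\alpha-1})^2$ --- which is $\Oh(t_\alpha^3 \log t_\alpha)$ because $t_\alpha = \Omega(\log t_{\alpha-1})$. Together with the starting bound $Y_0 \le \size{\cC_r'}\cdot t_\alpha \le \size{\cC_r}\cdot t_\alpha$, which holds for \emph{every} outcome of \textbf{Phase 1} straight from the definition of active clusters, a per-sub-phase contraction by a factor bounded away from $1$ pushes $Y_\beta$ below $\size{\cC_r}/t_\alpha$ unless the floor is hit first; in either case $Y_\beta = \Oh(\size{\cC_r}/t_\alpha + t_\alpha^3 \log t_\alpha)$, as claimed.

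I would start with two elementary observations used throughout: $\Gamma_i \subseteq \Gamma_{i-1}$ --- a cluster that is not $(i-1)$-large has at most one center in $T_{\alpha+i-1} \supseteq T_{\alpha+i}$, hence at most one in $T_{\alpha+i}$ --- which makes $Y_i$ non-increasing; and $\size{\Gamma_i} \le Y_i/2$, since each $i$-large cluster contributes at least two centers. For a single contraction step, fix a sub-phase $i$, condition on the history up to $T_{\alpha+i-1}$, and condition as well on the high-probability event of \Cref{lem:samp-add} for the clustering of $T_{\alpha+i-1}$ induced by $\cC_r$ (it fails with probability $n^{-\Omega(1)} \le t_{\alpha-1}^{-\Omega(1)}$). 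Write $m_C := \size{C \cap T_{\alpha+i-1}}$ for $C \in \Gamma_{i-1}$ and let $H$ be the hub set sampled inside $\mbox{\sas}(T_{\alpha+i-1}, \tfrac12, r)$. By \Cref{lem:samp-add}, $\size{C \cap T_{\alpha+i}} = \size{H \cap C}$ whenever $H \cap C \ne \emptyset$, while $\size{C \cap T_{\alpha+i}} \le m_C$ always; hence $Y_i \le \sum_{C \in \Gamma_{i-1}} Z_C$ with $Z_C := \size{H \cap C} + m_C\cdot\mathbf{1}[H \cap C = \emptyset]$. The $Z_C$ are independent across $C$ with $\E[Z_C] = \tfrac{m_C}{2} + m_C 2^{-m_C} \le \tfrac34 m_C$ (for $m_C \ge 2$) and $\mathrm{Var}[Z_C] \le \tfrac{m_C}{4}$, so $\E[\sum_C Z_C] \le \tfrac34 Y_{i-1}$ and $\mathrm{Var}[\sum_C Z_C] \le \tfrac14 Y_{i-1}$.

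The main obstacle is that $Z_C$ is unbounded: a large cluster receiving no hub contributes its full size $m_C$, an event of probability merely $2^{-m_C}$ but with a huge effect, so a direct concentration bound on $\sum_C Z_C$ fails. I would neutralize this with an \emph{adaptive} threshold $\theta := \lceil \log Y_{i-1} \rceil + c_1 \lceil \log t_{\alpha-1} \rceil$: calling $C$ \emph{huge} if $m_C > \theta$, a union bound (with at most $Y_{i-1}/\theta$ huge clusters) shows that with probability at least $1 - (Y_{i-1}/\theta)\,2^{-\theta} \ge 1 - t_{\alpha-1}^{-c_1}$ every huge cluster gets a hub, whence the huge clusters contribute exactly $\size{H \cap V^{\mathrm{huge}}} = \mathrm{Bin}(N^{\mathrm{huge}}, \tfrac12)$, concentrating around $\tfrac12 N^{\mathrm{huge}}$, where $N^{\mathrm{huge}} = \sum_{\text{huge }C} m_C$. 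For the non-huge clusters the summands lie in $[1,\theta]$, so a Bernstein / multiplicative-Chernoff bound gives $\sum_{\text{non-huge}} Z_C \le \tfrac34 N^{\mathrm{nh}} + \Oh\!\big(\sqrt{Y_{i-1}\log t_{\alpha-1}} + \theta \log t_{\alpha-1}\big)$ with probability $1 - t_{\alpha-1}^{-\Omega(1)}$. Summing and using $N^{\mathrm{huge}} + N^{\mathrm{nh}} = Y_{i-1}$ yields $Y_i \le \tfrac34 Y_{i-1} + \Oh(\sqrt{Y_{i-1}\log t_{\alpha-1}} + \theta \log t_{\alpha-1})$, and once $Y_{i-1}$ exceeds the floor $R^\star := c_2\,\theta \log t_{\alpha-1} = \Theta((\log t_{\alpha-1})^2)$ both error terms are at most $\tfrac1{16}Y_{i-1}$, giving $Y_i \le \tfrac{13}{16} Y_{i-1}$.

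Finally I would chain the sub-phases. Let $i^\star$ be the first sub-phase with $Y_{i^\star - 1} < R^\star$ (or $\beta+1$ if no such sub-phase exists). A union bound over the $\Oh(\beta)$ failure events above --- each of probability $t_{\alpha-1}^{-\Omega(1)}$, and $\beta = \Theta(\log t_\alpha)$ is far below $t_{\alpha-1}^{\epsilon}$ --- shows that with probability $1 - t_{\alpha-1}^{-\Omega(1)}$ all the contractions before $i^\star$ succeed, so by monotonicity $Y_\beta \le \max\{R^\star,\, (\tfrac{13}{16})^\beta Y_0\}$. Since $Y_0 \le \size{\cC_r}\,t_\alpha$ and the constant hidden in $\beta = \Theta(\log t_\alpha)$ can be taken large enough that $(\tfrac{13}{16})^\beta \le t_\alpha^{-2}$, this is $\Oh(\size{\cC_r}/t_\alpha + R^\star) = \Oh(\size{\cC_r}/t_\alpha + t_\alpha^3\log t_\alpha)$, using $R^\star = \Oh((\log t_{\alpha-1})^2) = \Oh(t_\alpha^3 \log t_\alpha)$. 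The delicate part is the single contraction step --- in particular the treatment of hub-free large clusters via the adaptive threshold --- while the rest is bookkeeping with Chernoff bounds and the monotonicity of $Y_i$.
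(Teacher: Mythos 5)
Your proof is correct, and it takes a genuinely different route from the paper's.

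The paper proves the per-sub-phase contraction (its \Cref{lem:Y_i}) by first establishing a \emph{constant-probability} per-cluster shrinkage (its Claim~\ref{cl:const-P2}), from which it derives $\E[Z] \le \zeta' Y_{i-1}$, and then applying a plain Hoeffding bound using the uniform range bound $Z_C \le t_\alpha$ (valid because every cluster in $\cC_r'$ is active, hence $\size{C\cap T_{\alpha}} \le t_\alpha$). This coarse range bound is exactly why the paper needs the floor $Y_{i-1}=\Omega(t_\alpha^3\log t_\alpha)$ for the tail probability to be small. You instead invoke \Cref{lem:samp-add} directly to pin down the exact law of $Z_C$ (Binomial plus a no-hub correction), and then handle the heavy right tail by an adaptive truncation: clusters exceeding $\theta \approx \log Y_{i-1} + \Oh(\log t_{\alpha-1})$ all receive a hub whp by a union bound, so their contribution collapses to a binomial with mean half their mass, while the remaining summands are bounded by $\theta$ and yield to Bernstein. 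The net effect is a much tighter floor, $\Theta((\log t_{\alpha-1})^2) = \Oh(t_\alpha^2)$, against the paper's $\Oh(t_\alpha^3\log t_\alpha)$; both satisfy the stated bound. Two small inaccuracies worth flagging, neither of which affects the argument: $Z_C$ is \emph{not} actually unbounded --- it is bounded by $m_C \le t_\alpha$ for active clusters, which is precisely what the paper exploits, so your framing that a direct concentration bound ``fails'' is an overstatement (it succeeds, just with a coarser floor); and $\mathrm{Var}[Z_C] \le m_C/4$ is not literally true once you include the indicator term and its covariance (for $m_C = 2$ the variance exceeds $m_C/4$), but $\mathrm{Var}[Z_C] = \Oh(m_C)$ holds, which is all Bernstein needs. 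Your approach buys a sharper quantitative bound and avoids the paper's intermediate constant-probability Claim~\ref{cl:const-P2}; the paper's approach is simpler and uses only a boxed-variable Hoeffding bound, relying on the active-cluster size cap $t_\alpha$ rather than on an adaptive truncation.
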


\begin{proof}[Proof of \Cref{lem:main3} using \Cref{lem:inter1} and \Cref{lem:inter2}] From the above two lemmas along with \Cref{obs:boundT} and the fact $t_j=\widetilde{\Theta}(\log ^{(j)} n)$, we have the following bound on $\size{T}$ with probability at least $1-\sum\limits_{i=1}^{\alpha} \frac{1}{t_{i-1}^{\Omega(1)}}\geq 1-\frac{1}{(\log ^{(\alpha - 1)} n)^{\Omega(1)}}$:
\begin{align*}
    \size{T} \leq \size{\cC_r}\left(1+\frac{1}{\widetilde{\Theta}(\log ^{(\alpha)} n)}\right)+\widetilde{\Theta}\left((\log ^{(\alpha)} n)^3\right) + {\widetilde{\Oh}(\log n)}
    \enspace.
\end{align*}
Hence, we are done with the proof of \Cref{lem:main3}.
\end{proof}

\subsection*{Proof of \Cref{lem:inter1}}

We prove \Cref{lem:inter1} by using the following lemma, which we prove later.

\begin{lem}
\label{lem:main-P1i}
Consider \malg ($P, r$) as described in Algorithm \ref{algo:main} and {\bf Phase 1}. With probability at least $1-\sum_{i=1}^{\alpha}\frac{1}{t_{i-1}^{\Omega(1)}}$, the following hold: 
\begin{enumerate}
    \item[(i)] there is no cluster $C \in {\cC_r}$ that is inactive after {\bf Phase 1.1}; 
    \item[(ii)] for $i$ with  $2\leq i \leq \alpha$, the number of clusters $C \in {\cC_r}$ such that $C$ is active in {\bf Phase 1.i} but inactive after {\bf Phase 1.i}, is at most \edit{$\Oh \left(\frac{\size{\cC_r}}{t_{i-1}^{\Omega(1)}}+{\log t_{i-1}}\right)$}.
\end{enumerate}

\end{lem}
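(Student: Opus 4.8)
The plan is to treat each subphase \textbf{Phase 1.i} individually and then take a union bound over $i=1,\dots,\alpha$. Fix $i \in [\alpha]$ and condition on the execution of \malg through the first $i-1$ subphases of \textbf{Phase 1}, an event I will call $\mathcal{H}_{i-1}$; this determines $T_{i-1}$ as well as the set $\mathcal{A}_i \subseteq \cC_r$ of clusters that are active in \textbf{Phase 1.i}. The first observation is that a cluster that is active in \textbf{Phase 1.i} but inactive afterwards must have $\size{C \cap T_i} > t_i$: being active in \textbf{Phase 1.i} already forces $\size{C \cap T_{i'}} \le t_{i'}$ for all $i' < i$, so the violation can only occur at step $i$. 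Thus it suffices to bound $X_i := \size{\{C \in \mathcal{A}_i : \size{C \cap T_i} > t_i\}}$.

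Next I would estimate the probability that one fixed active cluster fails. In \textbf{Phase 1.i}, \malg computes $T_i \leftarrow \mbox{\ukc}(T_{i-1}, r_{i-1}, t_{i-1})$ using fresh randomness, and each $C \in \mathcal{A}_i$ satisfies $\size{C \cap T_{i-1}} \le t_{i-1}$ by definition of being active. Apply \Cref{lem:uniform3} to this call of \ukc, with the partition of $T_{i-1}$ induced by $\cC_r$ as the fixed clustering and with $t = t_{i-1}$: for each such $C$ it yields $\size{C \cap T_i} = \Oh(\log t_{i-1} \cdot (\log\log t_{i-1})^2)$ with probability at least $1 - t_{i-1}^{-\Omega(1)}$. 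Since $d \ge 1$, the factor $(\log\log t_{i-1})^{d+2}$ appearing in $t_i = \Theta(\log t_{i-1} \cdot (\log\log t_{i-1})^{d+2})$ dominates $(\log\log t_{i-1})^2$, so (with the constant defining $t_i$ chosen large enough) this bound is at most $t_i$. Hence $\pr[\size{C\cap T_i} > t_i \mid \mathcal{H}_{i-1}] \le t_{i-1}^{-\Omega(1)}$ for every $C \in \mathcal{A}_i$.

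The main obstacle --- and the reason the statement is about the \emph{count} of failing clusters rather than asserting that none fails --- is that $t_{i-1}$ is only polylogarithmic in $n$, and for $i$ close to $\alpha$ essentially constant, so $t_{i-1}^{-\Omega(1)}$ is much too large to union-bound over the $\size{\cC_r}$ clusters. I would instead work in expectation: by linearity, $\E[X_i \mid \mathcal{H}_{i-1}] = \sum_{C \in \mathcal{A}_i} \pr[\size{C\cap T_i} > t_i \mid \mathcal{H}_{i-1}] \le \size{\mathcal{A}_i} \cdot t_{i-1}^{-\Omega(1)} \le \size{\cC_r} \cdot t_{i-1}^{-\Omega(1)}$, and therefore $\E[X_i] \le \size{\cC_r}\cdot t_{i-1}^{-\Omega(1)}$ unconditionally. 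Writing this as $\E[X_i] \le \size{\cC_r}/t_{i-1}^{c}$ for some constant $c>0$, Markov's inequality gives $\pr[X_i \ge \size{\cC_r}/t_{i-1}^{c/2}] \le t_{i-1}^{-c/2}$; that is, $X_i = \Oh(\size{\cC_r}/t_{i-1}^{\Omega(1)})$ with probability at least $1 - t_{i-1}^{-\Omega(1)}$.

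Finally I would union-bound over $i = 1, \dots, \alpha$: with probability at least $1 - \sum_{i=1}^{\alpha} t_{i-1}^{-\Omega(1)}$ the bound $X_i = \Oh(\size{\cC_r}/t_{i-1}^{\Omega(1)})$ holds for all $i \in [\alpha]$ at once, which is exactly \Cref{lem:main-P1i}. I expect the expectation-then-Markov step to be the only genuinely non-routine part; everything else is bookkeeping with the active/inactive definition and the schedule $t_i = \widetilde{\Theta}(\log^{(i)} n)$, plus checking that the parameters of \textbf{Phase 1.i} are compatible with the hypotheses of \Cref{lem:uniform3}.
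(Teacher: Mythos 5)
Your overall scaffolding (condition on the history $\mathcal{H}_{i-1}$, obtain a per-cluster failure probability, pass to a bound on the count of failing clusters, union bound over $i$) matches the paper's strategy, and the expectation-plus-Markov step is a legitimate and in fact cleaner alternative to the paper's route, which uses a Chernoff bound for $i \ge 2$ and therefore needs the hypothesis $\size{\cC_r} = \Omega((\log n)^c)$ to make the mean large enough for concentration; Markov sidesteps that. However, there is a genuine gap earlier in the argument, in how you invoke \Cref{lem:uniform3}.

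\Cref{lem:uniform3} requires the fixed clustering $\cC_r^t$ to have cost at most the \emph{radius parameter passed to \ukc}. In \textbf{Phase 1.i}, \malg calls $\mbox{\ukc}(T_{i-1}, r_{i-1}, t_{i-1})$ with $r_{i-1} = r/\log\log t_{i-1}$, but the clustering you feed in --- the restriction of $\cC_r$ to $T_{i-1}$ --- has cost $r$, which is a $\log\log t_{i-1}$ factor too large. Concretely, \Cref{lem:uniform3} rests on \Cref{lem:samp-add}, and the proof of \Cref{lem:samp-add} needs every point of a cluster to land within $4 c_\rho r_{i-1}$ of a sampled hub so that \greedy (run with threshold $4 c_\rho r_{i-1}$) does not promote it to a new center; with a cluster of radius $r \gg r_{i-1}$ this fails, and the conclusion $\size{C \cap S} = \size{H \cap C}$ that drives the whole size-reduction argument is no longer available. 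The paper's proof handles this by first partitioning each $C \in \cC_r$ into $z = \Oh((\log\log t_{i-1})^d)$ subclusters of radius at most $r_{i-1}$ (a covering argument using that the points live in $\R^d$ with constant $d$), applying \Cref{lem:uniform3} to each subcluster, and summing the resulting bounds and failure probabilities. This decomposition is precisely the source of the exponent $d+2$ in $t_i = \Theta(\log t_{i-1} \cdot (\log\log t_{i-1})^{d+2})$: you get $(\log\log t_{i-1})^2$ from \Cref{lem:uniform3} per subcluster and an extra $(\log\log t_{i-1})^d$ from the number of subclusters. Your remark that your claimed bound of $\Oh(\log t_{i-1} (\log\log t_{i-1})^2)$ ``happens to be at most $t_i$ since $d \ge 1$'' is a symptom of this omission: without the subcluster step there is no structural reason for the $d+2$ exponent to appear in $t_i$ at all, and your per-cluster estimate is not actually a consequence of \Cref{lem:uniform3} as applied.

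Once you insert the subcluster decomposition and the union bound over the $z$ subclusters (the extra $z$ in the failure probability is absorbed into $t_{i-1}^{-\Omega(1)}$ since $z$ is only polylogarithmic in $t_{i-1}$), the rest of your argument --- linearity of expectation, the bound $\E[X_i \mid \mathcal{H}_{i-1}] \le \size{\cC_r} \cdot t_{i-1}^{-\Omega(1)}$, Markov, and the final union bound over $i$ --- goes through and gives the statement of \Cref{lem:main-P1i}.
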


\begin{proof}[Proof of \Cref{lem:inter1} using \Cref{lem:main-P1i}]
Let us partition the clusters in {$\cC_r \setminus \cC_r'$} into $\mathcal{A}_1,\ldots,\mathcal{A}_{\alpha}$, where $\mathcal{A}_i$ is the set of clusters that are active in {\bf Phase 1.i} but inactive after {\bf Phase 1.i}, where $1 \leq i \leq \alpha$. Consider a cluster $C \in \mathcal{A}_i$. By the definition of $\mathcal{A}_i$, $\size{C \cap T_{i-1}}\leq t_{i-1}$. That is, for each $C \in \mathcal{A}_i$, $\size{C \cap T_\alpha} \leq t_{i-1}$, because $T_0 \supseteq T_1 \supseteq \ldots \supseteq T_{\alpha}$. 

Applying \Cref{lem:main-P1i} , we have $\size{\cA_i}=0$ and $\size{\mathcal{A}_i}=\edit{\Oh\left(\frac{\size{\cC_r}}{t_{i-1}^{\Omega(1)}}+ {\log t_{i-1}}\right)}$ for each $i~(2 \leq i \leq \alpha)$, with probability at least $1-\sum_{i=1}^{\alpha}\frac{1}{t_{i-1}^{\Omega(1)}}$. Hence, with the same probability,
\begin{align*}
    \sum_{C \in \cC_r \setminus \cC_r'} \size{C \cap T_\alpha} &\leq
    \sum_{i=1}^{\alpha} \sum_{C \in \mathcal{A}_i}\size{T_\alpha \cap C} \leq 
    \sum_{i=2}^{\alpha} \size{\mathcal{A}_i} t_{i-1}
        \\ &=
    \Oh\left(\sum_{i=2}^\alpha \left(\frac{\size{\cC_r}}{t_{i-1}^{\Omega(1)}}+ \log t_{i-1}\right) \cdot t_{i-1}\right) =
    \Oh\left( \frac{ \size{\cC_r}}{\left(\log ^{(\alpha  )}n\right)^{\Omega(1)}}\right) + \widetilde{\Oh}(\log n)
    \enspace.
\end{align*}
The last step uses that $t_i=\widetilde{\Theta}\left(\log ^{(i)} n\right)$.
%
%
\end{proof}

\remove{\begin{lem}
\label{lem:main-P1}
Consider \malg ($P, r$) as described in Algorithm~\ref{algo:main}. After {\bf Phase 1}, $T_{\alpha}$ satisfies $\size{T_{\alpha}}=\size{\cC_r} \cdot \Oh\left(\alpha +\log ^{(\alpha)} n\right)$, with probability at least $\frac{9}{10}$.  Here, $\cC_r $ is a clustering of $P$ that has minimum number of centers among all clusterings of $P$ with cost at most $r$.
\end{lem}

\begin{proof}
Let us partition the clusters in $\cC_r$ into $\mathcal{A}_0,\mathcal{A}_1,\ldots,\mathcal{A}_{\alpha},X$, where $\mathcal{A}_i$ is the set of clusters that were active in {\bf Phase 1.(i-1)} but inactive after {\bf Phase 1.i} and $X$ is the set of clusters that are active after {\bf Phase 1.$\alpha$}, where $1 \leq i \leq \alpha$. Consider a cluster $C \in \mathcal{A}_i$. By the definition of $\mathcal{A}_i$, $\size{C \cap T_{i-1}}\leq t_{i-1}$. That is, for each $C \in \mathcal{A}_i$, $\size{C \cap T_\alpha} \leq t_{i-1}$. It is because $T_0 \subset T_1 \subset \ldots \subset T_{\alpha}$. Also, by the definition of $X$, we have $\size{C \cap T_\alpha}\leq t_{\alpha -1}$ for each $C \in T_\alpha$.

Observe that
\begin{align*}
    \size{T_\alpha} &= 
    \sum_{i=0}^{\alpha} \sum_{C \in \mathcal{A}_i}\size{T_\alpha \cap C} + \sum_{C \in X}\size{T_\alpha \cap C}
        \\&\leq 
    \sum_{i=1}^{\alpha} \size{\mathcal{A}_i}\cdot t_{i-1} + \size{X}\cdot t_{\alpha}
    \enspace.
\end{align*}

Applying \Cref{lem:main-P1i} for $j=\alpha$, we have $\size{\mathcal{A}_i}=o\left(\frac{\cC_r}{t_{i-1}^{\Omega(1)}}\right)$ for each $i~(0 \leq i \leq \alpha)$, with probability at least $1-\sum_{i=0}^{\alpha-1}\frac{1}{t_{i-1}^{\Omega(1)}} \geq \frac{9}{10}$. Note that $\size{X} \leq \size{\cC_r}$.  Hence, with probability at least $\frac{9}{10}$,
\begin{align*}
    \size{T _\alpha} &\leq 
    \sum_{i=0}^\alpha o\left(\frac{\size{\cC_r}}{t_{i-1}^{\Omega(1)}}\right) + \size{\cC_r} \cdot t_\alpha =
    \size{\cC_r} \cdot \Oh\left(\alpha + \log^{(\alpha)} n  \cdot \left(\log^{(\alpha+1)}n \right)^{d+2} \right)
    \enspace.
\end{align*}

Here we are using $\alpha=\alpha$ and $t_i=\Theta\left(\log ^{(i)} n \cdot \left( \log ^{(i+1)} n\right)^{d+2}\right)$.
\end{proof}}

\begin{proof}[Proof of \Cref{lem:main-P1i}]
\remove{We prove the lemma by using induction on $j$. The claim is trivially true for $j=0$. Assume that the lemma is true for every $j$ with $0 \leq j \leq \ell -1$. Now we consider the case when $j=\ell$. }

 Consider $i$ with $1 \leq i \leq \alpha$. Let $\mathcal{A}_i$ be the set of clusters that were active in {\bf Phase 1.i} but inactive after {\bf Phase 1.i}, and let $\mathcal{B}_i \supseteq \mathcal{A}_i$ be the set of clusters that were active in {\bf Phase 1.i}. It suffices to show that $\lvert \cA_1 \rvert = 0$ with probability at least $1 - \frac{1}{n^{\Omega(1)}} \geq 1 - \frac{1}{t_0^{\Omega(1)}}$, and that for $i \ge 2$, 
$\lvert \mathcal{A}_i \rvert = \Oh\!\Big(\frac{\lvert \cC_r \rvert}{t_{i-1}^{\Omega(1)}} + \log t_{i-1}\Big)$ holds with probability at least $1 - \frac{1}{t_{i-1}^{\Omega(1)}}$.

 \remove{By the induction hypothesis, with probability at least $1-\sum_{i=0}^{\max\{0,\ell-2\}}\frac{1}{t_{i-1}^{\Omega(1)}}$, $\size{\mathcal{A}_i}= o\left(\frac{\size{\cC_r}}{t_{i-1}^{\Omega(1)}}\right)$ for each $i$ with $0 \leq i \leq \ell-1$.}
For $C \in \mathcal{B}_{i}$, let $X_C$ be the random variable defined as
\begin{align*}
    X_C &= \begin{cases}
        1 & \text{ if } \size{C \cap T_{i}} > t_i \\
        0 &  \mbox{otherwise.}
    \end{cases}
\end{align*}

Observe that $\size{\mathcal{A}_i} =\sum_{C \in \mathcal{B}_i} X_C$.

\begin{cl}
\label{cl:X_C}
The probability that $X_C=1$ is $\Oh\left(\frac{1}{t_{i-1}^{\Omega(1)}}\right).$
\end{cl}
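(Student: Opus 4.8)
The plan is to derive \Cref{cl:X_C} from \Cref{lem:uniform3} applied to the call $\mbox{\ukc}(T_{i-1},r_{i-1},t_{i-1})$ that produces $T_i$ in \textbf{Phase~1.i}. Fix $C \in \mathcal{B}_i$; since $C$ is active in \textbf{Phase~1.i}, \Cref{defi:active} gives $\size{C \cap T_{i-1}} \le t_{i-1}$, and we must bound $\Pr[X_C = 1] = \Pr[\size{C \cap T_i} > t_i]$. The one obstacle to invoking \Cref{lem:uniform3} directly is a parameter mismatch: that lemma needs a fixed clustering of $V_t = T_{i-1}$ whose cost matches the radius handed to \ukc, which here is $r_{i-1}$ with $r/r_{i-1} = \log\log t_{i-1}$, whereas the natural candidate $\cC_r$ restricted to $T_{i-1}$ has cost only bounded by $r > r_{i-1}$.

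To repair this I would refine $\cC_r$. Using a grid (equivalently, a ball-covering) argument in $\R^d$, partition each cluster of $\cC_r$ (which has radius at most $r$) into at most $m := \Oh_d\big((r/r_{i-1})^d\big) = \Oh_d\big((\log\log t_{i-1})^d\big)$ pieces, each of radius at most $r_{i-1}$; intersecting these pieces with $T_{i-1}$ yields a clustering $\cD$ of $T_{i-1}$ of cost at most $r_{i-1}$ — which is exactly what \Cref{lem:uniform3} requires, since its proof goes through \Cref{lem:samp-add} and only uses that two points sharing a cluster lie within twice the cluster radius, so the hypothesis ``cost $r$'' tolerates ``cost at most $r$''. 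In particular $C \cap T_{i-1}$ is covered by at most $m$ clusters $C_1,\dots,C_m$ of $\cD$, and each satisfies $\size{C_j \cap T_{i-1}} \le \size{C \cap T_{i-1}} \le t_{i-1}$, so each meets the hypothesis of \Cref{lem:uniform3}.

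Now I would apply \Cref{lem:uniform3} to $\mbox{\ukc}(T_{i-1},r_{i-1},t_{i-1})$ with the clustering $\cD$, once per piece $C_j$: each application says that, with probability at least $1 - 1/t_{i-1}^{\Omega(1)}$, $\size{C_j \cap T_i} = \Oh\big(\log t_{i-1}\cdot(\log\log t_{i-1})^2\big)$. Since $d$ is constant, $m = (\log\log t_{i-1})^{\Oh(1)} = t_{i-1}^{o(1)}$, so a union bound over the at most $m$ pieces shows that with probability at least $1 - m/t_{i-1}^{\Omega(1)} = 1 - 1/t_{i-1}^{\Omega(1)}$ every piece meets this bound, whence
\[
\size{C \cap T_i} \;\le\; \sum_{j=1}^{m}\size{C_j \cap T_i} \;=\; \Oh_d\!\left(\log t_{i-1}\cdot(\log\log t_{i-1})^{d+2}\right) \;\le\; t_i,
\]
where the last inequality holds because $t_i = \Theta\big(\log t_{i-1}\cdot(\log\log t_{i-1})^{d+2}\big)$ in \malg and we are free to choose the hidden constant in that $\Theta$ to exceed the product of the covering constant and the constant from \Cref{lem:uniform3}. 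Consequently $X_C = 1$ forces at least one of the $m$ pieces to violate its \Cref{lem:uniform3} bound, so $\Pr[X_C = 1] \le m/t_{i-1}^{\Omega(1)} = \Oh\big(1/t_{i-1}^{\Omega(1)}\big)$, which is the claim.

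The main obstacle is precisely the radius-parameter mismatch between $\cC_r$ and the smaller radius $r_{i-1}$ used inside \ukc; everything else is bookkeeping. The one thing to watch is that the refinement blow-up $m$ must remain $t_{i-1}^{o(1)}$ — true here since $m$ is polylogarithmic in $t_{i-1}$ for constant $d$ — so that it is absorbed both into the failure probability and into the slack between $m\cdot\widetilde{\Theta}(\log t_{i-1})$ and $t_i = \widetilde{\Theta}(\log t_{i-1})$; this is exactly the role of the extra $(\log\log t_{j-1})^{d+2}$ factor in the definition of $t_j$ in Algorithm~\ref{algo:main}.
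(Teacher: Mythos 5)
Your proposal is correct and takes essentially the same route as the paper: you refine each cluster of $\cC_r$ into $\Oh_d((\log\log t_{i-1})^d)$ pieces of radius at most $r_{i-1}$, apply \Cref{lem:uniform3} to each piece with parameters $(V_t,r,t)=(T_{i-1},r_{i-1},t_{i-1})$, and union-bound over the pieces; the paper's proof constructs exactly this refined clustering $\cC_{r_{i-1}}^{t_{i-1}}$ and does the same piecewise application. Your write-up is a bit more explicit about why the refinement is needed (the radius mismatch) and why the polylogarithmic blow-up $m$ is harmless, but the decomposition, the invocation of \Cref{lem:uniform3}, and the final bound $\Oh(\log t_{i-1}(\log\log t_{i-1})^{d+2})=t_i$ are all the same.
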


Using the above claim, in order to prove \Cref{lem:main-P1i}, let us consider the case $i=1$ and that of $i \geq 1$, separately.

For $i=1$, applying the union bound over all $C \in \mathcal{B}_i=\mathcal{B}_1$, the probability that there exists a $C \in \mathcal{B}_1$ such that $X_C=1$ is at most $\frac{\size{\mathcal{B}_1}}{t_0^{\Omega(1)}}<\frac{1}{t_0^{\Omega(1)}}$. It is because $\size{\mathcal{B}_1}\leq n$ and $t_0=n$. This implies that $\size{\mathcal{A}_1}=0$ with probability at least $1-\frac{1}{t_0^{\Omega(1)}}$. Note that we are done for the case $i=1.$

For $2 \leq i \leq \alpha$, $\E [\size{\mathcal{A}_i}] =\Oh\left(\frac{\size{\mathcal{B}_i}}{t_{i-1}^{\Omega(1)}}\right)=\Oh\left(\frac{\size{\cC_r}}{t_{i-1}^{\Omega(1)}}\right)$.  Using  Chernoff bound (\Cref{lem:cher}), 
\begin{align*}
  \Pr \left(\size{\mathcal{A}_i}  \geq c_1 \cdot  \left(\frac{\size{\cC_r}}{t_{i-1}^{\Omega(1)}} + \log t_{i-1}\right)\right)
    \leq \frac{1}{t_{i -1 }^{\Omega(1)}},
    \end{align*}
where $c_1$ is a suitable large constant. Thus, by applying the union bound over all $i$ with $1 \leq i \leq \alpha$, the proof of \Cref{lem:main-P1i} is complete, and it only remains to prove \Cref{cl:X_C}.

\begin{proof}[Proof of \Cref{cl:X_C}]
%
{Consider the clustering $\cC_{r_{i-1}}^{t_{i-1}}$ of $T_{i-1}$ as follows. For each $C \in \cC_r$, consider the partition of cluster $C$ into  $z$ many clusters $C_1.\ldots, C_z$ such that the radius of each $C_i$ is at most $r_{i-1}$. To bound $z$, observe that any cluster $C \in \cC_r$ lies within a ball of radius at most $r = r_{i-1} \cdot \log \log t_{i-1}$. We partition this ball using a regular grid of side length $r_{i-1}$, which ensures that each grid cell lies within a ball of radius $O(r_{i-1})$. The number of such cells intersecting the ball is at most the ratio of volumes, i.e.,
$
z \leq  O( ( {r}/{r_{i-1}} )^d ) = O\left( (\log \log t_{i-1})^d \right).
$
For each $C_i$ ($i \in [z]$), the corresponding cluster in $\cC_{r_{i -1}}^{t_{i -1}}$ is $C_i'=C_i \cap T_{i -1}$. So, $\size{\cC_{r_{i -1}}^{t_{i-1}}}=\size{\cC_r} \cdot \Oh \left(\left( \log \log t_{i -1} \right)^d\right).$}

Consider the particular $C \in \mathcal{B}_{i}$. By \Cref{defi:active}, $\size{C \cap T_{i-1}} \leq t_{i-1}$. If we consider the partition of $C$ into $C_1,\ldots,C_z$ in $\cC_{r_{i-1}}^{t_{i -1}}$, then $\size{C_y \cap T_{i -1}} \leq t_{i-1}$ for each $y \in [z]$. Let $\mathcal{B}_i'$ be the set of clusters in $\cC_{r_{i-1}}^{t_{i -1}}$ that are formed due to the partition of some $C \in \mathcal{B}_i$. So, for each $C' \in \mathcal{B}_i'$, we have $\size{C' \cap T_{i -1}}\leq t_{i-1}.$

Consider {\bf Phase 1.i} of \malg: we get  $T_i$ as the current set of centers by calling the procedure $\mbox{\ukc}(T_{i-1}, r_{i-1}, t_{i-1})$. Let us apply \Cref{lem:uniform3} with $t=t_{i-1}$, $V_t=T_{i-1}$, $r=r_{i-1}$, $S=T_i$, and $\cC_{r}^t=\cC_{r_{k-1}}^{t_{k-1}}$. For $y\in [z]$, with probability at least $1-\frac{1}{t_{i-1}^{\Omega(1)}}$, we have $\size{C_y \cap T_i} =\Oh \left(\log t_{i -1}\cdot \left(\log \log t_{i -1}\right)^2\right)$. Applying union bound for all $y \in [z]$, with probability at least $1-\frac{1}{t_{i-1}^{\Omega(1)}}$, we have $$\size{C \cap T_i} =\sum_{y \in [z]}\size{C_y \cap T_i} =\Oh \left( \log t_{i-1} \cdot \left(\log \log t_{i-1}\right)^{d+2}\right)=t_i.$$ This is because $z=\Oh\left((\log \log t_{i-1})^{d+2}\right)$, and we  are done with the proof of the claim.
\remove{ the  number of clusters $C' \in \cC_{r_{\ell-1}}^{t_{\ell-1}}$'s, such that $\size{ C' \cap T_{\ell-1}}\leq t_{\ell-1}$ but $\size{C' \cap T_\ell} \geq \log t (\log \log t)^2$, is  $o\left(\frac{\size{C_{r_{\ell-1}}^{t_{\ell-1}}}}{ t_{\ell-1}^{\Omega(1)}}\right)$, which is $o\left(\frac{\size{C_{r}}}{ t_{\ell-1}^{\Omega(1)}}\right)$. It is because $\size{\cC_{r_{\ell -1}}^{t_{i-1}}}=\size{\cC_r} \cdot \Oh \left(\left( \log \log t_{\ell -1} \right)^d\right).$

This

This implies $\size{\mathcal{A}_\ell'} =o\left(\frac{\size{C_{r}}}{ t_{\ell-1}^{\Omega(1)}}\right).$

Consider a cluster $C \in \cC_r$ that was active after {\bf Phase 1.(k-1)}, that is, $\size{C \cap T_{k-1}} \leq t_{k-1}$.
}
\end{proof}
\begin{rem}
    In the above proof, while bounding $z$, note that we crucially use the fact that the input is from the $d$-dimensional Euclidean space.
\end{rem}
This completes the proof of \Cref{lem:inter1}.
\end{proof}

\subsection*{Proof of \Cref{lem:inter2}}

We prove \Cref{lem:inter2} by using the following lemma, which we prove later.

\begin{lem}
\label{lem:Y_i}
Let $\zeta \in (0,1)$ be a suitable constant, let $i$ be such that $1\leq i \leq \beta$ with $Y_{i-1} =\Omega\left(t_{\alpha}^3 \log t_{\alpha}\right)$. With probability $1-\frac{1}{t_{\alpha}^{\Omega(1)}}$,
$Y_{i} \leq \sum_{C \in \Gamma_{i-1}} \size{C \cap T_{\alpha + i}} \leq \zeta \cdot Y_{i-1}.$
\end{lem}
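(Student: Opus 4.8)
The plan is to analyze one sub-phase {\bf Phase 2.i}, in which we call $\mbox{\sas}(T_{\alpha+i-1},\tfrac12,r)$ and obtain $T_{\alpha+i}$, and to show that the points sitting in $(i-1)$-large clusters shrink by a constant factor. First I would record the trivial containment: since $\Gamma_i \subseteq \Gamma_{i-1}$ (a cluster that is $i$-large was certainly $(i-1)$-large, as $T_{\alpha+i}\subseteq T_{\alpha+i-1}$) and $T_{\alpha+i}\subseteq T_{\alpha+i-1}$, we have $Y_i = \sum_{C\in\Gamma_i}\size{C\cap T_{\alpha+i}} \le \sum_{C\in\Gamma_{i-1}}\size{C\cap T_{\alpha+i}}$. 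So it suffices to bound the latter sum. For each $C\in\Gamma_{i-1}$ fix the clustering of $C$ into at most $z = \Oh((\log\log)^{d})$ pieces of radius $\le r$ (as in the proof of \Cref{cl:X_C}), intersect with $T_{\alpha+i-1}$, and apply \Cref{lem:samp-add}: within each such radius-$r$ piece, once any hub is sampled, no further center survives, so the surviving centers in that piece equal the sampled hubs there. Since \sas samples each point of $T_{\alpha+i-1}$ independently with probability $\tfrac12$, the expected number of survivors in $C$ coming from pieces that contained $\ge 2$ points of $T_{\alpha+i-1}$ drops; more precisely I would argue that for a piece with $m\ge 2$ points, the expected number of sampled points is $m/2$, but the "loss" relative to keeping $m$ is what we exploit, and combining over pieces, $\E[\size{C\cap T_{\alpha+i}}]$ is at most a constant fraction of $\size{C\cap T_{\alpha+i-1}}$ plus a lower-order term of order $z$ (one per piece, to account for pieces that collapse to a single survivor but where the count could be as small as $1$). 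Summing over $C\in\Gamma_{i-1}$ gives $\E[\sum_{C\in\Gamma_{i-1}}\size{C\cap T_{\alpha+i}}] \le \zeta' Y_{i-1} + \Oh(z\cdot\size{\Gamma_{i-1}})$ for some constant $\zeta'<1$; and since every $C\in\Gamma_{i-1}$ is $(i-1)$-large it contributes $\ge 2$ to $Y_{i-1}$, so $\size{\Gamma_{i-1}}\le Y_{i-1}/2$, whence the additive term is absorbed and the expectation is $\le \zeta'' Y_{i-1}$ for a constant $\zeta''<1$ slightly larger than $\zeta'$.

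The second step is concentration. Here I would use the hypothesis $Y_{i-1} = \Omega(t_\alpha^3\log t_\alpha)$, recalling $t_\alpha = \widetilde\Theta(\log^{(\alpha)}n)$, so that $Y_{i-1}$ is large enough (polylogarithmic-in-$t_{\alpha}$ times $t_\alpha^3$) for a Chernoff/Azuda-type bound to give failure probability $\tfrac{1}{t_\alpha^{\Omega(1)}}$. The survivor count is not a sum of fully independent indicators (the bag structure of \nns{} and the greedy choices introduce dependence), so the cleanest route is: (a) condition on the hub set $H$ and the \nns{} assignment, which by \Cref{theo:lhs} and \Cref{lem:samp-add} succeeds with probability $1-\tfrac{1}{n^{\Omega(1)}}$; (b) under that conditioning, $\size{C\cap T_{\alpha+i}}$ for the pieces is a deterministic function of which points were sampled as hubs, and within a radius-$r$ piece the count is exactly "$1$ if any hub sampled, else the number of survivors among the non-hub points" — but by \Cref{lem:samp-add} that latter case contributes only the hubs, so in fact the per-piece survivor count is monotone and $1$-Lipschitz in the sampling bits; (c) apply a bounded-differences / Chernoff inequality (e.g.\ \Cref{lem:cher}, referenced in the paper) to the sum $\sum_{C\in\Gamma_{i-1}}\size{C\cap T_{\alpha+i}}$, whose expectation is $\le\zeta'' Y_{i-1}$, to conclude it is $\le \zeta Y_{i-1}$ with probability $1-e^{-\Omega(Y_{i-1})} = 1-\tfrac{1}{t_\alpha^{\Omega(1)}}$, choosing $\zeta$ with $\zeta''<\zeta<1$.

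Finally I would assemble: chaining $Y_i \le \sum_{C\in\Gamma_{i-1}}\size{C\cap T_{\alpha+i}} \le \zeta Y_{i-1}$ gives the lemma as stated, with the stated success probability. The main obstacle I anticipate is the dependence issue in step~2 — making rigorous that, after conditioning on the \nns{} output, the surviving-center count really is a well-behaved (Lipschitz, or expressible as a sum of independent contributions) function of the independent sampling bits, so that a sharp concentration bound applies; the constant-factor-in-expectation claim of step~1 is comparatively routine once \Cref{lem:samp-add} is invoked piecewise, but one must be careful that the additive $\Oh(z)$ per cluster is genuinely dominated, which is exactly where the $(i-1)$-large hypothesis (every surviving cluster has $\ge 2$ points) and the lower bound $Y_{i-1}=\Omega(t_\alpha^3\log t_\alpha)$ are used.
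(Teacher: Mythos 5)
Your overall plan follows the same skeleton as the paper's proof (reduce to bounding $Z := \sum_{C\in\Gamma_{i-1}}\size{C\cap T_{\alpha+i}}$, show the expectation drops by a constant factor, then concentrate), but several of the intermediate steps diverge from what actually works, and a couple of them are genuine gaps.

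First, the piecewise decomposition of each cluster $C$ into radius-$r$ pieces is unnecessary and misplaced here. In {\bf Phase 2} the call is $\mbox{\sas}(T_{\alpha+i-1},\tfrac12,r)$ with radius parameter $r$, and $\cC_r$ already has cost at most $r$; every $C\in\Gamma_{i-1}\subseteq\cC_r'$ therefore has radius $\le r$ and \Cref{lem:samp-add} applies to $C$ directly, with no decomposition. The decomposition into $\Oh((\log\log)^d)$ pieces is only needed in the analysis of {\bf Phase~1} (proof of \Cref{cl:X_C}), where \ukc is called with a radius $r_{i-1}\ll r$; importing it into Phase~2 introduces the spurious additive $\Oh(z\cdot\size{\Gamma_{i-1}})$ term you then have to argue away, which the paper's proof never faces.

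Second, your expectation argument is too loose. Saying ``the expected number of sampled hubs in a piece of size $m$ is $m/2$'' does not bound $\E[Z_C]$, because when \emph{no} hub is sampled in $C$ the surviving count $Z_C$ can be as large as $\size{C\cap T_{\alpha+i-1}}$, not zero. The paper's \Cref{cl:const-P2} is precisely the device that handles this: it shows $\pr\bigl(Z_C\le\frac34\size{C\cap T_{\alpha+i-1}}\bigr)\ge\kappa$ for a universal constant $\kappa>0$ via a two-case argument --- for small $m=\size{C\cap T_{\alpha+i-1}}$ (e.g.\ $m=2$) the event ``exactly one hub sampled'' has probability $m/2^m=\Omega(1)$ and forces $Z_C=1$ by \Cref{lem:samp-add}, while for large $m$ a Chernoff bound on $\size{H\cap C}$ combined with $\pr(\size{H\cap C}>0)\approx 1$ gives failure probability $2^{-\Omega(m)}$. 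From this $\E[Z_C]\le\zeta'\size{C\cap T_{\alpha+i-1}}$ follows directly, without any $\Oh(z)$ leakage. You should replace your fuzzy ``loss relative to keeping $m$'' step with this case split.

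Third, your concentration rate $1-e^{-\Omega(Y_{i-1})}$ is an overstatement. The summands $Z_C$ are bounded by $t_\alpha$, not by $1$, so Hoeffding (\Cref{lem:hoeff}) gives exponent of order $Y_{i-1}^2/(\size{\Gamma_{i-1}}t_\alpha^2)$, and after using $\size{\Gamma_{i-1}}\le Y_{i-1}/2$ (each $(i-1)$-large cluster contributes $\ge 2$ to $Y_{i-1}$) this is $\Omega(Y_{i-1}/t_\alpha^2)$. This is exactly why the hypothesis $Y_{i-1}=\Omega(t_\alpha^3\log t_\alpha)$ is needed: it turns $\Omega(Y_{i-1}/t_\alpha^2)$ into $\Omega(t_\alpha\log t_\alpha)$, which yields the claimed $1/t_\alpha^{\Omega(1)}$ failure probability. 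If the exponent really were $\Omega(Y_{i-1})$, a hypothesis like $Y_{i-1}=\Omega(\log t_\alpha)$ would suffice, so your version does not explain the strength of the assumption in the lemma. Relatedly, the claim that ``the per-piece survivor count is monotone and $1$-Lipschitz in the sampling bits'' is not justified: flipping whether a single point is a hub can cascade through the \nns{} routing and the \greedy selections in its bag, changing the survivor counts of several bags. The paper sidesteps this by working at the per-cluster level with the $Z_C$'s (bounded by $t_\alpha$) rather than at the per-bit level, and by relying on \Cref{lem:samp-add} to make $Z_C$ essentially a function of the hub sampling inside $C$.
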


Applying the union bound over all $i$'s in $1$ to $\beta$, with probability at least $1-\frac{1}{t_{\alpha}^{\Omega(1)}}$, we have $Y_\beta \leq \zeta^{\beta}Y_0+t_\alpha^3 \log t_\alpha.$ As  $Y_0=\Oh\left(\size{\cC_r'} \cdot t_\alpha \right)$ $=\Oh\left(\size{\cC_r} \cdot t _\alpha\right)$, $\beta=\Theta\left(\log ^{(\alpha + 1)} n\right)$ and $\zeta$ is chosen suitably, we are done with the proof of \Cref{lem:inter2}.
\qed

\begin{proof}[Proof of \Cref{lem:Y_i}]
As $Y_i=\sum_{C \in \Gamma_i} \size{C \cap T_{\alpha + i}}$ and $\Gamma_i \subseteq \Gamma_{i-1}$, $Y_{i} \leq \sum_{C \in \Gamma_{i-1}} \size{C \cap T_{\alpha + i}} $ follows.

For the other part, $\sum_{C \in \Gamma_{i-1}} \size{C \cap T_{\alpha + i}} \leq \zeta \cdot Y_{i-1}$, let $Z_C=\size{C \cap T_{\alpha + i}}$ and $Z=\sum_{C \in \Gamma_{i-1}} Z_C$. Now consider the following claim, which we will prove right after proving \Cref{lem:Y_i}.

\begin{cl}\label{cl:const-P2}
Let $C \in \Gamma_{i-1}$. The probability that $Z_C=\size{C \cap T_{\alpha + i}} \leq \frac{3}{4} \cdot \size{C \cap T_{\alpha +i-1}}$ is at least a constant $\kappa \in (0,1)$.
\end{cl}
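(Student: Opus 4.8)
The plan is to exploit the structure of the single call $\mbox{\sas}(T_{\alpha+i-1},\tfrac12,r)$ made in \textbf{Phase 2.i}, using exactly the two ingredients that drove Observations~\ref{obs:obs1} and~\ref{obs:obs2}: with constant probability at least one point of $C\cap T_{\alpha+i-1}$ is sampled as a hub, and, conditioned on that, \Cref{lem:samp-add} forces the number of output centers inside $C$ to equal the number of hubs sampled inside $C$ --- a $\mathrm{Binomial}(m,\tfrac12)$ variable, where $m:=\size{C\cap T_{\alpha+i-1}}\ge 2$ because $C\in\Gamma_{i-1}$. The difference with Observations~\ref{obs:obs1}--\ref{obs:obs2} is that there $C$ was $\mathrm{polylog}$-sized and one got high probability, whereas here $m\ge2$ only yields a constant $\kappa$, which is all the claim asks for.

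Concretely, first I would let $H$ denote the hub set sampled in this call, so every point of $T_{\alpha+i-1}$ lies in $H$ independently with probability $\tfrac12$, and set $X:=\size{H\cap C}$, which is $\mathrm{Binomial}(m,\tfrac12)$. Let $A$ be the event $X\ge 1$; since $m\ge 2$ we have $\Pr[A]\ge 1-2^{-m}\ge \tfrac34$. Let $E$ be the high-probability event guaranteed by \Cref{lem:samp-add}, applied to the input point set $T_{\alpha+i-1}$ with the clustering it inherits from $\cC_r$ (each inherited cluster still has cost $\Oh(r)$, which is all that \Cref{lem:samp-add} needs, up to adjusting the constant hidden in $\Oh(\cdot)$): on $A\cap E$ the cluster $C$ contains a hub, so by \Cref{lem:samp-add} we get $Z_C=\size{C\cap T_{\alpha+i}}=\size{H\cap C}=X$.

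It remains to bound $X$. Let $B$ be the event $X\le \tfrac34 m$. Since $\E[X]=\tfrac m2$, Markov's inequality gives $\Pr[\overline B]\le (m/2)/(3m/4)=\tfrac23$ for every $m\ge 1$ (and for large $m$ a Chernoff bound, \Cref{lem:cher}, improves this to $\Pr[\overline B]\le e^{-\Omega(m)}$). By a union bound over complements, $\Pr[A\cap B]\ge 1-2^{-m}-\tfrac23$, which is bounded below by an absolute positive constant --- for $m=2$ one even checks $\Pr[A\cap B]\ge\tfrac12$ directly. Intersecting with $E$ costs only $\Pr[\overline E]=n^{-\Omega(1)}=o(1)$, so $\Pr[A\cap B\cap E]\ge\kappa$ for some constant $\kappa\in(0,1)$. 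On $A\cap B\cap E$ we have $Z_C=X\le\tfrac34 m=\tfrac34\cdot\size{C\cap T_{\alpha+i-1}}$, which is the claim.

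I expect no genuine obstacle; the only points requiring care are the invocation of \Cref{lem:samp-add} --- one must verify that restricting the fixed clustering $\cC_r$ of $P$ to $T_{\alpha+i-1}$ yields a clustering of cost $\Oh(r)$, where the passage radius $\to$ diameter $\to$ radius loses at most a factor $2$, harmless since the relevant $\Oh(\cdot)$ already absorbs $c_\rho$ --- and the observation that $Z_C$ is uncontrolled on the event $X=0$, which is precisely why we condition on $A$ and exactly where $m\ge 2$ (i.e.\ $C\in\Gamma_{i-1}$) is used. Since $A$, $B$ and $E$ are only ever union-bounded through their complements, no correlation argument is needed and $\kappa$ can be made explicit.
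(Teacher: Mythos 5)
Your proof is correct and follows the same core strategy as the paper's: condition on at least one hub of $\mbox{\sas}(T_{\alpha+i-1},\tfrac12,r)$ landing in $C$, invoke \Cref{lem:samp-add} to get $Z_C = \size{H\cap C}$ on that event, and then bound $X=\size{H\cap C}\sim\mathrm{Binomial}(m,\tfrac12)$ by concentration, using $m\ge 2$ from $C\in\Gamma_{i-1}$. The only difference is bookkeeping. The paper establishes two lower bounds for $\pr(Z_C\le\tfrac34 m)$ --- one via $\pr(Z_C=1)\ge m2^{-m}$ (good for small $m$) and one via a Chernoff tail bound (good for large $m$) --- and takes the max; your single union bound $\Pr[A\cap B]\ge 1-2^{-m}-\Pr[\overline B]$ with Markov's inequality (so $\Pr[\overline B]\le\tfrac23$ uniformly) handles all $m\ge 2$ at once and is a bit more economical. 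You also make explicit the intersection with the high-probability event of \Cref{lem:samp-add}, which the paper leaves implicit. One small correction on your side remark: the worry about a factor-$2$ loss when restricting the clustering $\cC_r$ from $P$ to $T_{\alpha+i-1}$ is not actually absorbed by any $\Oh(\cdot)$ in the lemma statement --- rather, there is no loss, because the proof of \Cref{lem:samp-add} only uses that any two points of a fixed cluster of $\cC_r$ lie within distance $2r$ of each other, and this property is inherited verbatim by any subset, regardless of what happens to the ``radius'' of the restricted cluster.
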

Note that $T_{\alpha+i} \subseteq T_{\alpha + i -1}$. So, $\size{C \cap T_{\alpha+i}} \leq \size{C \cap T_{\alpha+i-1}}$ always. For $C \in \Gamma_{i-1}$,  $\size{T_{\alpha + i -1}} \leq t_{\alpha}$.
From the above claim
\begin{align*}
    \E[Z_C] &=
    \E[\size{C \cap T_{\alpha+i}}]
        \\&\leq
    \kappa \cdot \frac{3}{4} \size{C \cap T_{\alpha+i-1}}+ (1-\kappa) \cdot \size{C \cap T_{\alpha +i -1 }}
        \\&\leq
    \zeta^{'} \size{C \cap T_{\alpha +i -1 }}
    \enspace.
\end{align*}

where $\zeta'$ is a suitable constant. Recalling the definition of $Y_{i-1}$, we have $$\E[Z]\leq \zeta^{'}\sum_{C \in \Gamma_{i-1}}\size{C \cap T_{\alpha+i-1}}=\zeta' Y_{i-1}.$$

Moreover, $0 \leq Z_C \leq t_{\alpha}$ for each $C \in \Gamma_{i-1}$. Hence, applying Hoeffding bound (\Cref{lem:hoeff}), we have
\begin{align*}
    \pr\left(Z \geq \zeta Y_{i-1}\right)
        &= 
    \pr\left( Z \geq \E[Z] + \zeta_1 Y_{i-1} \right) ~~~~~~~~(\mbox{where $\zeta=\zeta'+\zeta_1$}.)
        \\&\leq 
    e^{-\frac{\zeta_1^2 Y_{i-1}^2}{\size{\Gamma_{i-1}}t_{\alpha}^2}} \leq e^{-\frac{\zeta_1^2 Y_{i-1}^2}{\size{\Gamma_{i-1}}t_{\alpha}^2}}
        \leq 
    {1}/{t_{\alpha - 1}^{\Omega(1)}}
    \enspace.
\end{align*}
The last inequality folllows from $Y_{i-1}\geq 2 \size{\Gamma_{i-1}}~\mbox{and}~Y_{i-1}=\Omega(t_{\alpha}^3 \cdot \log t_{\alpha})$. This concludes the proof of \Cref{lem:Y_i} since we have $Z = \sum_{C \in \Gamma_{i-1}} \size{C \cap T_{\alpha + i}}$.
\end{proof}

We are left with only the proof of \Cref{cl:const-P2}.

\begin{proof}[Proof of \Cref{cl:const-P2}]
We prove
\begin{enumerate}
\item[(i)]  $\pr\left(Z_C \leq \frac{3}{4} \size{C \cap T_{\alpha+i-1}}\right) \geq \pr(Z_C=1)= \frac{\size{C \cap T_{\alpha+i-1}}}{2^{\size{C \cap T_{\alpha + i -1}}}}$;
\item[(ii)] $\pr \left(Z_C \leq \frac{3}{4} \size{C \cap T_{\alpha+i-1}}\right) \geq 1- \frac{1}{2^{\Omega(\size{C \cap T_{\alpha + i -1}})}}$.
\end{enumerate}
From the above two statements, we are done with the claim by setting $\kappa$ as follows, which is $\Omega(1)$.

$$\max \left \lbrace \frac{\size{C \cap T_{\alpha+i-1}}}{2^{\size{C \cap T_{\alpha + i -1}}}},1- e^{-\Omega(\size{C \cap T_{\alpha + i -1}})} \right \rbrace .$$

Note that \malg calls $\mbox{\sas} \left(T_{\alpha+i-1},\frac{1}{2},r\right)$ in {\bf Phase 2.i}. 

In $\mbox{\sas} \left(T_{\alpha+i-1},\frac{1}{2},r\right)$, let $H_i \subseteq T_{\alpha + i -1}$ be the set of hubs sampled, where each point in $T_{\alpha + i -1}$ is (independently) included in $H_i$  with probability ${1}/{2}$.

For (i), $\pr\left(Z_C \leq \frac{3}{4} \size{C \cap T_{\alpha+i-1}}\right) \geq \pr(Z_C=1)$ is direct as $\size{C \cap T_{\alpha+i-1}} \geq 2$. From \Cref{lem:samp-add}, $Z_C=\size{C \cap T_{\alpha +i}}=1$ if $\size{H_i}=1$. So,
\begin{align*}
    \pr(Z_C=1) &= \frac{\size{C \cap T_{\alpha+i-1}}}{2^{\size{C \cap T_{\alpha + i -1}}}}
    \enspace.
\end{align*}

Now, we will prove (ii). From \Cref{lem:samp-add}, $Z_C =\size{C \cap T_{\alpha+i}} = \size{{C \cap H_i}}$ if $\size{{C \cap H_i}}>0$. Observe that $\pr({\size{C \cap H_i}} =0)=\frac{1}{2^{\size{C \cap T_{\alpha + i -1}}}}$. The expected number of points in ${C \cap H_i}$ is $\frac{\size{C \cap T_{\alpha + i -1}}}{2}$. Using Chernoff bound (\Cref{lem:cher}), $\pr\left(\size{{C \cap H_i}} > \frac{3}{4}\size{C \cap T_{\alpha + i -1}}\right) \leq e^{-\Omega\left(\size{C \cap T_{\alpha + i -1}}\right)}.$ Hence, putting things together,

\begin{align*}
    \pr\left(Z_C \leq \frac{3}{4}\size{C \cap T_{\alpha + i -1}}\right)
        &\geq \pr\left(Z_C \leq \frac{3}{4}\size{C \cap T_{\alpha + i -1}} \;\land\; \size{C \cap H_i} > 0\right) \\
        &\geq \pr\left(\size{C \cap H_i} \leq \frac{3}{4}\size{C \cap T_{\alpha + i -1}} \;\land\; \size{C \cap H_i} > 0\right) \\
        &\geq 1 - \pr\left(\size{C \cap H_i} > \frac{3}{4}\size{C \cap T_{\alpha + i -1}}\right) 
                - \pr\left(\size{C \cap H_i} = 0\right)
                && \text{(by the union bound)} \\
        &\geq 1 - \frac{1}{e^{\Omega\left(\size{C \cap T_{\alpha + i -1}}\right)}} 
                - \frac{1}{2^{\size{C \cap T_{\alpha + i -1}}}} \\
        &\geq 1 - \frac{1}{2^{\Omega\left(\size{C \cap T_{\alpha + i -1}}\right)}}
        \enspace.
\end{align*}

%
\end{proof}
\remove{This is because $t_\alpha=\Theta\left(\log ^{
(\alpha)} n  \cdot \left(\log ^{
(\alpha+1)}n \right)^{d+2}\right)$ and $\beta_d=\Theta\left(\log ^{(\alpha +1)} n\right)$.}


\section{Conclusion and Discussion}
\label{sec:conclude}

In this paper we show that even for large values of $k$, the classic $k$-center clustering problem in low-dimensional Euclidean space can be efficiently and very well approximated in the parallel setting of low-local-space \mpc. While some earlier works (see, e.g., \cite{coreset_kcenter1,coreset_kcenter2,coreset_kcenter3}) were able to obtain constant-round \mpc algorithms, they were relying on a large local space $\lspace \gg k$ allowing to successfully apply the core-set approach, which permits only limited communication. On the other hand, the low-local-space setting considered in this paper seems to require extensive communication between the machines to achieve any reasonable approximation guarantees. Therefore we believe (without any evidence) that the number of rounds of order $\Oh(\log\log n)$ may be almost as good as it gets. Also, we concede that our algorithm does not achieve a constant approximation guarantee, but we feel the approximation bound of $\Oh(\log^*n)$ is almost as good. Finally, our algorithm does not resolve the perfect setting of the $k$-center clustering in that it allows in the solution slightly more centers, $k + o(k)$ centers. Improving on these three parameters is the main open problem left by our work.  Additionally, due to our implementation of LSH in \mpc, we obtain a flexible  guarantee on global space. Reducing the value of $\rho$ decreases the amount of global space used by the algorithm (global space used is $\widetilde{O}(n^{1 + \rho})$) while increasing the constant inside approximation ratio $O(\log^\star n)$. 

We believe that solely using the technique in this paper, improving the approximation factor and/or number of rounds may not be possible (a detailed explanation is in tha paragraph below), but the approach may be useful for related problems in \mpc or other models. We remark that the extra space in global space complexity is mainly due to the use of LSH; note that, even in the RAM model setting, the use of LSH requires some extra space. However, we believe our implementation of LSH in \mpc could potentially see further applications, e.g., for other geometric problems.

Now, we discuss the limitation of our approach in designing algorithms with an improved approximation factor and/or number of rounds. Our algorithm is an iterative algorithm that refines the set of centers starting with the entire point set as the set of centers. As with many distributed algorithms, the iterative approach usually does not lead to a constant round algorithm. In particular, roughly speaking, our algorithm first samples points with inverse polynomial probability and then increases probability in a square root fashion. So, to go from $n$ centers to $k$ or even $\Oh(k)$ centers, one needs $\Omega(\log\log n))$ rounds. Increasing the probability along a faster schedule is unlikely to allow us to sufficiently bound the number of centers while increasing the probability along a slower schedule will increase the running time. Note also that the cost of the solution gets added over the rounds in our algorithm. The way our algorithm works, the cost of the solution in the first phase is $\Oh(r)$. After that, each optimal cluster has at most $\widetilde{\Oh}(\log^*n)$ centers. In phase 2, we apply \sas for $\Oh(\log^*n)$ rounds with $r$ as the radius that leads to approximation ratio $\Oh(\log^*n)$.  One may think to apply \sas in Phase 2 with a radius parameter less than $r$ in Phase 2. But in that case, guaranteeing the total number of centers to be $k+o(k)$ seems unlikely.



{Finally, we discuss the limitations of our technique in designing algorithms for $k$-center beyond constant-dimensional Euclidean spaces. The first bottleneck is the reliance on locally sensitive hashing. The second bottleneck arises in bounding \( z \) in \Cref{cl:X_C}. It is not difficult to observe that if a metric space has bounded doubling dimension, then \( z \) is also bounded in that space. A metric space has \emph{bounded doubling dimension} if there exists a constant \( \lambda \) such that any ball of radius \( r \) can be covered by at most \( \lambda \) balls of radius \( r/2 \). Consequently, our technique extends to any metric space that admits an efficient family of locally sensitive hash functions and has bounded doubling dimension.}


In this paper, we advance the understanding of the $k$-center problem in the \mpc model. Building on the work of Bateni et al.~\cite{bateni-kcenter}, we design a low-local-space \mpc algorithm that produces $k(1+o(1))$ centers with an $\mathcal{O}(\log^* n)$-approximation in $\mathcal{O}(\log\log n)$ rounds. We also acknowledge that a subsequent work~\cite{DBLP:conf/icalp/CzumajG0J25} has further improved these results. In the constant-dimensional Euclidean setting, fully scalable \mpc algorithms now achieve constant-round $(2+\varepsilon)$-approximation reporting exactly $k$ centers, and even $(1+\varepsilon)$-approximation using $(1+\varepsilon)k$ centers. In the high-dimensional regime, constant-round fully scalable \mpc algorithms achieve an $\Oh(\log n / \log \log n)$-approximation. These developments illustrate ongoing progress in both approximation quality and scalability for parallel $k$-center clustering.

An interesting open question is whether similar results can be achieved for the related $k$-means and $k$-medians problems in the low-local-space \mpc model when $k$ is large. In particular, it remains unclear if constant-round, fully scalable algorithms with strong approximation guarantees can be designed for these clustering objectives, analogous to the results for $k$-center.

\section*{Acknowledgments}
The authors gratefully acknowledge the anonymous referees for their valuable comments and insightful suggestions.

\bibliographystyle{alpha}
\bibliography{reference}

\appendix

\section*{Appendix}

\section{Concentration inequalities}
\label{sec:conc}

In our analysis we use some basic and standard concentration inequalities, which we present here for the sake of completeness.

\begin{lem}[{\bf  Chernoff bound}]
\label{lem:cher}
Let $X_1, \dots, X_n$ be independent random variables such that $X_i \in [0,1]$. For $X = \sum_{i=1}^n X_i$ and $ \E[X] \leq \mu_h$, the following hold for any $\eps \in (0,1)$.
\begin{align*}
    \pr \left( X \geq (1+\eps)\mu_h \right) &\leq e^{-\mu_h \eps^2/3}
    \enspace.
\end{align*}
\end{lem}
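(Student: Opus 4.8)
The plan is to use the standard exponential-moment (Bernstein) method. First I would fix an arbitrary $t>0$ and apply Markov's inequality to the nonnegative random variable $e^{tX}$: since $x\mapsto e^{tx}$ is increasing, the events $\{X \ge (1+\eps)\mu_h\}$ and $\{e^{tX} \ge e^{t(1+\eps)\mu_h}\}$ coincide, so
\begin{align*}
  \pr\!\left(X \ge (1+\eps)\mu_h\right) &\le e^{-t(1+\eps)\mu_h}\,\E\!\left[e^{tX}\right] \enspace,
\end{align*}
and by independence of the $X_i$ we have $\E[e^{tX}] = \prod_{i=1}^n \E[e^{tX_i}]$, so it suffices to bound each factor separately.

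Next I would bound $\E[e^{tX_i}]$ using convexity of $x\mapsto e^{tx}$ on $[0,1]$: for every $x\in[0,1]$ the function lies below the chord through $(0,1)$ and $(1,e^t)$, i.e.\ $e^{tx}\le 1-x+xe^{t}$, hence $\E[e^{tX_i}]\le 1+(e^t-1)\E[X_i]\le \exp\!\big((e^t-1)\E[X_i]\big)$, using $1+y\le e^y$. Multiplying over $i$ and using $\sum_i\E[X_i]=\E[X]\le\mu_h$ — which is legitimate precisely because the coefficient $e^t-1$ is positive for $t>0$ — gives $\E[e^{tX}]\le \exp\!\big((e^t-1)\mu_h\big)$, and therefore
\begin{align*}
  \pr\!\left(X \ge (1+\eps)\mu_h\right) &\le \exp\!\Big(\mu_h\big(e^t-1-t(1+\eps)\big)\Big) \enspace.
\end{align*}

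Finally I would optimize over the free parameter $t$ and apply an elementary scalar inequality. Choosing $t=\ln(1+\eps)$, the minimizer of $e^t-1-t(1+\eps)$, yields the bound $\exp\!\big(\mu_h(\eps-(1+\eps)\ln(1+\eps))\big)=\big(e^{\eps}/(1+\eps)^{1+\eps}\big)^{\mu_h}$. It then remains to verify $(1+\eps)\ln(1+\eps)-\eps\ge \eps^2/3$ for $0<\eps\le 1$ (the only regime used in the applications, e.g.\ $\eps=\eta=\Theta(1/\log\log t)$ in \Cref{lem:uniform3}); this follows from a one-line calculus check: the function $g(\eps):=(1+\eps)\ln(1+\eps)-\eps-\eps^2/3$ has $g(0)=0$ and $g'(\eps)=\ln(1+\eps)-\tfrac{2}{3}\eps\ge 0$ on $[0,1]$ (it vanishes at the endpoints and is concave in between), so $g\ge 0$ there. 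Substituting this into the previous display gives $\pr(X\ge(1+\eps)\mu_h)\le e^{-\mu_h\eps^2/3}$, as claimed. There is no real obstacle here: the only point requiring a moment of care is the direction of the monotonicity step that replaces $\E[X]$ by its upper bound $\mu_h$, which is the reason the statement is phrased in terms of an upper bound $\mu_h$ rather than the exact mean.
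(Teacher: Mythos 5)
The paper presents this lemma in the appendix as a standard result, without a proof of its own, so there is nothing internal to compare against; your proof is the standard moment-generating-function (Bernstein) argument and is essentially correct, including the important observation that the monotonicity step justifying the replacement $\E[X]\le\mu_h$ hinges on $e^t-1>0$. Two small points of care, however. First, your parenthetical justification ``it vanishes at the endpoints and is concave in between'' for $g'\ge 0$ on $[0,1]$ is slightly off: $g'(\eps)=\ln(1+\eps)-\tfrac{2}{3}\eps$ vanishes at $\eps=0$ but $g'(1)=\ln 2-\tfrac{2}{3}\approx 0.026\neq 0$. The conclusion is still fine, and a clean way to see it is that $g'$ is concave (since $g'''(\eps)=-1/(1+\eps)^2<0$) with $g'(0)=0$ and $g'(1)>0$, whence by concavity $g'(\eps)\ge(1-\eps)g'(0)+\eps\,g'(1)\ge 0$ for all $\eps\in[0,1]$, giving $g\ge g(0)=0$ there. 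Second, you are right to flag the restriction $0<\eps\le 1$: the scalar inequality $(1+\eps)\ln(1+\eps)-\eps\ge\eps^2/3$ genuinely fails for large $\eps$ (for instance $g(2)=3\ln 3-2-\tfrac{4}{3}<0$), so the lemma as stated in the paper, with ``for any $\eps>0$'', is a little too strong for the $\eps^2/3$ exponent; the universally valid version replaces the exponent by $\eps^2/(2+\eps)$. This does not affect anything downstream --- every invocation in the paper uses $\eps\le 1$ (e.g.\ $\eps=\eta=\Theta(1/\log\log t)$ in \Cref{lem:uniform3}, $\eps=1/2$ in the proof of \Cref{cl:const-P2}), and in the one place where a large constant multiple of the mean is exceeded the weaker linear-in-$\eps$ Chernoff bound already suffices --- but a careful statement of \Cref{lem:cher} would restrict to $\eps\in(0,1]$ or use the $\eps^2/(2+\eps)$ exponent.
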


\begin{lem}[{\bf Hoeffding bound}]
\label{lem:hoeff}
Let $X_1, \dots, X_n$ be independent random variables such that $a_i \leq X_i \leq b_i$ and $X = \sum_{i=1}^n X_i$. Then, for all $t >0$,
\begin{align*}
    \pr\left({X \geq \E[X]} + t \right) &\leq
        \exp\left({-2t^2}/ {\sum_{i=1}^{n}(b_i-a_i)^2}\right)
    \enspace.
\end{align*}

\end{lem}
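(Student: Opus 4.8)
The plan is to prove the bound by the standard exponential-moment (Chernoff--Hoeffding) argument. First I would reduce to a one-sided estimate: it suffices to show $\pr(X - \E[X] \geq \eps) \leq \exp\!\left(-2\eps^2 / \sum_{i=1}^n (b_i - a_i)^2\right)$, since applying this same inequality to the variables $-X_1, \dots, -X_n$ (whose ranges $[-b_i, -a_i]$ have the same lengths $b_i - a_i$) bounds $\pr(X - \E[X] \leq -\eps)$ identically, and a union bound over the two tails supplies the factor $2$.

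Second, for any $\lambda > 0$, Markov's inequality applied to the nonnegative random variable $e^{\lambda(X - \E[X])}$ together with independence of the $X_i$ gives
\begin{align*}
\pr(X - \E[X] \geq \eps) \leq e^{-\lambda \eps}\, \E\!\left[e^{\lambda(X - \E[X])}\right] = e^{-\lambda \eps} \prod_{i=1}^n \E\!\left[e^{\lambda(X_i - \E[X_i])}\right] \enspace.
\end{align*}

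Third --- the technical core --- I would prove \emph{Hoeffding's lemma}: if $\E[Y] = 0$ and $a \leq Y \leq b$ almost surely, then $\E[e^{\lambda Y}] \leq e^{\lambda^2(b-a)^2 / 8}$ for all $\lambda \in \R$. The idea is to bound $e^{\lambda y}$ on $[a,b]$ above by the secant line through its endpoints (convexity of the exponential), take expectations using $\E[Y] = 0$, write the resulting upper bound as $e^{\varphi(\lambda)}$, and verify that $\varphi(0) = \varphi'(0) = 0$ while $\varphi''(u) \leq (b-a)^2 / 4$ uniformly in $u$ (because $\varphi''(u)$ is the variance of a random variable supported on $\{a,b\}$, hence at most a quarter of the squared range); Taylor's theorem then yields $\varphi(\lambda) \leq \lambda^2 (b-a)^2 / 8$.

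Fourth, applying Hoeffding's lemma to each centered summand $X_i - \E[X_i] \in [a_i - \E[X_i],\, b_i - \E[X_i]]$, an interval of length $b_i - a_i$, and substituting into the product above gives $\pr(X - \E[X] \geq \eps) \leq \exp\!\left(-\lambda\eps + \tfrac{\lambda^2}{8}\sum_{i=1}^n (b_i - a_i)^2\right)$; optimizing by taking $\lambda = 4\eps / \sum_{i=1}^n (b_i - a_i)^2$ produces exactly the claimed one-sided bound, and combining with the first step finishes the proof. The only genuinely delicate step is Hoeffding's lemma in step three; everything else is routine bookkeeping.
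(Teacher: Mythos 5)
Your derivation is correct and is the standard proof of Hoeffding's inequality via the exponential-moment method and Hoeffding's lemma. Note, however, that the paper does not actually prove this statement: it is listed in the appendix among ``basic and standard concentration inequalities, which we present here for the sake of completeness,'' with no proof supplied. So there is nothing to compare against; your argument is a complete, correct proof of the cited fact, and its only subtle ingredient (Hoeffding's lemma, established via the secant-line bound and the second-derivative estimate $\varphi''(u)\leq (b-a)^2/4$) is handled properly.
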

\end{document}